\theoremstyle{thmstyleone}%
\newtheorem{theorem}{Theorem}%
\newtheorem{proposition}[theorem]{Proposition}
\newtheorem{lemma}[theorem]{Lemma}
\newtheorem{corollary}[theorem]{Corollary}
\newtheorem*{lemma*}{Lemma}
\newtheorem*{proposition*}{Proposition}
\theoremstyle{thmstyletwo}%
\newtheorem{remark}{Remark}%
\newtheorem{problem}{Problem}
\newtheorem{observation}{Observation}
\newtheorem{question}{Question}
\theoremstyle{thmstylethree}%
\newtheorem{definition}{Definition}%
\begin{document}
\title[Colorings of unrooted tree-based networks and related graphs]{Colorings of unrooted tree-based networks and related graphs}

\author[1]{\fnm{Mirko} \sur{Wilde}}\email{mirko.wilde@uni-greifswald.de}
\equalcont{These authors contributed equally to this work.}

\author*[1]{\fnm{Mareike} \sur{Fischer}}\email{mareike.fischer@uni-greifwald.de,email@mareikefischer.de}
\equalcont{These authors contributed equally to this work.}

\affil*[1]{\orgdiv{Institute of Mathematics and Computer Science}, \orgname{University of Greifswald}, \orgaddress{\street{Walther-Rathenau-Str. 47}, \postcode{17487} \city{Greifswald}, \state{Mecklenburg-Vorpommerania}, \country{Germany}}}

\abstract{In mathematical phylogenetics, evolutionary relationships are often represented by trees and networks. The latter are typically used whenever the relationships cannot be adequately described by a tree, which happens when reticulate evolutionary events happen, such as horizontal gene transfer or hybridization. But as such events are known to be relatively rare for most species, evolution is sometimes thought of as a process that can be represented by a tree with some additional edges, i.e., with a network that is still \enquote{somewhat tree-like}. In this context, different versions of tree-based networks have played a major role in recent phylogenetic literature. Yet, surprisingly little is known about their combinatorial and graph-theoretic properties. 
In our manuscript, we answer a recently published question concerning the colorability of a specific class of tree-based networks. In particular, we will investigate an even more general class of graphs and show their 3-colorability. This nicely links recent phylogenetic concepts with classical graph theory. Moreover, the ideas we use to answer the colorability question are new and might potentially be generalizable to other coloring problems in graph theory.}

\keywords{tree-based networks, mathematical phylogenetics, graph coloring}

\maketitle

\section{Introduction}

Understanding evolutionary relationships and their structures plays an important role in many research areas, e.g., concerning disease outbreaks \cite{plazzotta}, tumors \cite{Scott2020}, and even languages \cite{gray2013}. Mathematical phylogenetics is concerned with reconstructing these relationships from data (like DNA, RNA, or proteins), representing them with the help of graphs and analyzing properties of the latter. Traditionally, trees were used to depict evolutionary relationships \cite{SempleSteel2003}, but during the past two decades, biologists have become more and more interested in phylogenetic networks \cite{Huson2010}. This is due to the fact that trees are unsuitable to represent reticulate evolutionary events like horizontal gene transfer or hybridization, which cause cycles in the underlying graph. 

On the other hand, reticulate events are known to be relatively rare for most species \cite{linzPhD}, which is why networks with \enquote{too many cycles} are usually not of high interest. In fact, evolution is often assumed to be mainly tree-like with a few additional edges \cite{Francis2015b}. This point of view led to the introduction of tree-based networks, which have since played a big role in recent phylogenetic literature, and different types of tree-based networks have been distinguished. Basically, a tree-based phylogenetic network is simply a leaf-labelled connected graph which has a spanning tree with certain properties -- and depending on these properties, different types of such networks can be distinguished. We will elaborate this further in the course of this manuscript.

Due to the increasing interest in tree-based networks, unsurprisingly, various attempts have been made to understand their mathematical properties. For instance, it is known that tree-basedness can be decided in polynomial time for rooted networks \cite{Francis2015b}, while it is an NP-complete problem for unrooted networks \cite{Francis2018}. Moreover, some first attempts have been made to use classical graph theory to characterize certain sub-classes of tree-based networks \cite{Fischer2020}, as, for instance, edge-based networks, which are closely linked to series-parallel graphs \cite{Fischer2020,Fischer2023}. 

Yet only little is known about the colorability of tree-based networks, i.e., the question how many colors are at least needed to color the vertices of the network such that no two adjacent vertices have the same color -- which is the chromatic number of a graph. This is surprising on two levels: first, graph coloring is a very well established research topic in graph theory with plenty of results concerning the chromatic number of certain classes of graphs \cite{Diestel2017}. For instance, it is well-known that every tree is bipartite and can therefore be colored with at most two colors \cite{Diestel2017}. Second, certain types of coloring also play an important role in phylogenetics, e.g., leaf colorings and their extensions to all vertices (see, e.g., \cite{mikePhD}).\footnote{However, it has to be noted that colorings in phylogenetics typically have a different flavor than in graph theory as usually neighboring vertices are not forbidden to be in the same color -- in fact, often the leaf colors are given and the inner vertices of a given tree shall be colored such that the number of change edges, i.e., edges incident with two different colors, is minimized.} Therefore, it is surprising that not much is known about the colorability of certain types of tree-based networks. An exception is the manuscript by Hendriksen \cite{Hendriksen2018}, which gives interesting insight into the colorability of certain types of tree-based networks. However, this manuscript left the important question open as to whether a specific kind of such networks is always 3-colorable:

\begin{question}[adapted from Question 5.4, \cite{Hendriksen2018}]\label{qu:coloring}
Let $N$ be a strongly tree-based network. Is it true that $\chi(N) \leq 3$?
\end{question}

Note that the formal definition of the concepts required to understand Question \ref{qu:coloring} will be introduced subsequently. However, answering this question affirmatively is the main aim of our manuscript. In order to do so, we will use standard techniques from graph theory, such as Zykov operations, which we formally define and explain in the course of our manuscript.

Note that a positive answer to Question \ref{qu:coloring}  strengthens the close relationship between trees, which are known to be 2-colorable, and tree-based networks even more. However, as certain tree-based networks are known not to be 3-colorable \cite{Hendriksen2018}, the answer to Question \ref{qu:coloring} is not obvious. Moreover, as there are strongly tree-based networks which are not 2-colorable, the bound suggested by Question \ref{qu:coloring} is tight.  In our manuscript,  we will answer Question \ref{qu:coloring} on the 3-colorability of a certain type of these networks affirmatively, thus linking these networks more closely to trees. However, our main result does not only cover the aspired tree-based networks, but an even broader class of graphs which can be regarded as a generalization thereof. In particular, our approach uses similar ideas to the Zykov algorithms (cf.  \cite{McDiarmid1979}, \cite{Zykov1949}), which are well-known in graph theory but, to the best of our knowledge, have not yet been applied in mathematical phylogenetics. Perhaps more importantly, we are not aware of any instance in graph theory where Zykov operations have been used explicitly bound the chromatic number of a class of graphs thereby solving a theoretical question -- most research has been concerned with speeding up practical implementations of heuristic algorithms to compute this number for a given graph \cite{Mehrotra1996,Held2012,Malaguti2011,Brand2026}. We therefore believe that our ideas are not limited to phylogenetics but might also lead to new approaches in graph theory.

Along the way, we derive some results based on greedy procedures, such as a generalization of a result by Hendriksen as well as a more intuitive proof of another one of his results. However, while the greedy approach only leads to $\chi(N)\leq 4$, it does not appear to be sufficient to lower this bound to $\chi(N)\leq 3$. Instead, answering Question \ref{qu:coloring} requires the described machinery from graph theory.

Our manuscript is structured as follows: In Section \ref{sec:prelim}, we state all required definitions and relevant preliminary results known from the literature. In Section \ref{sec:results}, we present our own results. Finally, Section \ref{sec:disc} discusses and summarizes our findings and gives an outlook on future research.

\section{Preliminaries}\label{sec:prelim}

In this section, we present all relevant concepts and known results needed to derive our own results later on.

\subsection{Definitions and notions}

In this subsection, we specify the terminology and definitions used throughout the manuscript. In most of the cases, we follow the standard terminology of  Diestel \cite{Diestel2017}. 

\subsubsection*{Basic concepts from graph theory}

We define a graph $G=(V,E)$ to be a pair with a finite set $V$ and with $E\subseteq \binom{V}{2}$. In particular that means we will investigate undirected graphs. The elements of $V$ are called \emph{vertices} and the elements of $E$ are called \emph{edges}. In order to avoid ambiguity when several graphs are considered, we often write $V(G)$ instead of $V$ and $E(G)$ instead of $E$ in order to highlight the reference graph. If $V'\subseteq V,$ and $E' \subseteq \binom{V'}{2} \cap E$, then $H = (V',E')$ is called a \emph{subgraph} of $G$, and we write $H \subseteq G$. We define a \emph{subgraph induced by $W$} by the following construction: For $W\subseteq V$ we define $G[W] = (W, E_W)$ as the graph with vertex set $W$ and edge set $E_W = \{e\in E: e \in \binom{W}{2}\}$.  For $v\in V$ we define $G - v$ as $G[V\setminus \{v\}]$. For $e\in E$ we define $G-e$ as the graph $(V, E\setminus \{e\})$. 

If $G$ is a graph, $k\in \mathbb{N}$ and $f: V\rightarrow \{1, \ldots, k\}$ is a function, we call $f$ a \emph{$k$-coloring} of $G$ if $f(x) \neq f(y)$ for each $\{x,y\}\in E$. The \emph{chromatic number} $\chi(G)$ of a given graph $G$ is defined as the minimum $k\in \mathbb{N}$ for which a $k$-coloring of $G$ exists.

If $v\in V(G)$, then let $\deg_G(v)= |\{e\in E(G): v\in e\}|$. Then, $\deg_G(v)$ is called the \emph{degree of $v$ in $G$}. If $X,Y\subseteq V(G)$, then $E_G(X,Y)= \{\{x,y\}\in E(G): x\in X, y\in Y\}$. If $X\subseteq V(G)$, then the \emph{neighborhood} of $X$ in $G$, denoted by $N_G(X)$, is defined as $N_G(X) = \{v\in V(G)\setminus X: \{v,x\}\in E(G)$ $\mbox{ for some } x \in X\}$.
If $X =\{v\}$ for some $v \in V(G)$, for brevity, we often write $N_G(v)$ instead of $N_G(\{v\})$ to denote this set. By $K_n$ we denote the \emph{complete graph} on $n$ vertices defined as the graph with vertex set $V = \{1, \ldots, n\}$ and edge set $E = \binom{V}{2}$. In the case that $G$ is a graph and $K_n$ is isomorphic to a subgraph $H$ of $G$, we call $H$ a \emph{clique of $G$ of  size $n$} or shortly \emph{clique}. By $K_{n, m}$ we denote the \emph{complete bipartite graph} on $n, m$ defined as the graph with vertex set $V = A\cup B$ such that $A \cap B = \emptyset$, $|A|=n$,  $|B|=m$ and such that it has edge set $E = \{\{a,b\}: a\in A, b\in B\}$. 

Let $P = v_1, \ldots, v_k$ be a sequence of vertices in $V$ for some $k\in \mathbb{N}_{\geq 1}$. Assume that $\{v_i, v_{i+1}\}\in E$ for $i\in \{1, \ldots, k-1\}$. If each vertex is contained at most once in this sequence, then $P$ is called a \emph{path}. If $k\geq 3$, each vertex with the exception of $v_1$ is contained at most once in $P$ and $v_1=v_k$ is contained twice, then $P$ is called a \emph{cycle}. As usual we call a graph $G$ \emph{connected} if for any two vertices $u,v \in V(G)$ there exists a path $u,\ldots,v$ in $G$. A maximal connected subgraph of $G$ is called a \emph{connected component}. If $H$ is a maximal connected subgraph of $G$ such that $H-v$ is connected for each $v\in V(H)$, then $H$ is called a \emph{block}. A block $H$ with $|V(H)| = 2$ is called a \emph{bridge}. If $|V(H)| > 2$, then $H$ is called a \emph{non-trivial block}. Note that an edge $e\in E(G)$ is a bridge if and only if $G-e$ is disconnected and that each cycle is contained in some non-trivial block. Moreover, in each non-trivial block $H$ for each two edges $e_1, e_2\in E(H)$ there is some cycle in $H$ which contains both of them (cf. Lemma 3.1.3 in \cite{Diestel2017}). This fact implies that in each non-trivial block $H$ for each two vertices $v_1, v_2 \in E(H)$,  there is some cycle in $H$ which contains both of them (as $H$ is non-trivial, there are $e_1, e_2\in E(H)$ with $e_1\neq e_2$ such that $v_i\in e_i\ (i=1,2)$).

Finally, we want to introduce some concepts known as \emph{Zykov operations} in the literature \cite{Zykov1949,McDiarmid1979}.

\begin{definition}[Zykov operations] \label{def:zykov} Suppose that $x$ and $y$ are non-adjacent vertices in a graph $G$. Then we define the graphs $G^{\prime}_{xy}$ and $G^{\prime \prime}_{xy}$ as follows: \begin{itemize} \item $G^{\prime}_{xy}$ is obtained from $G$ by an \emph{addition}, i.e., by adding an edge joining $x$ and $y$.
\item $G^{\prime \prime}_{xy}$ is obtained from $G$ by an \emph{identification}, i.e., by replacing the vertices $x$ and $y$ by a single new vertex $v_{xy}$ and each edge $\{x,z\}$ or $\{y,z\}$ by an edge $\{v_{xy},z\}$ (cf. Figure \ref{fig:identify}). In order to avoid parallel edges, if both edges $\{x,z\}$,  $\{y,z\}$ are in $E(G)$, then only one edge $\{v_{xy},z\}$ is added. 
\end{itemize}
Both operations are called \emph{Zykov operations}.
\end{definition}

As we will elaborate later on, Zykov operations are well-known tools in graph theory to heuristically approximate the chromatic number of graphs, which is why they will also prove to be useful in order to tackle Question \ref{qu:coloring}. However, we first need to turn our attention to some more definitions.

\subsubsection*{Concepts concerning tree-based phylogenetic networks}

Next, we introduce some formal definitions from mathematical phylogenetics needed throughout this manuscript. Basically these are the same definitions and terms as used in \cite{Fischer2021}.\footnote{Note that Hendriksen's definitions \cite{Hendriksen2018} vary slightly -- e.g., the definition of tree-based networks used in \cite{Fischer2021} coincides with his definition of \emph{loosely} tree-based networks. This existing conflict in the literature stems from different approaches to generalize the binary setting, in which all inner vertices have degree 2, to the non-binary setting. We decided to stick with the notions used in \cite{Fischer2021} and adapted Hendriksen's question and definitions accordingly.} We start with phylogenetic trees and networks. 

\begin{definition}[Phylogenetic trees and networks] \label{def:net}
    Let $N$ be a simple connected graph and let $V^1$ be the set of all vertices in $N$ with degree $1$. We refer to $V^1$ as the leaves of $N$. We assume that no vertex in $N$ has degree $2$. Moreover, let the elements of $V^1$ be bijectively labelled by some label set $X$. Then,  $N$ is called an \emph{unrooted phylogenetic network} (or shortly \emph{network}) on $X$. Furthermore, we call the elements of $X$ the \emph{leaves} or \emph{taxa} of $N$. Also, if $N$ contains no cycles, we call it a \emph{phylogenetic tree}. 
\end{definition}

Next, we define tree-basedness. 

\begin{definition}[Tree-based networks]\label{def:treebased}
    Let $N$ be a phylogenetic network on $X$. We say that $N$ is \emph{tree-based} if it has a spanning tree $T$ with leaf set $X$.
\end{definition}

An example for a tree-based network is given by Figure \ref{fig:example} on the left. However, it is important to note that the existence of a spanning tree is guaranteed for all connected graphs \cite{Diestel2017}, so in particular for all phylogenetic networks. It is also guaranteed that all leaves of $N$ are leaves in every spanning tree $T$ of $N$. The crucial part of Definition \ref{def:treebased} is thus that $N$ is tree-based if it has a spanning tree $T$ that \emph{only} has $X$ as its leaf set. 

Now, the following concept is the most important concept of our manuscript: strongly tree-based networks. However, note that these networks are simply called tree-based networks in Hendriksen's manuscript \cite{Hendriksen2018}. 

\begin{definition}[Strongly tree-based networks] \label{def:treebased_hendriksen}
 Let $N$ be a network on $X$. We call $N$ \emph{strongly tree-based}, if there exists a spanning tree $T$ in $N$ with leaf set $X$  such that every edge in $N$ between vertices of degree $4$ or more is an edge in $T$ and every vertex of degree $2$ in $T$ is a vertex of degree $3$ in $N$.
 \end{definition}

Note that Definition \ref{def:treebased_hendriksen}  could be re-formulated along the lines of the following observation \cite[Theorem 2.5]{Hendriksen2018}: Each strongly tree-based network can be obtained by taking a tree $T$, adding degree-$2$ vertices to the edges (which are called points of attachments) and adding edges such that each new edge either joins two points of attachments or one point of attachment and one original vertex. This construction generalizes earlier constructions of tree-based networks \cite{Francis2015b, Jetten2018}. Moreover, it provides some intuition for Definition \ref{def:basis+representation} which specifies how to add certain edges to a given graph.

An example for a strongly tree-based network is given by Figure \ref{fig:example} on the right.

\begin{figure}
    \centering
\includegraphics{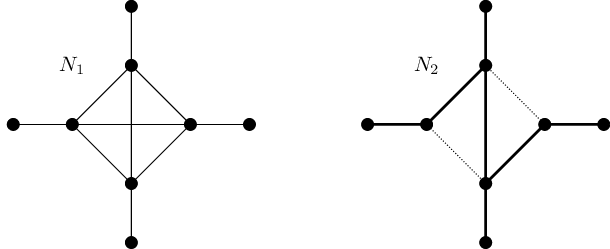}

\caption{Here $N_1$ is an example for a tree-based network which is not strongly tree-based. This can be easily seen by realizing that each spanning tree $T$ of $N_1$ cannot cover \emph{all} the edges of the $K_4$-subgraph of $N_1$ (as this would cause cycles), but this would be necessary for strong tree-basedness, because all of the inner vertices have degree 4 in $N_1$. On the other hand, considering the  spanning tree with edges highlighted in bold in the depiction of $N_2$, it can be verified that $N_2$ is strongly tree-based. Note that the same spanning tree can be used to verify that $N_1$ is tree-based, even if not strongly. }
\label{fig:example}
\end{figure}

As we formally state later on, our main aim is to prove the 3-colorability of strongly tree-based networks. However, before we can do that, we need some new definitions to prove our results.

\subsubsection*{New definitions required to prove our main result}

Our goal is to answer Question \ref{qu:coloring} affirmatively. However, instead of considering only strongly tree-based networks, our proof considers the following more general concept:

\begin{definition}[$k$-basis and $k$-representation]\label{def:basis+representation}
Let $k\in \mathbb{N}$ and let $G=(V,E)$ be a graph such that $V=A\cup B$ with $A \cap B=\emptyset$.

\begin{enumerate}
\item \label{itm:represents} 
Let $\mathfrak{b}: A \rightarrow \mathcal{P}(B)$ be some arbitrary function, where $\mathcal{P}(B)$ is the power set of $B$.) Then, we define a new graph $H$ by setting $V(H) = V$ and $E(H) = E \cup \{\{a,z\}: a\in A, z\in \mathfrak{b}(a)\}$. We call $H$ the graph \emph{defined} by the quadruple $(G, A, B, \mathfrak{b})$
and say that $(G, A, B, \mathfrak{b})$ \emph{represents} $H$.

\item \label{itm:basis_first} Let $H$ be a graph and let $G$, $A$, $B$ and $\mathfrak{b}$ be such that $(G, A, B, \mathfrak{b})$ represents $H$. If additionally the following three conditions hold, we say that $G$ is a \emph{$k$-basis} and $\mathfrak{b}$ is a \emph{$k$-representation} (of $H$): 
\begin{enumerate}
    \item \label{itm:condition1} Let $Z$ be a non-trivial block in $G$. Then $A\cap V(Z) = \emptyset$.
    \item \label{itm:condition2} $\deg_G(a) \geq 2$ for each $a\in A$.
    \item \label{itm:condition3} $\deg_G(a) + |\mathfrak{b}(a)| \leq k$ for each $a\in A$.
\end{enumerate}

\item \label{itm:basis_second} Let $H$ be a graph such that there exists some $(G, A, B, \mathfrak{b})$ representing $H$ for which $G$ is a $k$-basis. Then we say that $H$ \emph{has} a $k$-basis.
\end{enumerate}
\end{definition}

Informally speaking, if $(G, A, B, \mathfrak{b})$ represents $H$, then the vertices of $H$ can be labelled $A$ or $B$ in such a way that $G$ is a subgraph of $H$, and all edges of $H$ that are not in $G$ connect an $A$-vertex with a $B$-vertex. Note that this already has the flavor of considering a subgraph and adding some specific edges to derive the full graph -- just as we want to derive a phylogenetic network $N$ by taking a spanning tree and adding specific edges. However, in order to illustrate the somewhat technical Definition \ref{def:basis+representation}, in Figure \ref{fig:basis+representation} we give an example of a graph $H$ represented by four quadruples $(G_i,A_i, B_i, \mathfrak{b}_i)$ for $i=1, \ldots, 4$. Each quadruple $(G_i, A_i, B_i, \mathfrak{b}_i)$ for $i\in \{1, 2, 3\}$ is chosen such that it violates exactly one of the conditions of Part \ref{itm:basis_first} of Definition \ref{def:basis+representation} for $k=3$. This shows that the conditions are independent of each other. Only $(G_4, A_4, B_4, \mathfrak{b}_4)$ fulfills all conditions for $k=3$.

\begin{figure}
    \centering
 \includegraphics[scale=.83]{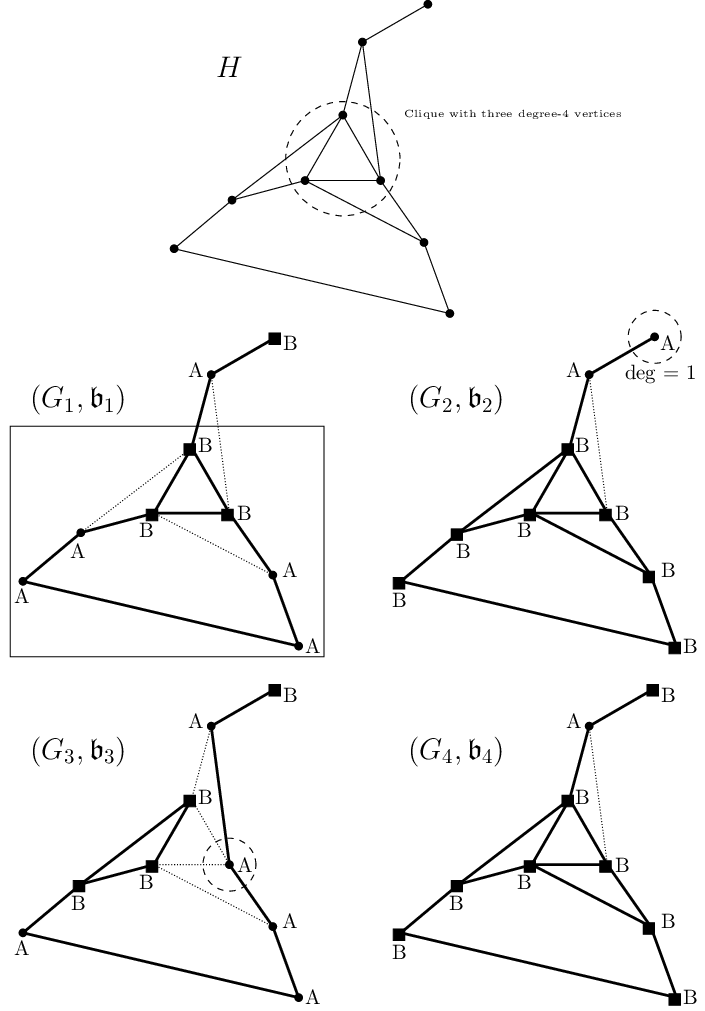}

\caption{Given is a graph $H$ (top) such that each quadruple $(G_i,A_i, B_i, \mathfrak{b}_i)$ represents $H$ for $i=1, \ldots, 4$ in the sense of Part \ref{itm:represents} of Definition \ref{def:basis+representation}. In the corresponding figures, $G_i$ is indicated by thick lines, the edges represented by $\mathfrak{b}_i$ are indicated by thin lines, and all $B$-vertices are represented by squares.
Only $(G_4, A_4, B_4, \mathfrak{b}_4)$ is a $3$-basis with $3$-representation, because it fulfills all conditions of Part \ref{itm:basis_first} of Definition \ref{def:basis+representation} for $k=3$. $(G_1, A_1, B_1, \mathfrak{b}_1)$ violates Condition \ref{itm:condition1} as there is a non-trivial block of $G$ containing vertices of $A$ (highlighted by the box). $(G_2, A_2, B_2, \mathfrak{b}_2)$ violates Condition \ref{itm:condition2}, because $A$ contains a vertex of degree 1 (highlighted by the circle). Finally, $(G_3, A_3, B_3, \mathfrak{b}_3)$ violates Condition \ref{itm:condition3} for $k=3$ as for the circled vertex $a\in A$ we have  $\deg_G(a)+|\mathfrak{b}(a)|=2+2=4>3$. However,  $(G_3, A_3, B_3, \mathfrak{b}_3)$ implies that $G_3$ is a $4$-basis of $H$ with $4$-representation $\mathfrak{b}_3$. Moreover, it should be noted that $H$ shows that not each graph which has a $3$-basis is a strongly-treebased network: $H$ contains a clique consisting of three vertices of degree $4$. By Definition \ref{def:treebased_hendriksen}, it follows that each spanning tree of $H$ which would verify that $H$ is a strongly tree-based network would contain all edges between vertices of degree $4$. As these form a cycle of length $3$, this would be a contradiction.
}
\label{fig:basis+representation}
\end{figure}

As stated above, Definition \ref{def:basis+representation} is motivated by the same intuition as Definition \ref{def:treebased_hendriksen}.  In order to emphasize this relationship, we now already state the following proposition,  which will be proven subsequently.

\begin{proposition}\label{prop:treebased_network}
Every strongly tree-based network has a $3$-basis $G$ with $\chi(G) \leq 2$.
\end{proposition}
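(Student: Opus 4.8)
The plan is to construct, from a strongly tree-based network $N$ with leaf set $X$ and a witnessing spanning tree $T$ (as in Definition \ref{def:treebased_hendriksen}), an explicit quadruple $(G,A,B,\mathfrak{b})$ that represents $N$ and satisfies all three conditions of Definition \ref{def:basis+representation} for $k=3$, while additionally having $\chi(G)\le 2$. The natural first guess is to take $G=T$, $B=$ the degree-$\ge 3$ vertices of $N$ together with the leaves, and $A=$ the degree-2 vertices of $T$, essentially reusing the bookkeeping from Remark \ref{rem:N_with_basis}. Since $T$ is a tree it is bipartite, so $\chi(T)\le 2$; Condition \ref{itm:condition2} holds because every vertex of $T$ that is a leaf of $N$ has degree 1, but the vertices we place in $A$ have degree $2$ in $T$; and Condition \ref{itm:condition3} holds because each $a\in A$ has $\deg_T(a)=2$ and $|\mathfrak{b}(a)|=1$, giving $2+1=3=k$. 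The subtle point, and the reason Remark \ref{rem:N_with_basis} already needed a careful assignment procedure, is Condition \ref{itm:condition1}: we must ensure that no vertex of $A$ lies in a non-trivial block of $G=T$ — but $T$ is a tree, so it has \emph{no} non-trivial blocks at all, and Condition \ref{itm:condition1} is vacuous. So the only real work is arranging the $A$/$B$ split and defining $\mathfrak{b}$ so that $(T,A,B,\mathfrak{b})$ genuinely represents $N$, i.e. every edge of $E(N)\setminus E(T)$ is recorded exactly once as $\{a,z\}$ with $a\in A$, $z\in \mathfrak{b}(a)=B$.

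The key steps I would carry out, in order: (1) Fix $N$ strongly tree-based and let $T$ be the spanning tree guaranteed by Definition \ref{def:treebased_hendriksen}; recall that $T$ has leaf set exactly $X$, every edge of $N$ between two vertices of degree $\ge 4$ is in $T$, and every degree-2 vertex of $T$ has degree exactly 3 in $N$. (2) Observe that every edge $e\in E(N)\setminus E(T)$ has at least one endpoint that is a degree-2 vertex of $T$: indeed, an endpoint that is a leaf of $N$ is impossible (the edge to a leaf is the unique edge at that leaf and lies in $T$), and if both endpoints had degree $\ge 3$ in $T$ hence $\ge 3$ in $N$... here one must be slightly more careful, because "degree $\ge 3$ in $T$" is not quite what Definition \ref{def:treebased_hendriksen} forbids — it forbids edges of $N$ between vertices of degree $\ge 4$ in $N$ from lying outside $T$. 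The correct argument: if $e=\{u,w\}\in E(N)\setminus E(T)$ and neither $u$ nor $w$ has degree 2 in $T$, then since $T$ is a spanning tree each of $u,w$ has degree $\ge 1$ in $T$ and is not a leaf of $N$, so actually $\deg_T(u),\deg_T(w)\ge 1$ and being non-leaves of $N$ with no degree-2 vertices in $N$... I would instead simply quote the construction of Remark \ref{rem:N_with_basis} verbatim: it assigns to set $A$ only vertices of degree 2 in $T$, to each such $a$ a singleton $\mathfrak{b}(a)=\{b_a\}$ with $\{a,b_a\}\in E(N)\setminus E(T)$, puts all remaining vertices in $B$, and establishes there that $(T,A,B,\mathfrak{b})$ represents $N$ and satisfies Condition \ref{itm:condition3} with $k=3$. (3) Then verify Conditions \ref{itm:condition1} and \ref{itm:condition2} for this $(T,A,B,\mathfrak{b})$: Condition \ref{itm:condition1} holds because $T$ is a tree and thus contains no non-trivial block (every block of a tree is a bridge), so $A\cap V(Z)=\emptyset$ trivially for every non-trivial block $Z$ (there are none); Condition \ref{itm:condition2} holds because $\deg_T(a)=2\ge 2$ for every $a\in A$ by construction. (4) Finally, note $\chi(T)\le 2$ since trees are bipartite (cf. the footnote in the excerpt). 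Hence $G=T$ is a 3-basis of $N$ with $\chi(G)\le 2$, which is exactly the claim.

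I expect the main obstacle to be almost entirely expository rather than mathematical: namely pinning down precisely why every non-tree edge of $N$ is incident with a degree-2 vertex of $T$, and why the endpoint-assignment in Remark \ref{rem:N_with_basis} is well-defined (an edge of $E(N)\setminus E(T)$ could a priori have \emph{both} endpoints of degree 2 in $T$, which is why the remark distinguishes cases and assigns one endpoint to $A$ and the other to $B$). One should double-check that after this case analysis each non-tree edge is charged to exactly one $A$-vertex, so that $\bigcup_{a\in A}\{\{a,z\}:z\in\mathfrak{b}(a)\}=E(N)\setminus E(T)$ and hence $(T,A,B,\mathfrak{b})$ really represents $N$ — this is the one place where a genuine (if short) argument is needed, and it is the same argument already given in Remark \ref{rem:N_with_basis}. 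Everything else (the three conditions, $\chi(T)\le 2$) is immediate once $G=T$ is chosen, so the proposition follows essentially by combining Remark \ref{rem:N_with_basis} with the triviality of block structure and bipartiteness for trees.
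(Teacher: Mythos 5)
Your proposal is correct and follows essentially the same route as the paper: take $G=T$, let $A$ consist of one degree-$2$-in-$T$ endpoint of each edge of $E(N)\setminus E(T)$ with $\mathfrak{b}$ recording the other endpoint, observe that Condition \ref{itm:condition1} is vacuous for a tree, that Conditions \ref{itm:condition2} and \ref{itm:condition3} give $2+1=3$, and that $\chi(T)\le 2$ by bipartiteness. The one substantive step you flag — that every non-tree edge meets a degree-$2$ vertex of $T$ and that the other endpoint can be placed in $B$ — is exactly the argument the paper spells out (via the two contradictions $\deg_T(x)=1\Rightarrow e\in E(T)$ and $\deg_T(x),\deg_T(y)\ge 3\Rightarrow\deg_N(x),\deg_N(y)\ge 4$), so nothing is missing.
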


In order to illuminate Proposition \ref{prop:treebased_network}, we now present a special case in which it is particularly easy to  construct a $3$-basis. Therefore, consider a pair $N$, $T$ which consists of a strongly tree-based network and some underlying spanning tree $T$ which not only satisfies the conditions specified by Definition \ref{def:treebased_hendriksen}, but additionally the following condition: each edge of $N$ which is not contained in $T$ connects a degree-$3$ vertex of $N$ and another vertex of degree $4$ or more in $N$. Note that this implies that each edge between two vertices of degree $3$ in $N$ is also contained in $T$.

In this case, a $3$-representation can be constructed by assigning each vertex of degree $3$ in $N$ to $A$ and each other vertex to $B$. For each vertex $v\in A$ there is at most one edge $e\in E(N)\setminus E(T)$ such that $v$ is incident with $e$, and if there is such an $e$ then this edge joins $v$ with some element of $B$. Hence, we can define $\mathfrak{b}(v) = \emptyset$ if there is no such edge and $\mathfrak{b}(v) = \{w\}$ if $e=\{v,w\}$ is such an edge. Then it can be verified  that $(T,A,B,\mathfrak{b})$ represents $N$, naturally $\chi(T)\leq 2$ holds (as $T$ is a tree) and additionally conditions \ref{itm:condition1}, \ref{itm:condition2} and \ref{itm:condition3} of the second part of Definition \ref{def:basis+representation} are satisfied.\footnote{We skip a formal proof here as we later on present a more general construction which applies to all instances of Definition \ref{def:treebased_hendriksen}.} Thus, this leads to a $3$-basis of $N$ which is as stated in Proposition \ref{prop:treebased_network}.

An example which illustrates Proposition \ref{prop:treebased_network} as well as the described  construction is given by Figure \ref{fig:observation}. In our proof of Proposition \ref{prop:treebased_network} we will show that each spanning tree of a network which is as specified by  Definition \ref{def:treebased_hendriksen} is at the same time a $3$-basis of the network if combined with a fitting $3$-representation. Hence, the thick lines on the right side of the Figure \ref{fig:observation} represent not only a spanning tree with certain properties verifying that this example is a strongly tree-based network, but also a $3$-basis. Moreover, we note that the graph $H$ given in Figure \ref{fig:basis+representation} is an example for a graph which has a $3$-basis but is not a strongly tree-based network. Therefore, the class of graphs with a $3$-basis is a proper generalization of the class of strongly tree-based networks.

\par\vspace{0.5cm}

The main idea of our manuscript is as follows: We will see that Proposition \ref{prop:treebased_network} is the decisive link between graph theory and phylogenetics which enables us to link statements about graphs with $k$-basis with statements about strongly tree-based networks. To demonstrate the power of this link, we will first re-prove the first part of Theorem \ref{thm:chromatic_number} which is known from phylogenetic literature by proving the even stronger Theorem \ref{thm:generalizeThm6}. Then, we will answer Question \ref{qu:coloring}, which makes a statement about strongly tree-based networks, affirmatively by showing the more general Theorem \ref{thm:main}, which holds for a larger class of graphs. In both cases Proposition \ref{prop:treebased_network} turns out to be crucial as it shows that statements about strongly tree-based networks can be considered as special cases of more general statements on graphs with a $k$-basis.

\par\vspace{0.5cm}

With regard to the inequality of Definition \ref{def:basis+representation}, Part \ref{itm:basis_first}, we note that this implies $\deg_H(a) \leq k$. However, there is no equivalence between Condition \ref{itm:condition3} and the inequality $\deg_H(a) \leq k$. As an example, consider for some $k\geq 2$ the graph $H = K_{1, k}$, which is the star graph with internal vertex $v$ and leaf set $\{1, \ldots, k\}$. Let $G = K_{1,k}$, $A = \{v\}$, $B = \{1, \ldots, k\}$ and $\mathfrak{b}: A\rightarrow \mathcal{P}(B)$, $\mathfrak{b}(v) = B$. Then $K_{1,k}$ is represented by $(G, A, B, \mathfrak{b})$ (in fact, both graphs equal $K_{1,k}$). Moreover, Conditions \ref{itm:condition1} and \ref{itm:condition2} are fulfilled and we have $\deg_H(v) = k$, but Condition \ref{itm:condition3} is violated, as $\deg_G(v)+|\mathfrak{b}(v)|=k+|B|\geq k+1$. Note that this example in particular shows that in Part \ref{itm:represents} of Definition \ref{def:basis+representation}, as $E(H)$ is a set rather than a multiset, it is possible that $\mathfrak{b}$ does not contribute additional edges that are not already contained in $G$.

Next, we turn our attention to known results relevant for our manuscript.

\subsection{Prior results}

In this section, we state results known from the literature that are important for the present manuscript. The most important such result is the following theorem, which shows that strongly tree-based networks can always be 4-colored.

\begin{theorem} [adapted from Theorem 4.1 in \cite{Hendriksen2018}] \label{thm:chromatic_number}
    Let $N$ be a \emph{strongly tree-based} network. Then $\chi(N) \leq 4$.
    
    On the other hand, for each $k\in \mathbb{N}$ there is a \emph{tree-based} network with $\chi(N) \geq k$.
\end{theorem}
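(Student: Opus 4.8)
The plan is to prove the two assertions of Theorem~\ref{thm:chromatic_number} separately, starting with the upper bound $\chi(N)\le 4$ for strongly tree-based networks. The natural approach is a greedy/inductive argument exploiting the structural constraints in Definition~\ref{def:treebased_hendriksen}. Fix a spanning tree $T$ with leaf set $X$ witnessing strong tree-basedness. The key observation should be that the edges of $N$ \emph{not} in $T$ can only be incident with vertices that have small degree in $N$: by the definition, every edge of $N$ between two vertices of degree $\ge 4$ lies in $T$, so each non-tree edge has at least one endpoint of degree $\le 3$ in $N$. I would try to set this up as a $k$-basis situation in the sense of Definition~\ref{def:basis+representation}: take $G=T$ (which is a tree, so all its non-trivial blocks are empty, giving Condition~\ref{itm:condition1} for free on the relevant vertex set), let $B$ be the set of vertices that must retain all their $T$-edges, let $A$ collect the low-degree vertices that are allowed to carry extra edges, and let $\mathfrak{b}$ record the non-tree edges of $N$. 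Checking that this yields a $4$-basis amounts to verifying Conditions~\ref{itm:condition2} and~\ref{itm:condition3}, i.e.\ that each $a\in A$ still has $T$-degree $\ge 2$ and that $\deg_T(a)+|\mathfrak{b}(a)|=\deg_N(a)\le 4$; the degree-2-in-$T$ vertices of $N$ having degree $3$ in $N$ is exactly the clause of Definition~\ref{def:treebased_hendriksen} that forces this to work out, and the interesting case is disentangling which vertices can sit in $A$. Since the paper announces a general theorem on $k$-bases, I expect Theorem~\ref{thm:chromatic_number} to be recovered from that machinery; for a standalone proof I would instead color $T$ with $2$ colors and then greedily recolor/extend, processing non-tree edges and using that each such edge touches a vertex whose total degree is $\le 4$, so at most $3$ colors are forbidden at it and a $4$th is always available.

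For the lower bound --- that for every $k$ there is a tree-based network $N$ with $\chi(N)\ge k$ --- the plan is an explicit construction. The standard trick is to realize $K_k$ (or any graph of chromatic number $\ge k$) as a tree-based network by \enquote{gadgetizing}: take the complete graph $K_k$, and for each vertex attach a pendant leaf (and if necessary subdivide or add further pendant structure) so that the resulting leaf-labelled graph $N$ satisfies Definition~\ref{def:net} (no degree-$2$ vertices, leaves bijectively labelled) while still containing $K_k$ as a subgraph, whence $\chi(N)\ge\chi(K_k)=k$. The remaining obligation is to check tree-basedness in the sense of Definition~\ref{def:treebased}: one must exhibit a spanning tree of $N$ whose leaf set is exactly $X$. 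For the $K_k$-plus-pendant-leaves gadget this is easy --- keep all $k$ pendant edges, keep a spanning path (or star) on the $k$ internal vertices, and discard the remaining $\binom{k}{2}-(k-1)$ edges of the $K_k$; the result is a tree whose only leaves are the $k$ pendant taxa. I would present one clean family (e.g.\ $N_k$ = $K_k$ with a leaf hung on each vertex) and verify Definition~\ref{def:net} and Definition~\ref{def:treebased} directly; this also shows, as the introduction remarks, that \emph{tree-based} (as opposed to \emph{strongly} tree-based) networks can have arbitrarily large chromatic number, so no bound analogous to Question~\ref{qu:coloring} holds without the \enquote{strong} hypothesis.

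The main obstacle I anticipate is entirely on the upper-bound side: it is the bookkeeping needed to guarantee that a greedy extension of a $2$-coloring of $T$ never gets stuck. Adding the non-tree edges of $N$ one at a time can create conflicts between previously-colored vertices, so a naive greedy order fails; the correct move is to argue about the final graph directly, ordering vertices so that each vertex, at the moment it is colored, has at most $3$ already-colored neighbors --- which is precisely where the degree bound $\deg_N(a)\le 4$ for the \enquote{free} vertices and the fact that high-degree vertices induce a forest (being spanned by $T$-edges only among themselves) must be combined. Equivalently, one shows $N$ is $3$-degenerate on the relevant part, or more cleanly invokes the announced $k$-basis theorem with $k=4$. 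The lower bound, by contrast, should be routine once the gadget is chosen correctly, the only subtlety being to respect the \enquote{no degree-$2$ vertex} clause of Definition~\ref{def:net}, which a single pendant leaf per vertex of $K_k$ (for $k\ge 3$) already handles.
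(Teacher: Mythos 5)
Your proposal is correct and follows essentially the same route as the paper: the upper bound is exactly the paper's derivation via the $k$-basis machinery (Remark~\ref{rem:N_with_basis} applies Theorem~\ref{thm:generalizeThm6} with $k=3$, recoloring each degree-2-in-$T$ vertex, which has $\deg_N=3$, so a fourth color is always free), and the lower bound is the standard $K_k$-with-pendant-leaves gadget attributed to Hendriksen. Only a small slip: in your greedy step the recolored endpoint has total degree $\le 3$ in $N$ (not $\le 4$), which is what makes at most three colors forbidden.
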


Note that the fact that strongly tree-based networks can always be 4-colored only leaves a very small gap concerning Question \ref{qu:coloring}, which asks if a 3-coloring is always possible. This is due to the fact that it is known from the literature that trees with at least two vertices require always 2 colors \cite{Diestel2017} as they are bipartite. This easily leads to the following observation.

\begin{observation}\label{obs} There are strongly tree-based phylogenetic networks which are not 2-colorable.
\end{observation}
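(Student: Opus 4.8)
The plan is to exhibit a single concrete strongly tree-based network that contains an odd cycle, since any graph containing an odd cycle fails to be bipartite and hence is not 2-colorable (cf. the characterization of bipartite graphs as exactly those graphs without odd cycles, \cite[Proposition 1.6.1]{Diestel2017}). So the whole task reduces to constructing such an example and verifying the definitional requirements.

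First I would build the underlying phylogenetic network $N$: take a triangle on three vertices $u_1, u_2, u_3$ (this is the odd cycle that kills bipartiteness), and then attach to each $u_i$ a single pendant leaf $\ell_i$ labelled by a set $X = \{x_1, x_2, x_3\}$. This makes each $u_i$ a vertex of degree $3$ and each $\ell_i$ a vertex of degree $1$, so $N$ is a simple connected graph with no degree-$2$ vertices, whose degree-$1$ vertices are bijectively labelled by $X$; hence $N$ is a phylogenetic network on $X$ in the sense of Definition~\ref{def:net}. Next I would check tree-basedness (Definition~\ref{def:treebased}): delete one edge of the triangle, say $\{u_1, u_3\}$, to obtain a spanning tree $T$ of $N$; its leaves are exactly $\ell_1, \ell_2, \ell_3$, i.e.\ the label set $X$, so $N$ is tree-based.

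Then I would verify the extra conditions for strong tree-basedness (Definition~\ref{def:treebased_hendriksen}) using the same $T$. Since every internal vertex $u_i$ has degree $3$ in $N$, there are no edges of $N$ between vertices of degree $4$ or more, so that condition is vacuously satisfied. Finally, the only vertices of $T$ that could have degree $2$ are among $u_1, u_2, u_3$: in $T$, the vertex $u_2$ retains all three of its edges and so has degree $3$, while $u_1$ and $u_3$ each lost one triangle edge and thus have degree $2$ in $T$ — but both have degree $3$ in $N$, exactly as required. Hence $N$ is strongly tree-based. Since $N$ contains the triangle $u_1 u_2 u_3 u_1$, which is an odd cycle, $N$ is not bipartite and therefore $\chi(N) \geq 3 > 2$, so $N$ is not $2$-colorable.

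There is essentially no hard part here; the only thing to be careful about is picking an example small enough that all the degree bookkeeping in Definitions~\ref{def:net}, \ref{def:treebased}, and \ref{def:treebased_hendriksen} can be checked by inspection, while still containing an odd cycle. One could alternatively argue that $\chi(N) \ge 3$ directly because $N$ contains $K_3$ as a clique and the chromatic number is bounded below by the clique number; this avoids invoking the bipartite characterization at all, at the cost of a slightly less elementary-sounding justification.
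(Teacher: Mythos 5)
Your proposal is correct and follows essentially the same route as the paper: exhibit a concrete strongly tree-based network containing a triangle (the paper uses the example in Figure \ref{fig:observation}, which likewise has a single non-tree edge closing a triangle) and conclude non-2-colorability from the odd cycle. Your write-up just makes the degree bookkeeping for Definitions \ref{def:net}--\ref{def:treebased_hendriksen} fully explicit, which the paper leaves to inspection of the figure.
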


To verify that Observation \ref{obs} is correct, consider the network given by Figure \ref{fig:observation}. This network can be easily seen to be strongly tree-based. However, as it contains a triangle, the network cannot be 2-colorable. Thus, the affirmative answer we will give later on to Question \ref{qu:coloring} closes the gap (as we know 4 colors are always enough by Theorem \ref{thm:chromatic_number}, but 2 colors are not always enough as can be seen in Figure \ref{fig:observation} -- so the bound of 3 implied by our affirmative answer to Question \ref{qu:coloring} is tight). It also shows that strongly tree-based networks are even more closely related to trees than general tree-based networks.

\begin{figure}
    \centering
\includegraphics{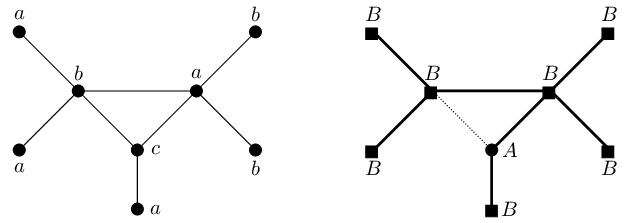}

\caption{[adapted from Figure 2(ii) in \cite{Hendriksen2018}]  A strongly tree-based network together with the spanning tree defining this property. The edges in the spanning tree are highlighted with thicker lines, whereas the only network edge that is not contained in the spanning tree is drawn as a thinner line. Note that this network is not 2-colorable due to the triangle in the middle. Thus, the depicted 3-coloring with colors $a$, $b$ and $c$ is minimal. Moreover, this example can be used to illustrate Proposition \ref{prop:treebased_network} which states that each strongly tree-based network has a $3$-basis. Here, the $3$-basis is given by the highlighted spanning-tree. The $3$-representation is given by $(T,A,B,\mathfrak{b})$, where $T$ is the spanning-tree, $A$ and $B$ are as indicated on the right and $\mathfrak{b}: A\rightarrow B$ is represented by the thin line.}
\label{fig:observation}
\end{figure}

Another result of \cite{Hendriksen2018} which we will refer to in our manuscript is the following theorem.

\begin{theorem}[adapted from Theorem 4.3 in \cite{Hendriksen2018}]\label{thm:strictly_tb} If a strongly tree-based network $N$ with leaf set $X$ has the additional property that it contains a spanning tree $T$ with leaf set $X$ that contains all edges of $N$ incident to vertices of degree at least 4, then $\chi(N)\leq 3$. 
\end{theorem}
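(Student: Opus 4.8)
The plan is to exploit the extra hypothesis — that some spanning tree $T$ with leaf set $X$ contains every edge of $N$ incident to a vertex of degree at least $4$ — to exhibit $N$ as a graph with a $3$-basis in the sense of Definition~\ref{def:basis+representation}, and then (once the main theorem on $3$-colorability of graphs with a $3$-basis is in place) simply invoke it. Concretely, I would set $G=T$, and partition $V(N)=V(T)$ into $B$, the set of vertices incident in $N$ only with $T$-edges, and $A=V(N)\setminus B$, the set of vertices incident in $N$ with at least one non-tree edge. For $a\in A$ let $\mathfrak{b}(a)$ collect exactly the $B$-neighbours $z$ of $a$ along non-tree edges $\{a,z\}\in E(N)\setminus E(T)$. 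The first thing to verify is that $(T,A,B,\mathfrak{b})$ actually represents $N$: every edge of $N$ is either in $T$ or is a non-tree edge, and I must argue that every non-tree edge $\{x,y\}$ has at least one endpoint in $A$ — which is immediate, since a vertex incident with such an edge is by definition in $A$ — and that it is recorded by $\mathfrak{b}$ applied to an $A$-endpoint whose \emph{other} endpoint lies in $B$. This last point is the crux and forces a small case analysis: if both endpoints of a non-tree edge were in $A$, the edge would not be captured by $\mathfrak{b}$, so I need to rule that out, and here I would use that vertices of degree $\ge 4$ in $N$ are $T$-saturated. A vertex in $A$ has a non-tree edge, hence — by the hypothesis — degree at most $3$ in $N$; since it also has at least one tree edge, its non-tree degree is small, and one shows two $A$-vertices cannot be joined by a non-tree edge. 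I expect this to be the most delicate step.

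Next I would check the three conditions of Definition~\ref{def:basis+representation}(\ref{itm:basis_first}) for $G=T$ and $k=3$. Condition~(\ref{itm:condition1}) holds trivially and for free: $T$ is a tree, so it has no non-trivial block at all, hence no non-trivial block containing a vertex of $A$. Condition~(\ref{itm:condition2}), $\deg_T(a)\ge 2$ for $a\in A$: a vertex of $A$ has some non-tree incident edge, so $\deg_N(a)\ge 2$; if $\deg_N(a)=2$ then one of its two $N$-edges is a non-tree edge, so $\deg_T(a)=1$, meaning $a$ is a leaf of $T$ — but leaves of $T$ are exactly the elements of $X$, which have degree $1$ in $N$, contradicting $\deg_N(a)\ge 2$. (Alternatively, the definition of strongly tree-based already forbids degree-$2$ vertices, and $a$ being a $T$-leaf would make it an element of $X$.) So $\deg_T(a)\ge 2$. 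Condition~(\ref{itm:condition3}), $\deg_T(a)+|\mathfrak{b}(a)|\le 3$: note $\deg_T(a)+|\mathfrak{b}(a)|\le \deg_T(a)+(\deg_N(a)-\deg_T(a))=\deg_N(a)$, and $\deg_N(a)\le 3$ by the hypothesis on $A$-vertices. Here I should be slightly careful that $\mathfrak{b}(a)$ counts at most $\deg_N(a)-\deg_T(a)$ elements — true because each element of $\mathfrak{b}(a)$ corresponds to a distinct non-tree edge at $a$, and such edges number exactly $\deg_N(a)-\deg_T(a)$.

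Having established that $N$ has a $3$-basis, I would conclude $\chi(N)\le 3$ by citing the main result of Section~\ref{sec:results} (the theorem asserting that every graph with a $3$-basis is $3$-colorable), which the present excerpt has set up the machinery for. In other words, Theorem~\ref{thm:strictly_tb} should be stated and proved as a corollary of that more general result rather than proved from scratch; the only real work is the translation from the phylogenetic hypothesis into the combinatorial conditions of Definition~\ref{def:basis+representation}, and within that, the one genuinely non-routine point is showing no two $A$-vertices are joined by a non-tree edge, which is exactly where the ``$T$ contains all edges at high-degree vertices'' hypothesis is used. A fallback, if one wanted a self-contained argument in the style of Hendriksen, would be a direct greedy colouring: order the non-tree edges, contract or colour around them using that each non-tree edge touches a degree-$\le 3$ vertex, but this is essentially the content of the $k$-basis machinery and I would prefer the clean reduction above.
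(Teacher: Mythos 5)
Your overall strategy---translate the hypothesis into a $3$-basis in the sense of Definition~\ref{def:basis+representation} and invoke the main theorem---is legitimate and is in fact the route the paper itself endorses in Remark~\ref{rem:implication_corollary 17} (via Proposition~\ref{prop:treebased_network} and Theorem~\ref{thm:main}, which cover \emph{all} strongly tree-based networks, not only those satisfying the extra hypothesis of this theorem). However, your construction of the partition has a genuine flaw, and it sits exactly at the point you yourself flag as ``the crux.'' You define $A$ as the set of \emph{all} vertices incident in $N$ with at least one non-tree edge. Under that definition, \emph{both} endpoints of every edge $e\in E(N)\setminus E(T)$ lie in $A$, tautologically. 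Hence no non-tree edge ever has an endpoint in $B$, every $\mathfrak{b}(a)$ is empty, and $(T,A,B,\mathfrak{b})$ represents $T$ rather than $N$. The step you propose to close this gap---``one shows two $A$-vertices cannot be joined by a non-tree edge''---is not delicate but definitionally false whenever $N\neq T$: the existence of a non-tree edge $\{x,y\}$ puts both $x$ and $y$ into your $A$. No amount of degree bookkeeping rescues this; the definition of $A$ itself must change. The repair (used in the paper's Proposition~\ref{prop:treebased_network}) is to choose exactly \emph{one} endpoint $v_e$ of each non-tree edge $e$ and set $A=\{v_e: e\in E(N)\setminus E(T)\}$, $B=V(N)\setminus A$, $\mathfrak{b}(v_e)=e\setminus\{v_e\}$; one then shows the other endpoint cannot also have been selected, because a vertex of $T$-degree $2$ incident with two non-tree edges would have $\deg_N\geq 4$, contradicting strong tree-basedness. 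Your verifications of Conditions~\ref{itm:condition1}--\ref{itm:condition3} are fine and carry over to the corrected partition.

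For comparison: the paper's own proof of Theorem~\ref{thm:strictly_tb} is a different, self-contained greedy argument that does not use the $k$-basis machinery at all. It enumerates the vertices of $T$ along successive leaf-to-leaf paths, orients every edge of $N$ from lower to higher index, and shows that the extra hypothesis (all edges at degree-$\geq 4$ vertices lie in $T$) forces every vertex to have in-degree at most $2$, so a greedy $3$-coloring in that order succeeds. That argument is more elementary and uses the additional hypothesis directly; your intended reduction, once the partition is fixed, proves nothing new beyond what Proposition~\ref{prop:treebased_network} plus Theorem~\ref{thm:main} already give, and in particular does not need the extra hypothesis at all.
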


Before we continue, we note that the specific strongly tree-based networks considered by the theorem are referred to as \emph{strictly tree-based} by Hendriksen.

Next, we need to recall two more statements from the literature. The following lemma is very easy to prove and therefore mentioned in the literature as an exercise or obvious statement (cf. \cite[V, Exercise 47]{Bollobas1998} and \cite[p. 141]{Harary1969}).  

\begin{lemma}\label{lem:blockcolors}
Let $k\geq 2$ and let $G$ be a graph. Then $\chi(G)\leq k$ if and only if $\chi(Z)\leq k$ for every block $Z$ in $G$.
\end{lemma}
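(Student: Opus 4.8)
The plan is to prove the two implications separately, with essentially all of the substance lying in the \enquote{if} direction.

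For the \enquote{only if} direction: if $f$ is a $k$-coloring of $G$ and $Z$ is any block of $G$, then $Z$ is a subgraph of $G$, so the restriction $f|_{V(Z)}$ is still a proper coloring of $Z$ using at most $k$ colors; hence $\chi(Z)\leq k$. (This works for every subgraph, not just for blocks, and needs no hypothesis on $k$.)

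For the \enquote{if} direction, assume $\chi(Z)\leq k$ for every block $Z$ of $G$. First I would reduce to the case that $G$ is connected: $\chi(G)$ equals the maximum of the chromatic numbers of the connected components of $G$, and the blocks of $G$ are precisely the blocks of its components, so it suffices to color each component separately. So assume $G$ is connected and argue by induction on $|V(G)|$. If $G$ itself is a block (in particular if $|V(G)|\leq 2$), there is nothing to prove. Otherwise $G$ has more than one block and hence a cut vertex $v$, i.e.\ a vertex whose deletion disconnects $G$ (cf.\ the block--cut tree structure in \cite{Diestel2017}). Let $C_1,\dots,C_m$ be the connected components of $G-v$ and set $G_i:=G[V(C_i)\cup\{v\}]$. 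Each $G_i$ is connected, has strictly fewer vertices than $G$, and -- this is the structural fact I will invoke -- every block of $G_i$ is a block of $G$; hence $\chi(Z)\leq k$ for every block $Z$ of $G_i$, and by the induction hypothesis $\chi(G_i)\leq k$. Fix a $k$-coloring $c_i$ of each $G_i$; composing $c_i$ with a suitable permutation of $\{1,\dots,k\}$ we may assume $c_i(v)=1$ for all $i$. Since $V(G_i)\cap V(G_j)=\{v\}$ for $i\neq j$ and every edge of $G$ lies inside exactly one $G_i$, the map $c$ given by $c(v)=1$ and $c(u)=c_i(u)$ for $u\in V(C_i)$ is a well-defined proper $k$-coloring of $G$, completing the induction.

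The routine but genuinely load-bearing step -- the one I expect to be the main obstacle -- is the block bookkeeping underlying the inductive step: the claim that for a cut vertex $v$ with $G-v$ having components $C_1,\dots,C_m$, the blocks of $G$ are exactly the union of the blocks of the subgraphs $G_i=G[V(C_i)\cup\{v\}]$ (equivalently, that $G$ decomposes along its block--cut tree). This is standard and can be cited from the treatment of blocks in \cite{Diestel2017}; once it is available, both the applicability of the induction hypothesis to each $G_i$ and the fact that every edge of $G$ sits inside a single $G_i$ follow at once. The only other ingredient is the trivial remark that recoloring a proper coloring by a permutation of the color set keeps it proper, which is what lets us align the color of $v$ before gluing; note that the hypothesis $k\geq 2$ is not actually needed for this argument.
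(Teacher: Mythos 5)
Your proof is correct, but note that there is nothing in the paper to compare it against: Lemma \ref{lem:blockcolors} is stated there as a known, easy fact and is only cited (to Bollob\'as and to Harary), not proved. Your argument is the standard one underlying those references: the \enquote{only if} direction is the trivial restriction of a coloring to a subgraph, and the \enquote{if} direction reduces to connected $G$, inducts on $|V(G)|$ via a cut vertex $v$, uses the block--cut-tree fact that the blocks of $G$ are exactly the blocks of the pieces $G_i=G[V(C_i)\cup\{v\}]$, and glues $k$-colorings of the $G_i$ after permuting colors so they agree at $v$. All the load-bearing steps check out: each $G_i$ is strictly smaller than $G$ because $G-v$ has at least two components; every edge of $G$ lies in exactly one $G_i$ (an edge not through $v$ has both ends in one $C_i$, and an edge through $v$ has its other end in a unique $C_i$); and composing a proper coloring with a permutation of the color set keeps it proper. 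Your side remark that the hypothesis $k\geq 2$ is not actually used is also accurate -- the argument goes through verbatim for $k=1$, where the hypothesis on blocks forces $G$ to be edgeless. So the proposal supplies a complete, self-contained proof of a statement the paper deliberately left to the literature.
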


As a corollary of this lemma, we directly obtain the following statement:

\begin{corollary}\label{cor:blockcolors}
Let $G$ be a graph such that each block contains at most $k$ vertices. Then, we have $\chi(G) \leq k$.     
\end{corollary}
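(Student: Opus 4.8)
The plan is to derive Corollary~\ref{cor:blockcolors} directly from Lemma~\ref{lem:blockcolors} together with the trivial bound $\chi(K_n) = n$ (more generally, $\chi(Z) \leq |V(Z)|$ for any graph $Z$, since assigning distinct colors to distinct vertices is always a proper coloring). First I would dispose of the degenerate case $k \leq 1$: if every block has at most one vertex, then $G$ has no edges at all, so $\chi(G) \leq 1 \leq k$; and if some block has at most $k \leq 1$ vertices we are in the same situation, so we may assume $k \geq 2$, which is exactly the hypothesis needed to invoke Lemma~\ref{lem:blockcolors}. (Alternatively one notes $k\geq 2$ is harmless to assume since a graph with a block on $\geq 2$ vertices already needs $\geq 2$ colors, and padding $k$ upward only weakens the conclusion.)

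With $k \geq 2$ in hand, the argument is a two-line chain. Let $Z$ be an arbitrary block of $G$. By hypothesis $|V(Z)| \leq k$, and since any graph can be properly colored by giving each vertex its own color, $\chi(Z) \leq |V(Z)| \leq k$. As $Z$ was arbitrary, every block of $G$ satisfies $\chi(Z) \leq k$, so Lemma~\ref{lem:blockcolors} yields $\chi(G) \leq k$, which is the claim.

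There is essentially no obstacle here — the corollary is a formal consequence of the preceding lemma. The only point requiring a moment's care is the edge-case bookkeeping around small $k$ (ensuring the hypothesis $k \geq 2$ of Lemma~\ref{lem:blockcolors} is available, or handling $k \in \{0,1\}$ separately), and the observation that a block on $n$ vertices is trivially $n$-colorable. I would present this in two or three sentences and move on.
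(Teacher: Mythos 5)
Your proof is correct and matches the paper's (implicit) argument exactly: the paper derives the corollary directly from Lemma~\ref{lem:blockcolors} via the trivial bound $\chi(Z)\leq|V(Z)|$ for each block $Z$. Your extra care with the case $k\leq 1$ is a harmless refinement the paper omits.
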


Before we can continue with our own results, we need to consider some heuristic for the Vertex Coloring Problem, which consists in finding an optimal coloring of a graph. This heuristic is based on Zykov operations as introduced above. The idea here is merely that, by using these simple operations, one can obtain bounds on the chromatic number of a graph. In order to explain this more in-depth, we require one more definition and lemma.

\begin{definition}\label{def:zykov2}
    Let $G$ be a graph, $k\in \mathbb{N}_{\geq 1}$, $x,y$ a pair of non-adjacent vertices and $f: V(G''_{xy}) \rightarrow \{1, \ldots, k\}$ be some arbitrary function. Then we define $\widehat{f}: V(G) \rightarrow \{1, \ldots, k\}$ in the following way:

\begin{equation*} 
    \widehat{f}(v)=\begin{cases} f(v) & v\in V(G) \setminus \{x,y\},\\ f(v_{xy})  &v \in \{x,y\}.\end{cases}
\end{equation*}

\end{definition}

Using this definition, we now state the following simple lemma, which makes it possible to derive the chromatic number of a graph $G$ from the chromatic numbers of the graphs $G'_{xy}$ and $G''_{xy}$ of Definition \ref{def:zykov}, and which can easily be verified.

\begin{lemma}\label{lem:zykov2}
Let $G$ be a graph and $k\in \mathbb{N}_{\geq 1}$. Moreover, let $x,y$ be a pair of non-adjacent vertices and $g: V(G) \rightarrow \{1, \ldots, k\}$ be some arbitrary function. 
\begin{itemize}
    \item If $g(x) \neq g(y)$, then $g$ is a $k$-coloring of $G$ if and only if $g$ is a $k$-coloring of $G'_{xy}$.
    \item If $g(x) = g(y)$, then $g$ is a $k$-coloring of $G$ if and only if $g=\widehat{f}$ for a $k$-coloring $f: V(G''_{xy}) \rightarrow \{1, \ldots, k\}$ of $G''_{xy}$.
\end{itemize}    
\end{lemma}

This lemma is motivated by the graph-theoretical literature in which it is used to develop a heuristic for solving the vertex coloring problem \cite{McDiarmid1979,Zykov1949,Gualandi2012}. The basic idea of this heuristic is as follows: Using Zykov operations it is possible to construct a search tree which has the input graph as its root and complete graphs as its leaves. This heuristic is based on the fact that the Vertex Coloring Problem is trivially solvable for the leaves of this search tree (as for complete graphs the chromatic number is identical to the number of vertices), and hence -- iteratively using Lemma \ref{lem:zykov2} -- an optimal coloring of the root can be derived by investigating the whole search tree. As the Vertex Coloring Problem is known to be NP-complete \cite{Karp1972}, it is hardly surprising that this search tree will in almost all cases have a size which grows faster than exponentially in terms of the input length as shown in \cite{McDiarmid1979}. Hence, in practical implementations Zykov operations are often combined with other methods as brand-and-bound algorithms \cite{Mehrotra1996,Held2012} or SAT solvers \cite{Brand2026} in order to enhance the efficiency of these heuristics such that it may not be necessary to actually construct the whole search tree.

However, our approach to solving Question \ref{qu:coloring} will demonstrate the usefulness of Zykov operations not only for practical computation but in order to solve a question of theoretical interest. In our approach, we do not consider the whole search tree which is built by the heuristic, but only a subpath lying within this tree which has the property that it consists completely of graphs contained in the graph class which is defined by Definition \ref{def:basis+representation}. As we do not consider the whole search tree for our proofs, it will suffice to use the second part of Lemma \ref{lem:zykov2}, but the classic heuristic which constructs a complete search tree will need both kinds of Zykov operations, which is why we mention it in our manuscript for completeness. In Section \ref{sec:disc}, we discuss how our approach might be adjusted to solve different  coloring problems.

We are now finally in the position to turn our attention to our own results.

\section{Results}\label{sec:results}

\subsection{Greedy approaches}

In this section, we reconsider Theorems \ref{thm:chromatic_number} and \ref{thm:strictly_tb} by Hendriksen \cite{Hendriksen2018}. Our first aim is to show that a greedy strategy similar to the one used by Hendriksen in the proof for  Theorems \ref{thm:chromatic_number} can be used to prove the following result, which is a lot more general.

\begin{theorem}\label{thm:generalizeThm6} Let $k \in \mathbb{N}$ and let $H$ be a graph which is represented by $(G, A, B, \mathfrak{b})$ such that additionally Condition \ref{itm:condition3} of Definition \ref{def:basis+representation} holds, i.e., such that $\deg_G(a) + |\mathfrak{b}(a)| \leq k$ for each $a\in A$.  Then, if additionally we have $\chi(G) \leq k+1$, then we also have $\chi(H) \leq k+1$.
\end{theorem}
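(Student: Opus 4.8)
The plan is to prove this by induction on $|A|$ (the number of vertices carrying extra $\mathfrak{b}$-edges), using a greedy coloring argument. If $A = \emptyset$, then $H = G$ and the claim is immediate. For the inductive step, pick any vertex $a \in A$ and consider the graph $H^- = H - a$ together with the restricted data $(G - a, A \setminus \{a\}, B, \mathfrak{b}')$, where $\mathfrak{b}'$ is the restriction of $\mathfrak{b}$ to $A \setminus \{a\}$. First I would check that this smaller quadruple still represents $H^-$ and still satisfies Condition~\ref{itm:condition3} with the same $k$: removing $a$ only deletes edges, so degrees of other vertices do not increase, and $\deg_{G-a}(a')+|\mathfrak{b}'(a')|\le \deg_G(a')+|\mathfrak{b}(a')|\le k$. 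I would also need that $\chi(G-a)\le k+1$, which follows from $\chi(G)\le k+1$ since $G-a$ is a subgraph of $G$. By the induction hypothesis, $\chi(H^-)\le k+1$, so fix a $(k+1)$-coloring $f$ of $H^-$.

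Now I would extend $f$ to $a$ greedily. The neighbors of $a$ in $H$ are exactly $N_G(a) \cup \mathfrak{b}(a)$, and there are at most $\deg_G(a) + |\mathfrak{b}(a)| \le k$ of them, so they use at most $k$ of the $k+1$ available colors; assign $a$ a color not used by any of its neighbors. This produces a $(k+1)$-coloring of $H$, completing the induction.

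The one subtlety to address carefully is that the inductive hypothesis as I have phrased it must be stated for \emph{all} graphs represented by a quadruple satisfying Condition~\ref{itm:condition3}, not just for one fixed $H$; so I would set up the induction with the statement ``for every graph $H$ represented by some $(G,A,B,\mathfrak{b})$ with Condition~\ref{itm:condition3} and with $\chi(G)\le k+1$, we have $\chi(H)\le k+1$,'' and induct on $|A|$. Note that Conditions~\ref{itm:condition1} and~\ref{itm:condition2} play no role here — only the degree bound~\ref{itm:condition3} matters — which is why this is more general than Theorem~\ref{thm:chromatic_number}; indeed, since every strongly tree-based network $N$ is represented by a spanning tree $G$ (which is bipartite, so $\chi(G)\le 2$) with appropriate $A,B,\mathfrak{b}$, taking $k=3$ recovers $\chi(N)\le 4$. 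The main obstacle, such as it is, is purely bookkeeping: verifying that passing from $H$ to $H-a$ preserves all the hypotheses (in particular that $(G-a, A\setminus\{a\}, B, \mathfrak{b}')$ genuinely represents $H-a$, i.e.\ that every non-$G$ edge of $H-a$ still joins an $A$-vertex to a $B$-vertex), and that $\deg_H(a)\le k$ exactly accounts for the neighbors that could block the greedy choice.
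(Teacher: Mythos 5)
Your proof is correct and is essentially the same greedy argument as the paper's: both hinge on $|N_H(a)| \le \deg_G(a)+|\mathfrak{b}(a)| \le k$ leaving a spare color among $k+1$, processed one $A$-vertex at a time. The only difference is organizational — you delete $a$ and induct on $|A|$, whereas the paper starts from a $(k{+}1)$-coloring of $G$ and adds the $\mathfrak{b}$-edges of each $a_i$ in turn, recoloring $a_i$ when a conflict arises.
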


\begin{proof}
    Let $f: V(G) \rightarrow \{1, \ldots, k+1\}$ be a coloring of $G$. We assume $A\neq \emptyset$ (otherwise the statement is trivial). Let $A = \{a_1, \ldots, a_m\}$ be an enumeration of $A$. For each $i\in \{0, \ldots, m\}$ we define $H_i$ as the graph with vertex set $V(G)$ and edge set $E(G) \cup \bigcup_{j=1}^i \{\{a_j,b\}: b\in \mathfrak{b}(a_j)\}$. Note that $H_0 = G$ and $H_m = H$. 

    We construct a sequence $f_0 , f_1, \ldots, f_m$ beginning with $f_0 = f$ such that $f_i$ is a coloring of $H_i$ using at most $k+1$ colors. Assume that $f_i$ is already constructed for some $0\leq i < m$. We show how to construct $f_{i+1}$. If $f_i$ is a $(k+1)$-coloring of $H_{i+1}$, we simply set $f_{i+1}=f_i$. Otherwise,  there are some edges violating the condition that $f_i$ is a $(k+1)$-coloring of $H_{i+1}$. But such edges can only occur in $E(H_{i+1})\setminus E(H_i) = \{\{a_{i+1}, b\}: b\in \mathfrak{b}(a_{i+1})\}$. However, Condition \ref{itm:condition3} guarantees that $\deg_{H_{i+1}}(a_{i+1}) \leq k$. This implies $\{1, \ldots, k+1\}\setminus f_i(N_{H_{i+1}}(a_{i+1})) \neq \emptyset$. Hence, we can change the color of $a_{i+1}$ such that we obtain a coloring $f_{i+1}$ of $H_{i+1}$ using at most $k+1$ colors, which completes the proof.
\end{proof}

\begin{remark} \label{rem:N_with_basis} Before we continue, we note that the correctness of the first part of Theorem \ref{thm:chromatic_number} is a direct consequence of Theorem \ref{thm:generalizeThm6} combined with Proposition \ref{prop:treebased_network}: Proposition \ref{prop:treebased_network} implies that each strongly tree-based network fulfills all conditions of Definition \ref{def:basis+representation} for $k=3$, which in particular includes Condition \ref{itm:condition3}. Moreover, by the same proposition it is ensured that each strongly tree-based network has a $3$-basis $G$ with $\chi(G)\leq 2$, which is even stronger than $\chi(G)\leq 4$. Hence, Theorem \ref{thm:generalizeThm6} can be applied to each strongly tree-based network which shows that the first part of Theorem \ref{thm:chromatic_number} indeed holds.
\end{remark}

Before we continue with our main results, we turn our attention to a special class of strongly tree-based networks, namely the ones considered in Theorem \ref{thm:strictly_tb}. The proof of this theorem given by Hendriksen  \cite{Hendriksen2018} is somewhat technical -- it uses an inductive argument concerning the \emph{level} of a network, which is a measure of how much a network deviates from being a tree. Motivated by the success of greedy coloring strategies to prove Theorem \ref{thm:chromatic_number} and Theorem \ref{thm:generalizeThm6}, we provide an alternative proof for Theorem \ref{thm:strictly_tb} based on an even simpler greedy argument. This is possible due to the additional restrictions on the spanning tree $T$ of $N$ made by the theorem.

\begin{proof}[Alternative proof of Theorem \ref{thm:strictly_tb}] 

We consider $N$ and $T$ as stated in the theorem. Now we enumerate the vertices of $T$ as follows: We start with a leaf and call it $v_1$. We then take a vertex which is a neighbor of $v_1$ in $T$ and assign it $v_2$ and so forth, until we reach another leaf. Thus, in this first step, we have enumerated a leaf-to-leaf path with $v_1,v_2,\ldots,v_k$ for some $k\in \mathbb{N}$. Then, as long as there are still unnumbered vertices, we pick an unnumbered vertex which is adjacent in $T$ to a numbered one (which must exist as $T$ is connected), and continue the enumeration from there along a path to some unnumbered leaf. 

Once all vertices are numbered, we can consider each edge of $N$ as directed such that it points from the lower to the higher number. Now we consider a color set with three colors and go along our vertices according to our numbering. At every vertex, we assign one of the available colors that is not yet used in the neighborhood of the present vertex (but note that there might still be uncolored neighbors of this vertex). Clearly, if there is no vertex in $N$ with in-degree larger than 2, there will always be one of the three colors still available. So this is precisely what we will show now.

First note that within $T$, except for vertex $v_1$, which has in-degree 0 and out-degree 1 by construction, every vertex has in-degree 1. So if there was a vertex $v$ with in-degree at least 3 in $N$, this would hence imply that there are at least two edges of $E(N)\setminus E(T)$ incident with $v$. We now show that this cannot be the case. To see this, first note that $v$ cannot be a leaf: the leaf sets of $T$ and $N$ coincide by definition, and as $T$ is connected, the unique edge leading to each leaf in $N$ is also contained in $T$. Thus, no leaf is incident to an edge of $N$ that is not in $T$. So $v$ must be an inner node of $T$. Thus, it has a total degree of at least 2 in $T$. If it is also incident with two edges of $N$ which are not contained in $T$, this shows $\deg_N(v)\geq 4$. However, by the assumptions made on $N$ by the theorem, this would imply that \emph{all} edges incident with $v$ are contained in $T$, a contradiction to the assumption that $v$ is incident with two edges of $E(N)\setminus E(T)$. This completes the proof. \end{proof}

\begin{remark}\label{rem:implication_corollary 17}
    We note that Theorem \ref{thm:strictly_tb} is directly implied by Corollary \ref{cor:treebased_network} which will be proved later.
\end{remark}

\subsection{Main results}

It is our main aim to answer Question \ref{qu:coloring} affirmatively. However, we will actually prove an even stronger result, namely the following theorem, which is the main result of our manuscript and which delivers a better bound than Theorem \ref{thm:generalizeThm6}, albeit by presupposing more restrictive requirements. While Theorem \ref{thm:generalizeThm6} implies Hendriksen's original result on strongly tree-based networks (cf. the first part of Theorem \ref{thm:chromatic_number} and Remark \ref{rem:implication_corollary 17}), Theorem \ref{thm:main} will later on turn out to be the basis for answering Question \ref{qu:coloring} affirmatively.

\begin{theorem}\label{thm:main}
Let $k\geq 2$ and let $H$ be a graph with $k$-basis $G$ such that $\chi(G)\leq k$. Then $\chi(H) \leq k$.
\end{theorem}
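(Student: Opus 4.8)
The plan is to induct on $|A|$, the number of $A$-vertices in the $k$-basis $G$ (recall $V(G) = A \dotcup B$), using Zykov operations to eliminate one $A$-vertex at a time while preserving the key structural invariant that we still have a $k$-basis with a $k$-colorable underlying graph. If $A = \emptyset$ then $H = G$ and there is nothing to prove, so suppose $a \in A$. By Condition \ref{itm:condition2}, $\deg_G(a) \geq 2$, and by Condition \ref{itm:condition1}, $a$ lies in no non-trivial block of $G$; hence every edge of $G$ at $a$ is a bridge, so $a$ is a cut vertex whose neighbors $N_G(a) = \{z_1, \ldots, z_d\}$ (with $d = \deg_G(a) \geq 2$) lie in pairwise different components of $G - a$. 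In particular $z_1$ and $z_2$ are non-adjacent in $G$ — and I will want them non-adjacent in $H$ too, which needs a short separate argument since $H$ has extra edges (but those extra edges all emanate from $A$-vertices and go to $B$; since $z_1, z_2 \in B$, no $\mathfrak{b}$-edge joins them, and a $\mathfrak{b}$-edge from some $a' \in A$ to one of them does not create a $z_1 z_2$ edge).

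The core step is then to apply Lemma \ref{lem:zykov}: $\chi(H) = \min\{\chi(H'_{z_1 z_2}), \chi(H''_{z_1 z_2})\}$, so it suffices to show that at least one of these two graphs has a $k$-basis with $k$-colorable underlying graph and strictly fewer $A$-vertices, then invoke the induction hypothesis. The natural candidate is the identification $H''_{z_1 z_2}$: merging $z_1$ and $z_2$ into $v_{z_1 z_2}$ turns the two bridges $\{a, z_1\}, \{a, z_2\}$ into a single edge $\{a, v_{z_1 z_2}\}$ (parallel edges collapse), thereby dropping $\deg(a)$ by one. Do this identification inside $G$ as well to get $G''_{z_1 z_2}$; one must check $(G''_{z_1 z_2}, A, B', \mathfrak{b}')$ represents $H''_{z_1 z_2}$ with the obvious updated $B' = (B \setminus \{z_1, z_2\}) \cup \{v_{z_1 z_2}\}$ and $\mathfrak{b}'$ obtained by substituting $v_{z_1 z_2}$ for $z_1, z_2$ in each image set. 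Conditions \ref{itm:condition1}–\ref{itm:condition3} must be re-verified: identifying two vertices lying in different components of $G - a$ cannot create a cycle through any $A$-vertex (this is where Condition \ref{itm:condition1} and the cut-vertex structure are essential, and is the delicate point), degrees of $A$-vertices can only drop or stay the same, and crucially $\deg_{G''}(a) = \deg_G(a) - 1 \geq 1$; if it drops to exactly $1$, that $a$ now violates Condition \ref{itm:condition2}, so before recursing I will reassign such degree-$1$ $A$-vertices into $B'$ (folding their single $\mathfrak{b}'$-edge into the graph), which is harmless and still strictly decreases $|A|$.

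The remaining obligation is $\chi(G''_{z_1 z_2}) \leq k$. This does \emph{not} follow from Lemma \ref{lem:zykov} applied to $G$ (that would only say $\min\{\chi(G'), \chi(G'')\} = \chi(G) \leq k$, leaving open which one is small). Instead I would argue directly: since $z_1, z_2$ lie in different components of $G - a$ and $a$ is their only common neighbor, a $k$-coloring $f$ of $G$ can be modified on the component containing $z_2$ (permuting colors within that component, which is a $k$-colored connected piece attached to the rest of $G$ only through the bridge $\{a, z_2\}$) so that $f(z_2) = f(z_1)$; this relies on $k \geq 2$ so a transposition of colors is available, and on the fact that recoloring a whole component by a color permutation preserves properness and only the single edge $\{a, z_2\}$ crosses out of it, whose constraint $f(z_2) \neq f(a)$ can be maintained by choosing the permutation to fix $f(z_1)$ — feasible because $z_2$ has a neighbor (namely $a$) forcing at most one forbidden value and $k \geq 2$. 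Hmm — one must be slightly careful that the permutation simultaneously sends $f(z_2) \mapsto f(z_1)$ and avoids $f(a)$ at $z_2$'s side; if $f(a) = f(z_1)$ this is impossible, but then $f$ was already improper on $\{a, z_1\}$, contradiction, so $f(a) \neq f(z_1)$ and any permutation swapping the old $f(z_2)$ with $f(z_1)$ works. Then $f$ descends to a proper $k$-coloring of $G''_{z_1 z_2}$. I expect this recoloring-across-a-bridge argument, together with the bookkeeping that all three basis conditions survive the identification, to be the main obstacle; everything else is the standard Zykov induction.
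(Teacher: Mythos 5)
Your overall strategy (induct, use a Zykov identification to shrink the instance, recolor across a bridge to keep the basis $k$-colorable) is in the right spirit, but the specific reduction you chose --- identifying two $G$-neighbours $z_1,z_2$ of an $A$-vertex $a$ --- has a genuine gap at its very first step. You assert that $z_1,z_2\in B$ and hence are non-adjacent in $H$; neither claim is justified. Definition \ref{def:basis+representation} does not forbid edges of $G$ between two $A$-vertices: Condition \ref{itm:condition1} only keeps $A$-vertices out of non-trivial blocks, and a bridge joining two $A$-vertices is perfectly legal. Concretely, take $V=\{b_1,a_1,a_2,b_2\}$, let $G$ be the path $b_1a_1a_2b_2$, $A=\{a_1,a_2\}$, $B=\{b_1,b_2\}$, $\mathfrak{b}(a_1)=\{b_2\}$, $\mathfrak{b}(a_2)=\{b_1\}$. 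This is a valid $3$-basis of $H=K_4$ minus the edge $\{b_1,b_2\}$, and for \emph{each} $A$-vertex its two $G$-neighbours are adjacent in $H$ (one of them is the other $A$-vertex, whose $\mathfrak{b}$-image is the first neighbour). So Lemma \ref{lem:zykov} cannot be applied to any pair your procedure would select, and the induction has no legal move. This is exactly why the paper first normalises the components of $G[B]$ into cliques (Proposition \ref{prop:shrinking}) and then works with a leaf component $C$ and its unique attaching vertex $v\in A$, either absorbing $v$ into $C$ (Lemma \ref{lem:modify1}) or identifying the unique $G$-neighbour of $v$ in $C$ with a vertex of $\mathfrak{b}(v)$ lying outside $C$ (Lemma \ref{lem:modify2}) --- in both cases the identified vertices are guaranteed to lie in $B$ and to be non-adjacent in $H$.

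Two further problems arise even where your step is applicable. First, your induction measure does not decrease: identifying $z_1,z_2\in B$ removes no $A$-vertex unless $\deg_G(a)=2$, so for $\deg_G(a)\ge 3$ you get $|A^-|=|A|$ and the induction stalls (repairable with a lexicographic measure such as $\left(|A|,\sum_{a\in A}\deg_G(a)\right)$, but not what you wrote). Second, the ``harmless'' folding of a degree-one $A$-vertex $a$ into $B$ by adding its $\mathfrak{b}$-edges to the basis is not harmless: an added edge $\{a,y\}$ closes a cycle through the $a$--$y$ path in the current basis, and that path may pass through \emph{other} $A$-vertices, so Condition \ref{itm:condition1} can fail for the new quadruple; the paper's Lemma \ref{lem:modify1} only folds when $\mathfrak{b}(v)$ is contained in the adjacent clique component, precisely so that the newly created block stays inside $B^-$ and has at most $k$ vertices (whence $\chi\le k$ for the new basis via Corollary \ref{cor:blockcolors}). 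Your recolouring-across-a-bridge argument for $\chi(G''_{z_1z_2})\le k$ is correct as far as it goes, but the surrounding reduction needs to be rebuilt along the lines above before the induction closes.
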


In order to see that Theorem \ref{thm:main} provides the desired answer to Question \ref{qu:coloring}, we need to make a connection between strongly tree-based networks and graphs with a $k$-basis. This is done by Proposition \ref{prop:treebased_network} which has been stated earlier.

Note that Proposition \ref{prop:treebased_network} is actually slightly stronger than needed to fulfill the conditions of Theorem \ref{thm:main}, as $\chi(G)\leq 3$ would be sufficient, but we will prove the stronger statement that we have $\chi(G)\leq 2$. Finally, from Theorem \ref{thm:main} together with Proposition \ref{prop:treebased_network} we then easily derive the following statement as a direct corollary, which answers Question \ref{qu:coloring} affirmatively:

\begin{corollary}\label{cor:treebased_network}
    Let $N$ be a strongly tree-based network. Then $\chi(N) \leq 3$.
\end{corollary}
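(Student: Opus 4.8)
The plan is to obtain Corollary~\ref{cor:treebased_network} by simply chaining the two results that immediately precede it, specialized to $k=3$. Given a strongly tree-based network $N$, Proposition~\ref{prop:treebased_network} supplies a $3$-basis $G$ of $N$ with $\chi(G)\leq 2$. Since $2\leq 3$, this $G$ in particular satisfies $\chi(G)\leq k$ for $k=3$, and $k=3\geq 2$, so all hypotheses of Theorem~\ref{thm:main} are met with $H=N$. Applying Theorem~\ref{thm:main} then yields $\chi(N)\leq 3$, which is exactly the claim; combined with Observation~\ref{obs} this also shows the bound is tight and settles Question~\ref{qu:coloring} affirmatively.

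Thus the corollary itself carries essentially no difficulty: it is a one-line instantiation of Theorem~\ref{thm:main} at $k=3$, fed by Proposition~\ref{prop:treebased_network}. The only thing to verify is the trivial chain of inequalities $\chi(G)\leq 2\leq 3=k$ and the side condition $k\geq 2$, after which the conclusion $\chi(N)\leq 3$ is immediate. All substantive content resides in the two statements being invoked, both of which are available to us here.

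If one had to establish those two statements from scratch, I would expect the main obstacle to lie in Theorem~\ref{thm:main} rather than in the corollary or the proposition. Unlike the greedy bound $\chi(H)\leq k+1$ of Theorem~\ref{thm:generalizeThm6}, removing the last color cannot be done by recoloring a single $A$-vertex greedily, since an $A$-vertex $a$ with $\deg_G(a)+|\mathfrak{b}(a)|=k$ may see all $k$ colors on its neighbours; this is precisely where the Zykov machinery (Lemmas~\ref{lem:zykov} and~\ref{lem:zykov2}) is needed to merge or join vertices and propagate colorings. Proposition~\ref{prop:treebased_network}, by contrast, should reduce to the construction already sketched in Remark~\ref{rem:N_with_basis}: take the spanning tree $T$ guaranteed by Definition~\ref{def:treebased_hendriksen} as the basis $G$, place the degree-$2$ vertices of $T$ into $A$ so that Condition~\ref{itm:condition2} holds, observe that a tree has no non-trivial blocks so Condition~\ref{itm:condition1} is vacuous, set $\mathfrak{b}(a)$ to the unique non-tree neighbour of $a$ so that $\deg_G(a)+|\mathfrak{b}(a)|=2+1=3$ gives Condition~\ref{itm:condition3}, and use $\chi(T)\leq 2$; the only care needed there is the bookkeeping when both endpoints of a non-tree edge of $N$ have degree $2$ in $T$, which is handled exactly as in the remark.
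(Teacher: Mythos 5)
Your proposal is correct and matches the paper's own derivation exactly: the corollary is obtained by feeding the $3$-basis $G$ with $\chi(G)\leq 2$ from Proposition~\ref{prop:treebased_network} into Theorem~\ref{thm:main} with $k=3$. The verification of the side conditions $k\geq 2$ and $\chi(G)\leq k$ is exactly the trivial check you describe.
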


So it remains to prove Theorem \ref{thm:main} and Proposition \ref{prop:treebased_network}. However, before we do so, we first need to prove some preliminary statements that will be used later on.

\subsection{Preliminary results needed to prove our main result}

In this section, we prove some preliminary statements that will be used to prove our main result, which is Theorem \ref{thm:main}. As a first step, we consider again 
Definition \ref{def:basis+representation}, which is highly technical. While its connection to strongly tree-based networks will be elaborated later, we can already state here that it will serve as a generalization of Definition \ref{def:treebased_hendriksen}. Loosely speaking, just as strongly tree-based networks can be thought of as trees with certain additional edges, Definition \ref{def:basis+representation} allows us to think of general graphs as subgraphs with certain additional edges. 

Next, we state two lemmas that will elucidate Condition \ref{itm:condition1} of Definition \ref{def:basis+representation}.

\begin{lemma} \label{lem:bridges_in_basis}
   Let $H$ be a graph which is represented by the quadruple $(G, A, B, \mathfrak{b})$ such that Condition \ref{itm:condition1} of Definition \ref{def:basis+representation} is fulfilled, i.e., such that $A\cap V(Z)=\emptyset$ for all non-trivial blocks $Z$ of $G$. Let $v \in A$ and $e\in E(G)$ with $v\in e$. Then $e$ is a bridge in $G$.
\end{lemma}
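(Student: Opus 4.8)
The plan is to prove the contrapositive-flavored statement directly: take $v\in A$ and an edge $e\in E(G)$ incident with $v$, and argue that $e$ cannot lie in any non-trivial block of $G$, hence $e$ must be a bridge. The key structural fact I would invoke is the standard block decomposition: every edge of a graph lies in exactly one block, and a block is either a bridge (a $K_2$) or a non-trivial block (one with at least three vertices, equivalently one containing a cycle). So it suffices to rule out the possibility that $e$ belongs to a non-trivial block.

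Here is how I would carry this out. Suppose, for contradiction, that $e=\{v,w\}$ lies in a non-trivial block $Z$ of $G$. By definition of a block, $Z$ is a maximal connected subgraph of $G$ such that $Z-u$ is connected for every $u\in V(Z)$; since $Z$ is non-trivial, $|V(Z)|\geq 3$. Because $e\in E(Z)$, both endpoints $v$ and $w$ lie in $V(Z)$. In particular $v\in A\cap V(Z)$, so $A\cap V(Z)\neq\emptyset$. This directly contradicts Condition \ref{itm:condition1} of Definition \ref{def:basis+representation}, which asserts $A\cap V(Z)=\emptyset$ for every non-trivial block $Z$ of $G$. Hence $e$ lies in no non-trivial block, so the unique block containing $e$ is a bridge, i.e., $e$ is a bridge in $G$ (recall from the preliminaries that $e$ is a bridge iff $G-e$ is disconnected).

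The only mild subtlety — and the one place to be careful rather than a genuine obstacle — is making sure the endpoints of $e$ actually belong to $V(Z)$ in the sense used in Condition \ref{itm:condition1}. This is immediate from the convention set up in the preliminaries that a block is a subgraph $H=(V',E')$ with $H\subseteq G$: if $e\in E(H)$ then, since $E'\subseteq\binom{V'}{2}$, necessarily $e\subseteq V'$, so $v\in V(Z)$. One should also note explicitly that every edge lies in \emph{some} block (this follows from the block decomposition: the blocks partition the edge set of $G$), so that "the block containing $e$" is well-defined; then the dichotomy bridge versus non-trivial block finishes the argument. I would state these two facts in one sentence each and then give the three-line contradiction above.
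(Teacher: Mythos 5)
Your proof is correct and follows essentially the same route as the paper's: both arguments place $e$ inside a non-trivial block $Z$, observe that then $v\in A\cap V(Z)$, and contradict Condition \ref{itm:condition1}. The only cosmetic difference is that the paper reaches the non-trivial block via ``$e$ not a bridge $\Rightarrow$ $e$ lies on a cycle $\Rightarrow$ that cycle lies in a non-trivial block,'' whereas you invoke the block partition of the edge set and the bridge/non-trivial dichotomy directly; both rest on the same standard facts already recorded in the preliminaries.
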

\begin{proof}
   Seeking a contradiction, let us assume that $e$ is not a bridge. Then $e$ is contained in some $C$ which is a cycle in $G$. There is some non-trivial block $Z$ of $G$ containing $C$. Then $v\in e$ and $e\in C \subseteq Z$ imply $v\in Z$. But we have assumed $v\in A$, so this is in contradiction to Condition \ref{itm:condition1}. This shows that the assumption was wrong and thus completes the proof.
\end{proof}

This lemma has the following immediate consequence:

\begin{lemma}\label{lem:neighbor_of_component}
     Let $H$ be a graph and let $G$, $A$, $B$ and $\mathfrak{b}$ such that $H$ is represented by $(G, A, B, \mathfrak{b})$ and such that Condition \ref{itm:condition1} of Definition \ref{def:basis+representation} is fulfilled, i.e., such that $A\cap V(Z)=\emptyset$ for all non-trivial blocks $Z$ of $G$. Let $C$ be a connected component of $G[B]$ and $v\in N_G(V(C))$. Then $v$ is adjacent to exactly one vertex of $C$ in $G$.
\end{lemma}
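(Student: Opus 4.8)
The claim is that if $v \in N_G(V(C))$ for a connected component $C$ of $G[B]$, then $v$ is adjacent in $G$ to exactly one vertex of $C$. First I would observe that, since $v \in N_G(V(C))$, the vertex $v$ lies in $V(G) \setminus V(C) \subseteq V(G) \setminus B$, so $v \in A$ (using $V = A \cup B$ disjointly). The plan is then to argue by contradiction: suppose $v$ is adjacent to two distinct vertices $c_1, c_2 \in V(C)$, giving edges $e_1 = \{v, c_1\}$ and $e_2 = \{v, c_2\}$ in $E(G)$.

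Next I would use the fact that $C$ is connected to produce a path $P$ inside $C$ from $c_1$ to $c_2$; such a path exists and uses only vertices of $B$. Concatenating $e_1$, then $P$, then $e_2$ yields a closed walk through $v$ that visits $v$ exactly once (since all interior vertices lie in $B$ and $v \notin B$) and visits $c_1, c_2$; because $c_1 \neq c_2$ and $P$ is a simple path, this closed walk contains a genuine cycle $C'$ in $G$ passing through $v$ (one can take $P$ to be a shortest path in $C$, which together with $e_1, e_2$ directly forms a cycle as the $c_i$ are distinct and the path avoids $v$). Hence $e_1 \in E(C')$ is an edge on a cycle, so $e_1$ is not a bridge in $G$. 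But $v \in A$ and $v \in e_1 \in E(G)$, so Lemma~\ref{lem:bridges_in_basis} forces $e_1$ to be a bridge in $G$ --- a contradiction. Therefore $v$ is adjacent to at most one vertex of $C$; and by the defining property of $N_G(V(C))$ it is adjacent to at least one, hence to exactly one.

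The only point requiring a little care --- and the main (minor) obstacle --- is the passage from the closed walk $e_1, P, e_2$ to an actual cycle of $G$, i.e. checking that no vertex is repeated: this is immediate once one takes $P$ to be a simple path in $C$ (which exists by connectedness), because $v$ occurs only at the two ends of $e_1, e_2$ and the internal vertices of $P$ all lie in $B$, disjoint from $\{v\}$, while $c_1 \neq c_2$ ensures the walk has length at least three and is not degenerate. Everything else is a direct invocation of Lemma~\ref{lem:bridges_in_basis} and the disjointness $A \cap B = \emptyset$.
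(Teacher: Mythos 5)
Your proof is correct and follows essentially the same route as the paper: take a $c_1$-$c_2$-path inside the connected component $C$, close it up through $v$ to obtain a cycle containing $\{v,c_1\}$, and contradict Lemma~\ref{lem:bridges_in_basis}. One small repair: your justification that $v\in A$ via the inclusion $V(G)\setminus V(C)\subseteq V(G)\setminus B$ is not valid, since $G[B]$ may have other connected components whose vertices lie in $B$ but outside $V(C)$; the correct argument is that if $v$ were in $B$, then being adjacent to a vertex of $C$ would place $v$ in the connected component $C$ of $G[B]$ itself (by maximality of connected components), contradicting $v\in N_G(V(C))\subseteq V(G)\setminus V(C)$ --- hence $v\in A$ and Lemma~\ref{lem:bridges_in_basis} applies.
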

\begin{proof}
    Seeking a contradiction, let us assume there are two different vertices $v_1, v_2\in C$ which are neighbors of $v$ in $G$. As $C$ is connected, there is a $v_1$-$v_2$-path $P$ in $C$. Then $P$ can be extended to $P'$ via $v_1vv_2$ such that $P'$ is a cycle in $G$ which contains $\{v,v_1\}$ and $\{v,v_2\}$. But this contradicts Lemma \ref{lem:bridges_in_basis} as $\{v,v_1\}$ and $\{v,v_2\}$ would not be bridges (as they are part of a cycle). This completes the proof.
\end{proof}

So Condition \ref{itm:condition1} tells us that $\mathfrak{b}$ imposes on $G$ a forest-like structure in the following sense: If we contract all connected components of $G[B]$ in $G$, then we obtain a forest $F$ with a natural bijection between $E(F)$ and $E(G) \setminus E(G[B])$. This already gives a first hint why Definition \ref{def:basis+representation} may generalize tree-like structures as the one described in Definition \ref{def:treebased_hendriksen}. Note that Proposition \ref{prop:treebased_network} verifies this observation as it shows that every strongly tree-based network indeed has a 3-basis. We are now finally in the position to prove this assertion.
 
\begin{proof}[Proof of Proposition \ref{prop:treebased_network}] 
    Let $N$ be a strongly tree-based network and let $T$ be as required by Definition \ref{def:treebased_hendriksen}. As $T$ is a tree, we have $\chi(T) \leq 2$. We now show that $T$ is in fact a 3-basis for $N$, and we need to find a $3$-representation $\mathfrak{b}: A\rightarrow \mathcal{P}(B)$. 
    
    Let $A^{\prime}$ be the set of all vertices in $T$ with degree $2$. (Note that $A'=\emptyset$ is possible.) Definition \ref{def:treebased_hendriksen} tells us that $\deg_N(v) = 3$ for each $v\in A^{\prime}$, which implies for each such $v$ the existence of exactly one edge $e_v\in E(N)\setminus E(T)$ with $v\in e_v$. Furthermore, we claim that every edge in $E(N)\setminus E(T)$ has the form $e_v$ for at least one vertex $v\in A^{\prime}$. To prove this claim we assume that there is some $e=\{x,y\}\in E(N)\setminus E(T)$ with $e\cap A' = \emptyset$. Then $2\notin \{\deg_T(x), \deg_T(y)\}$. If $1\in \{\deg_T(x), \deg_T(y)\}$, say $\deg_T(x) = 1$, then $x$ is a leaf in $T$. As we have assumed $T$ to be a tree as specified by  Definition \ref{def:treebased_hendriksen}, we conclude that $x\in X$ and $\deg_N(x) = 1$. But this implies $e\in E(T)$, which is a contradiction to our assumption. On the other hand, if $\deg_T(x) \geq 3$ and $\deg_T(y) \geq 3$, then, considering $e=\{x,y\} \in E(N)\setminus E(T)$, we must have $\deg_N(x)\geq 4$ and $\deg_N(y) \geq 4$ (as $N$ is connected), which is again a contradiction to Definition \ref{def:treebased_hendriksen}.
    
    The above contradictions show that indeed every edge in $E(N)\setminus E(T)$ contains a vertex $v \in A'$. In particular, if $A'=\emptyset$, we must have $E(N)=E(T)$ and thus also $N=T$.

    Next, we define $A$, $B$ and $\mathfrak{b}$, and we subsequently show that $(T,A,B,\mathfrak{b})$ represents $N$. Now, in order to define $\mathfrak{b}$, for each $e\in E(N)\setminus E(T)$ we fix one vertex $v_e \in A^{\prime}\cap e$ 
    (note that there might be two vertices to choose from). Set $A= \{v_e: e\in E(N)\setminus E(T)\}$, set $B= V(N)\setminus A$, and for every $v_e\in A$ set $\mathfrak{b}(v_e)= e\setminus \{v_e\}$. Note that $A\subseteq A'$. 
    
    To prove that $T$, $A$, $B$ and $\mathfrak{b}$ represent 
    $N$, it suffices to show $\mathfrak{b}(a) \subseteq B$ for each $a\in A$. Note that in particular, if $A'=\emptyset$, we also have $A=\emptyset$, so there is nothing to show. So now let $A\neq \emptyset$, and, seeking a contradiction, assume that $a = v_e$ for some $e\in E(N)\setminus E(T)$ and that $e=\{a,a'\}$ with $a' \in A$, i.e., $a' = v_f$ for some $f\in E(N) \setminus E(T)$. As $a'\in A \subseteq A'$, we know $\deg_T(a') = 2$. But $a'$ is also an endpoint of $e$ and $f$ which are both in $E(N)\setminus E(T)$. This implies $\deg_N(a') \geq 2+2=4$ (as $a'$ is incident with precisely two edges in $T$ and with at least two edges in $N\setminus T$),  which contradicts Definition \ref{def:treebased_hendriksen}, which states that $\deg_N(a')=3$. Thus, we know that $(T, A, B, \mathfrak{b})$ represents $N$. 
    
    Now it remains to show that $T$ is a 3-basis for $N$. Therefore, we need to check the conditions of Definition \ref{def:basis+representation}.
    
    As $T$ is a tree, it does not contain any non-trivial block, so Condition \ref{itm:condition1} of Definition \ref{def:basis+representation} holds trivially. As $A\subseteq A'$, it is true that $\deg_G(v) = 2$ for all $v\in A$, which shows that Condition \ref{itm:condition2} holds. Finally, we note that by definition we have $|\mathfrak{b}(v)| = 1$ for all $v\in A$, so we have $\deg_T(v)+|\mathfrak{b}(v)| = 2+1=3$ for all $v\in A$. This shows that $T$ is a $3$-basis for $N$ with $\chi(T)\leq 2$, which completes the proof.
    \end{proof}
    
\subsection{Proof of our main result}

The main aim of this section is to prove Theorem \ref{thm:main}. Note that if the set $A\subseteq V$ defining the $k$-basis $G$ of $H$  in the theorem is empty, we have $G=H$. In this case, the statement of the theorem is obvious (as clearly we have $\chi(G)=\chi(H)$ in this case). Therefore, the remainder of this paper will be devoted to a proof of Theorem \ref{thm:main} for the case $A\neq \emptyset$. 

The main idea of the proof is as follows: For each graph $H$ with $k$-basis $G$, $k$-representation $\mathfrak{b}$ as well as sets $A$ and $B$ as specified by Definition \ref{def:basis+representation}, we find a graph $H^-$ with $k$-basis $G^-$ and $k$-representation $\mathfrak{b}^-$ with $|A^-| < |A|$ such that $\chi(H^-) \leq k$ implies $\chi(H) \leq k$. Section \ref{sec:reduction} is concerned with the description of these reductions. Subsequently, Section \ref{sec:proof} will show how to use these reductions to prove Theorem \ref{thm:main} inductively.

\subsubsection{Reduction steps}
\label{sec:reduction}
In this section, we describe the reduction steps needed for our inductive proof of Theorem \ref{thm:main}. The idea is based on Zykov operations as introduced earlier. These operations will be an elementary part of the more complex constructions which we  describe in the following pages. Roughly speaking, the idea is that, given a graph as in Theorem \ref{thm:main}, we can reduce this graph by using appropriate Zykov operations to a smaller graph of a similar structure. 

We start with the following lemma, which describes a construction which shows that -- given some graph $H$ with $k$-basis $G$, $k$-representation $\mathfrak{b}: A\rightarrow \mathcal{P}(B)$ and $\chi(G)\leq k$ -- some additional assumptions regarding the structure of $G[B]$ can be made. Later on, we will in particular use Lemma \ref{lem:shrinking} to show that we can assume without loss of generality that each connected component of $G[B]$ is a clique with at most $k$ elements.

\begin{figure}
    \centering
\includegraphics[scale=0.9]{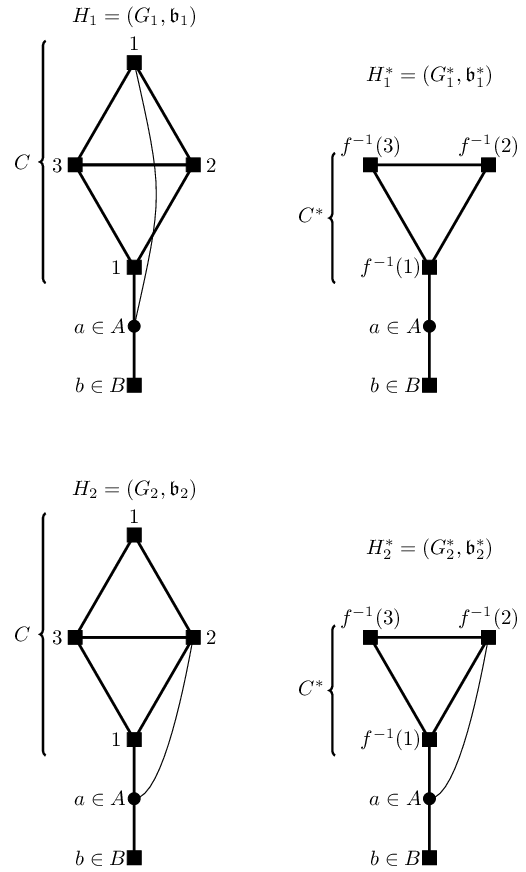}
\caption{Two examples which illustrate the construction of $\mathfrak{b}^*(a)$ as formally defined in Equation \eqref{eq:def_repr}. Each example consists of a pair $H_i, H^*_i (i=1,2)$ such that $H^*_i$ is obtained from $H_i$ by the construction which is described in Lemma \ref{lem:shrinking}. $H_i$ has $3$-basis $G_i$ and $3$-representation $\mathfrak{b}_i$, $H^*_i$ has $3$-basis $G^*_i$ and $3$-representation $\mathfrak{b}^*_i$. The edges belonging to the respective $3$-basis are depicted with thick lines in each graph, whereas the edges which are not contained in the $3$-basis are thin. For example, $H^*_2$ contains exactly one edge which is not contained in $G^*_2$, hence the graphical representation of $H^*_2$ contains exactly one thin line. Note that the construction of $G^*_i$ (for $i=1,2$) requires identifying vertices with identical colors and there is exactly one such pair in each example. Thus, here the modification effectively leads to the deletion of a single vertex in each example. Moreover, note that in the first example,  we have $\mathfrak{b}_1^*(a) \neq \emptyset$. Indeed, as $H^*_1$ is the graph represented by $G^*_1$ and $\mathfrak{b}^*_1$, by Definition \ref{def:basis+representation}, we have $E(H^*_1) = E(G^*_1) \cup \{\{a, v\}: v\in \mathfrak{b}^*_1(a)\}$. But in this case, all edges of the form $\{a,v\}$ with $v\in \mathfrak{b}^*_1(a)$ (of which there is only one) are already contained in $E(G^*_1)$, hence $H^*_1 = G^*_1$. On the other hand, as we have already noted, $H^*_2$ contains an additional edge with endpoint $a$ which is not already contained in $G^*_2$. This is due to the fact that the additional edge is not connected with one of the vertices colored in $G_2$ by color $1$ and in this example only vertices with color $1$ are identified while the other vertices are de facto left unmodified (despite being formally replaced by other vertices).} 
\label{fig:shrinking2}
\end{figure}

\begin{lemma}\label{lem:shrinking}
    Let $k \in \mathbb{N}_{\geq 2}$ and let $G$ be a connected graph with $\chi(G) \leq k$. Let $H$ be a graph with $k$-basis $G$ (and corresponding sets $A$, $B$ as in Definition \ref{def:basis+representation})  and $k$-representation $\mathfrak{b}: A\rightarrow \mathcal{P}(B)$. Let $C$ be a connected component of $G[B]$. Then there is a graph $H^{\ast}$ with $k$-basis $G^{\ast}$ and $k$-representation $\mathfrak{b}^{\ast}: A^{\ast} \rightarrow \mathcal{P}(B^{\ast})$ and there exists a set $W^\ast\subseteq V(G^\ast)$ such that the following statements hold, where we define  $C^\ast:=G^\ast[W^\ast]$:
\begin{enumerate}[label=(\roman*)]
    \item \label{itm:first} $V(G^{\ast}) = (V(G) \setminus V(C)) \cup W^{\ast}$ and  $W^\ast\cap V(C)=\emptyset$. 
    \item \label{itm:third} $G[V(G) \setminus V(C)] = G^\ast[V(G^\ast) \setminus V(C^\ast)]$.
    \item \label{itm:fourth} $C^{\ast}$ is a clique with at most $k$ vertices.
    \item \label{itm:seventh} If $G$ is connected, then $G^*$ is also connected.
    \item \label{itm:fifth} $A^{\ast} = A$, and for all $a\in A$ we have $\mathfrak{b}(a) \setminus V(C) = \mathfrak{b}^{\ast}(a)  \setminus V(C^{\ast})$.
    \item \label{itm:second} $C^{\ast}$ is a connected component of $G^{\ast}[B^{\ast}]$.
    \item \label{itm:sixth} $\chi(H) \leq \chi(H^{\ast})$.
\end{enumerate}
\end{lemma}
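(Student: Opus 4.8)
The idea is to take the connected component $C$ of $G[B]$ and replace it by a clique $C^\ast$ on at most $k$ vertices, carrying all the relevant structural information (neighbors in $G$, membership in the $\mathfrak b$-sets) along. Since $\chi(G)\le k$, fix a proper $k$-coloring $c\colon V(G)\to\{1,\dots,k\}$. The crucial observation is that $c$ restricted to $V(C)$ uses at most $k$ colors, so it induces a partition of $V(C)$ into at most $k$ nonempty color classes. The plan is to collapse each color class of $C$ to a single vertex by repeatedly applying Zykov identifications to pairs of equally-colored vertices of $C$. Concretely, I would build $G^\ast$ from $G$ by successively identifying pairs $x,y\in V(C)$ with $c(x)=c(y)$ (these are non-adjacent since $c$ is proper, so the identification $G''_{xy}$ is well-defined) until no two vertices of the current image of $C$ share a color; let $W^\ast$ be the set of vertices thus obtained (one per color class of $C$), and set $C^\ast:=G^\ast[W^\ast]$. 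The set $A$ is untouched by these identifications because $A\cap V(C)=\emptyset$ (recall $C\subseteq G[B]$ and $A\cap B=\emptyset$), which immediately gives $A^\ast=A$ and item \ref{itm:fifth} once we define $\mathfrak b^\ast$.

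\textbf{Carrying out the steps.} First I would verify \ref{itm:first}: the identifications only merge vertices inside $V(C)$, so $V(G^\ast)=(V(G)\setminus V(C))\cup W^\ast$ with $W^\ast$ disjoint from $V(C)$ (the new merged vertices $v_{xy}$ are fresh). Item \ref{itm:third} follows because no identification touches a vertex outside $C$ and no edge between two vertices outside $C$ is created or destroyed: the only new edges are of the form $\{v_{xy},z\}$ inherited from $\{x,z\}$ or $\{y,z\}$ with $x,y\in V(C)$. Item \ref{itm:fourth}: after the process, $C^\ast$ has at most $k$ vertices (one per color class used on $C$); moreover it is a clique — this is the point where I would invoke Lemma \ref{lem:neighbor_of_component} together with connectivity of $C$. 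Indeed, $C$ is connected, and after collapsing color classes, any two distinct merged vertices must be adjacent: if two color classes of $C$ were not joined by an edge in $C^\ast$, one shows using connectivity of $C$ that one could still reach one from the other, but the key structural fact is that in a connected graph properly colored, collapsing color classes yields a complete graph precisely when... — actually more carefully, I would argue that a connected graph $C$ properly colored with its color classes, once each class is contracted, becomes a graph on the color-class-vertices which is connected, and then repeatedly identify any two \emph{non-adjacent} vertices among these (they necessarily have different "colors" only formally now); the cleanest route is: whenever $C^\ast$ is not a clique it has two non-adjacent vertices, apply another Zykov identification, decreasing $|W^\ast|$; this terminates with a clique, and since we only ever identify, $|W^\ast|\le k$ still holds (it only decreases). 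Then item \ref{itm:seventh} is immediate since identifications preserve connectivity. For \ref{itm:second}, I must check $C^\ast$ is a full connected component of $G^\ast[B^\ast]$: set $B^\ast := (B\setminus V(C))\cup W^\ast$; the vertices of $W^\ast$ have no neighbors in $B^\ast\setminus W^\ast$ because each original vertex of $C$ had its $B$-neighbors inside $C$ (as $C$ was a component of $G[B]$), and identification does not create new $B$–$B$ edges outside. For \ref{itm:fifth}, define $\mathfrak b^\ast(a)$ for $a\in A$ by replacing, in $\mathfrak b(a)$, any element of $V(C)$ by the merged vertex of $W^\ast$ it landed in (and keeping all elements outside $V(C)$); then $\mathfrak b(a)\setminus V(C)=\mathfrak b^\ast(a)\setminus V(C^\ast)$ by construction, and one checks $(G^\ast,A,B^\ast,\mathfrak b^\ast)$ represents the graph $H^\ast$ defined by $H^\ast := (G^\ast)''$-image of $H$ under the same sequence of identifications (applied to $H$, not just $G$) — here one must note that the identifications applied to $H$ use the \emph{same} pairs $x,y$, which are non-adjacent in $H$ as well since $c$ extends to a proper coloring of all of $G$ and the edges of $H$ not in $G$ go from $A$ to $B$, hence never join two vertices of $C$.

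\textbf{The $k$-basis conditions and item \ref{itm:sixth}.} I would then confirm $G^\ast$ is a $k$-basis of $H^\ast$ by checking the three conditions of Definition \ref{def:basis+representation}. Condition \ref{itm:condition2} and \ref{itm:condition3} hold verbatim: $\deg_{G^\ast}(a)=\deg_G(a)$ and $|\mathfrak b^\ast(a)|\le|\mathfrak b(a)|$ for $a\in A$ (collapsing can only merge, never add, targets), using that the unique-neighbor property from Lemma \ref{lem:neighbor_of_component} guarantees $a$ had at most one neighbor in $C$, so $\deg_{G^\ast}(a)$ does not drop either — more precisely each $a\in A$ adjacent to $C$ has exactly one neighbor there, so that edge survives as a single edge. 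Condition \ref{itm:condition1}: the non-trivial blocks of $G^\ast$ — I'd argue that identifying two vertices $x,y$ of $C$ that lie in a common block of $G$ keeps $A$ out of blocks, because $A\cap V(C)=\emptyset$ and the merged vertex $v_{xy}$ lies in $B^\ast$; no block of $G^\ast$ can contain an $a\in A$ since by Lemma \ref{lem:bridges_in_basis} every edge at $a$ was a bridge in $G$ and stays a bridge after identifications within $C$ (identification within a subtree-hanging component cannot create a cycle through $a$). Finally, item \ref{itm:sixth}, $\chi(H)\le\chi(H^\ast)$: this is the heart, and it follows from iterated application of Lemma \ref{lem:zykov2} (or Lemma \ref{lem:zykov}). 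Each step replaces a graph by its Zykov identification $({\cdot})''_{xy}$; since $\chi(({\cdot})''_{xy})\ge\chi(\cdot)$ always (a proper coloring of the identification pulls back to a proper coloring via $\widehat f$, by Lemma \ref{lem:zykov2}), composing all the steps gives $\chi(H)\le\chi(H^\ast)$. \textbf{The main obstacle} I anticipate is the bookkeeping in item \ref{itm:fifth} and the representation claim: one must be careful that the Zykov identifications are performed consistently on $H$ and on $G$ (same pairs, always non-adjacent in both), that $W^\ast$ genuinely has size $\le k$ after also forcing $C^\ast$ to be a clique, and that no $\mathfrak b^\ast(a)$ accidentally picks up a vertex of $C^\ast$ as a "new" edge in a way that breaks $\deg_{G^\ast}(a)+|\mathfrak b^\ast(a)|\le k$ — which is safe precisely because of the unique-neighbor lemma, but it needs to be spelled out.
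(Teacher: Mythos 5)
Your proposal is correct and follows essentially the same route as the paper's proof: collapse the color classes of $C$ via Zykov identifications, show the resulting $C^\ast$ is a clique on at most $k$ vertices, push $\mathfrak{b}$ forward to $\mathfrak{b}^\ast$, verify the three basis conditions using Lemmas \ref{lem:bridges_in_basis} and \ref{lem:neighbor_of_component}, and pull a coloring of $H^\ast$ back to $H$. The only minor difference is that the paper starts from a \emph{minimal} coloring of $C$ (so the clique property follows at once from minimality, instead of from your extra round of identifications), which does not change the substance of the argument.
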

In Figure \ref{fig:shrinking2} we give two examples for the procedure which is described in Lemma \ref{lem:shrinking}. Each graph $H_i\ (i=1,2)$ has a $3$-representation $(G_i, \mathfrak{b}_i)$ with $\chi(G_i) = 3$. Moreover, in both cases $G_i[B]$ is obtained from $G_i$ by deleting $a$. Hence, in both cases $C$ is a connected component of $G[B]$, and in both examples the conditions of Lemma \ref{lem:shrinking} are satisfied. In each case, $H_i^*\ (i=1,2)$ results from the application of Lemma \ref{lem:shrinking} to $H_i$, by which $C$ is replaced with $C^*$.

\begin{proof}
We prove the statement by providing an explicit construction. Before we construct $H^\ast$, $A^\ast$, $B^\ast$ and $\mathfrak{b}^\ast$, we construct $G^\ast$  and $W^\ast$. Then, we construct $A^\ast$, $B^\ast$ and $\mathfrak{b}^\ast$ before explicitly stating $H^\ast$. This three step approach enables us to prove some of the statements first and to use them subsequently to prove the remaining statements.   \begin{itemize} 
\item We start by explaining our construction of $G^\ast$ and $W^\ast$. In the following, let $W = V(C)$. We consider some minimal $r$-coloring $g: W \rightarrow \{1, \ldots, r\}$ of $C$, i.e. $r = \chi(C)$ (for an example see the numbered vertices of $C$ in Figure \ref{fig:shrinking2}). Clearly, $r\leq \min(k, |W|)$ because $\chi(C)\leq \chi(G) \leq k$. From the minimality of $r$ it follows that $g$ is surjective (otherwise, we could delete the unused colors and proceed with a smaller value of $r$). Now we generate a graph $G^*$ by repeating the following operation on $G$: Start with $G^*=G$ and $C^*=C$. As long as there is a color $i\in \{1,\ldots,r\}$ that occurs several times in $C^*$, identify all vertices of color class $g^{-1}(i)$ for $i\in \{1, \ldots, r\}$ (which are non-adjacent) in $C^*$ (i.e., apply the second Zykov operation defined in Definition \ref{def:zykov}), in order to derive the updated version of $C^*$ and thus also the updated version of $G^*$. Note that by this process, the vertices of $C$ in $G$ are replaced by $r$ new vertices corresponding to the color classes of $g$, resulting in graph $G^*$. 

So we can assume that $V(G^*) = (V(G) \setminus V(C)) \cup \{g^{-1}(i): i\in \{1, \ldots, r\}\}$ such that each vertex in $V(G^*)\setminus V(G)$ is identical to a color class of $g$. However, note that at this stage of our construction all vertices of $G^*$ are uncolored, i.e., $g$ is only used for constructing $G^*$, but not for coloring it. Later we will show how to use the reduction from $G$ to $G^*$ to find for each coloring $f^*$ of $G^*$ (which may be completely unrelated to $g$) a corresponding coloring of $G$ which uses the same number of colors as $f^*$.

Now, let $W^{\ast} = \{g^{-1}(i): i\in \{1, \ldots, r\}\} = V(G^*)\setminus V(G)$. By definition of $r$ it is clear that $W^{\ast}$ contains at most $k$ elements. 

\item We can now already show that statement \ref{itm:first} of Lemma \ref{lem:shrinking} holds for our choice of $G^\ast$ and $W^\ast$. Recall that we have $V(G^\ast)=(V(G)\setminus V(C)) \cup W^\ast$, which shows the first part of \ref{itm:first}. For the second part, recall that $W=V(C)$ and note that $W\cap W^\ast=\emptyset$ (as all vertices of $W$ were removed to form the vertices of $W^\ast$). This completes the proof of Part \ref{itm:first}. 

\item Part \ref{itm:third} also holds for our choice of $G^\ast$ and $C^\ast$ as the Zykov operations we performed left $G[V(G)\setminus W]=G[V(G)\setminus V(C)]$ unchanged (we only changed vertices of $W$ to form $W^\ast$). Thus, \ref{itm:third} obviously holds, too. 

\item We now claim hat $W^{\ast}$ is a clique. Assume this is not the case. Then there exist  $v,w\in C^{\ast}$ with $v=g^{-1}(i)$, $w = g^{-1}(j)$ and $ i\neq j$ such that $\{v,w\}\notin E(C^*)$. As $v$ and $w$ were constructed using Zykov operations by identifying all vertices in $g^{-1}(i)$ or $g^{-1}(j)$, respectively, it follows that $E_G(g^{-1}(i), g^{-1}(j)) = \emptyset$. But then there is a coloring $h:V(C)\rightarrow \{1,\ldots, r\}\setminus \{j\}$ of $C$ with $h^{-1}(m) = g^{-1}(m)$ for $m\in \{1, \ldots, r\}\setminus \{i,j\}$ and $h^{-1}(i) = g^{-1}(i)\cup g^{-1}(j)$. As $h$ employs only  $r-1$ colors, this is a contradiction to the minimality of $g$. Hence, $W^{\ast}$ is a clique, which shows \ref{itm:fourth}.

\item Let $D_1, \ldots, D_m$ $(m\geq 1)$ be the connected components of $G[V(G)\setminus V(C)]$ and $V_i = V(D_i)$ be the vertex set of $D_i$. We note that $V(C) \cup \bigcup_{i=1}^m V_i$ is a partition of $V(G)$. By \ref{itm:third} it is also true that $\bigcup_{i=1}^m V_i$ is a partition of $V(G^\ast) \setminus V(C^*)]$. Hence, $V(C^*) \cup \bigcup_{i=1}^m V_i$ is a partition of $V(G^*)$. Next, we note that connectedness is an equivalence relation, i.e., if $x$ is connected to $y$ by a path and $y$ is connected to $z$ by a path, then $x$ is also connected to $z$ by a path. By \ref{itm:fourth} we know that $W^*$ is a clique. Thus, $C^*$ is a connected subgraph of $G^*$. Moreover, by \ref{itm:third} we conclude that $D_i = G[V_i] = G^*[V_i]$ for each $i\in \{1, \ldots, m\}$. By definition of $D_i$ it follows that $G^*[V_i]$ is a connected subgraph for each $i\in \{1, \ldots, m\}$. It remains to show that each $G^*[V_i]$ is connected to $C^*$ in $G^*$. In other words, we want to show that $E(V(C^*), V_i) \neq \emptyset$ for each $i\in \{1,\ldots, m\}$. However, this follows immediately if we recall that $C^*$ has been obtained from $C$ by identifying certain vertices of $C$. Hence, a vertex in $V(G)\setminus V(C)$ is contained in $N_G(V(C))$  if and only if it is contained in $N_{G^*}(V(C^*))$. This implies that $E_G(V(C), V_i)\neq \emptyset$ if and only if $E_{G^*}(V(C^*), V_i) \neq \emptyset$ for each $i\in \{1, \ldots, m\}$. As we have assumed $G$ to be connected, it follows by definition of $D_i$ that $E_G(V(C), V_i) \neq \emptyset$ and hence $E_{G^*}(V(C^*), V_i)\neq \emptyset$ for $i\in \{1,\ldots, m\}$. This completes the proof of \ref{itm:seventh}.

\item We still need  to construct $H^{\ast}$, $ A^{\ast}$, $B^{\ast}$, and $\mathfrak{b}^{\ast}: A^{\ast} \rightarrow \mathcal{P}(B^{\ast})$ in such a way that $(G^{\ast}, A^{\ast}, B^{\ast}, \mathfrak{b}^{\ast})$ represents $H^{\ast}$, $G^{\ast}$ is a $k$-basis $G^{\ast}$ and $\mathfrak{b}^{\ast}: A^{\ast} \rightarrow \mathcal{P}(B^{\ast})$ is a $k$-representation and such that additionally, statements \ref{itm:fifth}- \ref{itm:sixth} hold. We start with  $A^\ast$, $B^\ast$, and $\mathfrak{b}^\ast$ as follows (note that by Part \ref{itm:represents} of Definition \ref{def:basis+representation}, this construction suffices in order to implicitly define a graph $H^{\ast}$): 

\begin{itemize}
    \item We begin with the definition of $A^{\ast}$ and set $A^{\ast}:=A$. As by  assumption $\mathfrak{b}: A\rightarrow \mathcal{P}(B)$ is a $k$-representation, we have $A\cap B = \emptyset$. Using $W = V(C)$ and the fact that $C$ is a connected component of $G[B]$, we conclude $A \cap W = \emptyset$. Hence, $A \subseteq V(G)\setminus W\subseteq V(G^{\ast})$. In particular, $A^\ast=A$ then also implies $A^{\ast}\subseteq V(G^{\ast})$.
    \item Next, we define $B^{\ast} := V(G^{\ast}) \setminus A^{\ast}$.
    \item Next, we define $\mathfrak{b}^{\ast}:A^\ast\rightarrow \mathcal{P}(B^\ast)$ as follows: Let $a\in A^{\ast}$ and define 
    \begin{equation}\mathfrak{b}^{\ast}(a) := \left(\mathfrak{b}(a) \setminus W\right) \cup \{g^{-1}(i): i\in g(\mathfrak{b}(a)\cap W)\}. \label{eq:def_repr}
    \end{equation} Note that the definition of $\mathfrak{b}^{\ast}$ is illustrated by Figure \ref{fig:shrinking2}. In other words, for $\mathfrak{b}^\ast(a)$, we take $\mathfrak{b}(a)$ and first remove all vertices of $W$ (as these are not in $W^\ast$), but then add those vertices of $W^\ast$ that correspond to the color class of at least one of the vertices in $\mathfrak{b}(a)\cap W$.
\end{itemize}

\item Note that the above  definition of $\mathfrak{b}^\ast$ immediately implies \ref{itm:fifth}, as $\mathfrak{b}^{\ast}(a) \setminus W^{\ast} = \mathfrak{b}^{\ast}(a) \setminus \{g^{-1}(i): i\in \{1, \ldots, r\}\} = \mathfrak{b}(a) \setminus W.$ 

\item We next want to show \ref{itm:second}, i.e., we want to show that $C^{\ast} = G^{\ast}[W^{\ast}]$ is a connected component of $G^{\ast}[B^{\ast}]$. In order to do so, from $A^{\ast} = A\subseteq V(G)\setminus W = V(G^{\ast}) \setminus W^{\ast}$, where the last equality holds because of \ref{itm:third}, and $B^{\ast} = V(G^{\ast}) \setminus A^{\ast}$, we conclude $W^{\ast} \subseteq B^{\ast}$. So for \ref{itm:second}, note that in order to show that $C^{\ast} = G^{\ast}[W^{\ast}]$ is a connected component of $G^{\ast}[B^{\ast}]$, it suffices to show $N_{G^{\ast}}(W^{\ast}) \subseteq A^{\ast}$. Also note that the Zykov operations used in the construction of $G^{\ast}$ ensure that $N_G(W) = N_{G^{\ast}}(W^{\ast})$. So from $N_G(W) \subseteq A$ (as $C$ is a connected component of $G[B]$), $N_G(W) = N_{G^{\ast}}(W^{\ast})$ and $A^{\ast} = A$ together we conclude $N_{G^{\ast}}(W^{\ast}) \subseteq A^{\ast}$. Thus,  \ref{itm:second} holds.

\item By Part \ref{itm:represents} of Definition \ref{def:basis+representation} the quadruple $(G^*, A^*, B^*, \mathfrak{b}^*)$ defines a graph $H^{\ast}$. Before we can show \ref{itm:sixth} we have to show that $\mathfrak{b}^*$ is indeed a $k$-representation of $H^*$.

\begin{enumerate} \item At first we check Condition \ref{itm:condition1} of Definition \ref{def:basis+representation}, i.e., we check if $A^{\ast} \cap V(Z) = \emptyset$ for each non-trivial block $Z$ in $G^{\ast}$. So let $Z$ be such a block. We now consider several cases.

\begin{itemize} \item If $V(Z) \cap W^\ast = \emptyset$, then we conclude with the help of \ref{itm:third} that $Z$ is a subgraph of $G$. In this case $Z$ is contained in a non-trivial block of $G$. By assumption Condition \ref{itm:condition1} is true for $(G, A, B, \mathfrak{b})$, implying $A \cap V(Z) = \emptyset$. This, together with $A^\ast = A$, implies $A^\ast \cap V(Z) = \emptyset$ as desired. 

\item Next, we consider the case that $V(Z) \subseteq W^{\ast}$. Note that this implies that $Z$ is a subgraph of $C^\ast$. Then we conclude from \ref{itm:second} that $V(Z) \subseteq B^{\ast}$, which again implies $A^\ast \cap V(Z) = \emptyset$ as desired.

 \item Last, we consider the case in which $V(Z) \cap W^\ast \neq \emptyset$ and $V(Z) \not\subseteq W^\ast$. Our aim is to show that this case cannot happen. Anyhow, in this case, $Z$ contains at least one vertex of $W^\ast$ and at least one vertex that is not contained in $W^\ast$. We now argue that, as $Z$ is a non-trivial block and as $V(Z)\not\subseteq W^{\ast}$, this must imply that there must be at least two edges $e_1, e_2$ $\in E(Z)$ which are elements in $E_{G^\ast}(W^\ast, V(G^\ast)\setminus W^\ast)$ and which even belong to a cycle $D$ of $Z$. To see this, first note that $E(Z) \cap E_{G^\ast}(W^\ast, V(G^\ast)\setminus W^\ast)$ cannot be empty as we have at least one vertex $z\in V(Z)\cap W^\ast\subseteq V(Z)$ and one vertex $z' \in V(G^\ast)\setminus W^\ast\subseteq V(Z)$, so (as $Z$ is connected) there must be a path from $z$ to $z'$ containing an edge from $E(Z) \cap E_{G^\ast}(W^\ast, V(G^\ast)\setminus W^\ast)$. Now assume that $E(Z) \cap E_{G^\ast}(W^\ast, V(G^\ast)\setminus W^\ast)$ contains only one element, say $e_1$. As $Z$ is non-trivial, $E(Z)$ contains at least one more edge $f\neq e_1$ as well as a cycle $D$ containing both $e_1$ and $f$. So this cycle $D$ contains at least one element of $W^{\ast}$ and at least one element of $V(G^{\ast})\setminus W^{\ast}$ (as $e_1\in E(D)$), which implies that in fact it must contain \emph{two} edges connecting $W^{\ast}$ and $V(G^{\ast})\setminus W^{\ast}$, so there must be another such edge $e_2$. Thus, the assumption was wrong, which is why we may conclude $|E(D) \cap  E_{G^\ast}(W^\ast, V(G^\ast)\setminus W^\ast)|\geq 2$ as desired. 
 We may assume that $e_i = \{v_i, w_i\}$ with $v_i\in W^\ast, w_i \notin W^\ast$ for $i=1,2$ (note that $v_1=v_2$ is possible). Furthermore, we may assume that there is a segment $D'$ of $D$ such that $w_1, w_2$ are the endpoints of $D'$ and $D'$ is a subgraph of $G^\ast[V(G^\ast) \setminus W^\ast]$ (note that this segment is -- of the two paths connecting $w_1$ and $w_2$ in $D$ -- the path \emph{not} containing $v_1$ and $v_2$). By \ref{itm:third}, we conclude that $D'$ is contained in $G[V(G) \setminus W]$. As $v_1$ and $v_2$ are obtained by the Zykov operations of contracting non-adjacent vertices from $W$, we conclude that there are some $v'_1, v'_2 \in V(C)$ with $e'_i = \{v'_i, w_i\}\in E_G(W, V(G) \setminus W)$ for $i=1,2$ (again, note that $v_1'=v_2'$ is possible). 

However, by assumption  $C$ is a connected component of $G[B]$, from which several conclusions can be drawn. First, as $C$ is connected, there is a $v'_1$-$v'_2$-path $P$ in $C$, from which -- together with $e'_1$, $e'_2$ and $D'$ -- we obtain a cycle $D''$ in $G$ which contains $e'_1$ and  $e'_2$. On the other hand, using Lemma \ref{lem:bridges_in_basis}, we conclude that $e'_1, e'_2$ are bridges in $G$, which is a contradiction, because $e'_1, e'_2$ are contained in a cycle in $G$. This contradiction shows that the last case cannot happen. 
\end{itemize}

So we conclude that in all cases that can actually happen, we have $A^\ast \cap V(Z)=\emptyset$, which shows that Condition \ref{itm:condition1} is  true for $(G^\ast, A^\ast, B^\ast, \mathfrak{b}^\ast)$.

\item In order to check the remaining Conditions \ref{itm:condition2} and \ref{itm:condition3} of Definition \ref{def:basis+representation} for $(G^\ast, A^\ast, B^\ast, \mathfrak{b}^{\ast})$, we have to show that we have $\deg_{G^{\ast}}(a) \geq 2$ and $\deg_{G^{\ast}}(a) + |\mathfrak{b}^{\ast}(a)| \leq k$ for all $a \in A^\ast$.

By construction of $G^{\ast}$, it is true that $\deg_{G^{\ast}}(x) = \deg_G(x)$ for each $x\in V(G^{\ast}) \setminus W^{\ast} = V(G) \setminus W$, which is simply a consequence of Lemma \ref{lem:neighbor_of_component}. In particular, this implies that  $\deg_{G^{\ast}}(a) = \deg_G(a)$ is true for $a\in A^{\ast}$, because  we know from \ref{itm:second} that $A^{\ast} \subseteq V(G^{\ast}) \setminus W^{\ast}$. By Definition \ref{def:basis+representation} and our assumption that $\mathfrak{b}$ is a $k$-representation, we know that $\deg_G(a) \geq 2$ for $a\in A$. Additionally, by  \ref{itm:fifth} we know $A = A^{\ast}$. Together these facts imply $\deg_{G^{\ast}}(a) \geq 2$ for $a\in A^{\ast}$, which is the first inequality.

Now we show the second inequality. By Condition \ref{itm:condition3} for $(G, A, B, \mathfrak{b})$ we know that $\deg_G(a) + |\mathfrak{b}(a)| \leq k$ for every $a\in A$. Furthermore, by \ref{itm:fifth}, we know that $\mathfrak{b}(a) \setminus W = \mathfrak{b}^{\ast}(a) \setminus W^{\ast}$. As we have already shown  that $\deg_{G^{\ast}}(a) = \deg_G(a)$, it only remains to show that $|\mathfrak{b}^{\ast}(a)| \leq |\mathfrak{b}(a)|$, and thus, by $|\mathfrak{b}^{\ast}(a)|= |\mathfrak{b}^{\ast}(a) \setminus W^\ast|+|\mathfrak{b}^{\ast}(a) \cap W^\ast|$ and $|\mathfrak{b}(a)|= |\mathfrak{b}(a) \setminus W|+|\mathfrak{b}(a) \cap W|$, we only need to show $|\mathfrak{b}^{\ast}(a) \cap W^\ast|\leq |\mathfrak{b}(a) \cap W|$. Using the definition of $\mathfrak{b}^{\ast}$, we conclude this inequality from $\mathfrak{b}^{\ast}(a)\cap W^{\ast} = \{g^{-1}(i)\in W^{\ast}: i\in g(\mathfrak{b}(a)\cap W)\} = \{g^{-1}(g(v))\in W^{\ast}: v \in \mathfrak{b}(a)\cap W\}$ for $a\in A$. 
\end{enumerate}
In summary, we can conclude that $\mathfrak{b}^{\ast}$ is indeed a $k$-representation. This, in turn, finally allows us to define $H^\ast$ such that Conditions \ref{itm:first} - \ref{itm:second} are fulfilled as shown above: Simply let  $H^{\ast}$ be the graph defined by $(G^{\ast}, A^{\ast}, B^{\ast}, \mathfrak{b}^{\ast})$, i.e. $V(H^{\ast})= V(G^{\ast})$ and $E(H^{\ast})= E(G^{\ast}) \cup \{(a, b): a\in A^{\ast}, b\in \mathfrak{b}^{\ast}(a)\}$. 

\item It remains to show \ref{itm:sixth}, i.e., we need to prove that $\chi(H^{\ast}) \geq \chi(H)$. Let $l = \chi(H^{\ast})$ and let $f^{\ast}: V(G^{\ast}) \rightarrow \{1, \ldots, l\}$ be some $l$-coloring of $H^{\ast}$. Our aim now is to construct a function $f: V(G) \rightarrow \{1, \ldots, l\}$ which is an $l$-coloring of $H$, which will prove the desired result. 

We construct $f: V(G) \rightarrow \{1, \ldots, l\}$ in the following way:

\begin{equation*} 
    f(v)=\begin{cases} f^{\ast}(v) & \text{if $v\in V(G) \setminus V(C)$},\\ f^{\ast}(g^{-1}(i))  &\text{if $v\in g^{-1}(i)$.}\end{cases}
\end{equation*}

It remains to show that $f$ is an $l$-coloring of $H$. For $e = \{v,w\} \in E(H)$ we have to show $f(v) \neq f(w)$. We use a case distinction:

\begin{enumerate}
    \item We first consider the case $e \subseteq V(G) \setminus W$. By \ref{itm:third} and \ref{itm:fifth} we know $H[V(G)\setminus W] = H^{\ast}[V(G^{\ast})\setminus W^{\ast}]$. This implies $e\in E(H^{\ast})$. As $f^{\ast}$ is a coloring of $H^{\ast}$, we have $f^{\ast}(v) \neq f^{\ast}(w)$. Then $f(v) \neq f(w)$ follows, because $f(v) = f^{\ast}(v)$ and $f(w) = f^{\ast}(w)$ by definition of $f$.

    \item Next, we consider the case that $e\in E(G)$ and $v\in W$ and $w\in N_G(W)$. By construction of $G^{\ast}$, we know for all $i\in \{1, \ldots, r\}$ that $\{w,g^{-1}(i)\}\in E(G^{\ast})$ if and only if there is some $w'\in g^{-1}(i)$ such that $\{w,w'\} \in E(G)$. Hence the existence of $\{v,w\}$ in $E(G)$ implies $\{g^{-1}(j), w\}\in E(G^{\ast})$ with $j = g(v)$. Now $f^{\ast}(w) \neq f^{\ast}(g^{-1}(j))$ (as $f^{\ast}$ is a coloring) together with $f(w) = f^{\ast}(w)$ and $f(v) = f^{\ast}(g^{-1}(j))$ imply $f(v) \neq f(w)$. Note that the case in which $e\in E(G)$ and $w\in W$ and $v\in N_G(W)$ follows analogously.

    \item Now, consider the case that $e\in E(G)$ and $e\subseteq V(C)=W$. In this case, we have $f(v) = f^{\ast}(g^{-1}(i))$ and $f(w) = f^{\ast}(g^{-1}(j))$ with $i=g(v)$ and $j=g(w)$. We note that $i\neq j$ as $g$ is an $r$-coloring of $C$. By \ref{itm:third} we know that $G^{\ast}[W^{\ast}]$ is a clique. So $\{g^{-1}(i), g^{-1}(j)\}\in E(G^{\ast})$ and $f^{\ast}(g^{-1}(i)) \neq f^{\ast}(g^{-1}(j))$, from which $f(v) \neq f(w)$ follows.

    \item The last case we need to consider is the case in which $e\in E(H) \setminus E(G)$. As $H$ is represented by $(G, A, B, \mathfrak{b})$, this means that $e=\{v,w\}$ for some $v\in A$ and $w\in \mathfrak{b}(v)$. As $C$ is a connected component of $G[B]$, we conclude  $A\subseteq V(G)\setminus W$ and thus $v\in V(G)\setminus W$. Note that we may assume without loss of generality that we are not in the first case (as this has already been covered), i.e., we may assume that $w\in W$. Now let $w^{\ast} = g^{-1}(g(w))\in W^{\ast}$. Then $w^{\ast} \in \mathfrak{b}^{\ast}(v)$ by definition of $\mathfrak{b}^{\ast}$. This implies that $\{v,w^{\ast}\}$ is represented by $\mathfrak{b}^{\ast}$ and thus $\{v,w^{\ast}\} \in E(H^{\ast})$. Hence $f^{\ast}(v) \neq f^{\ast}(w^{\ast})$. Using the definition of $f$ we conclude $f(v) = f^{\ast}(v)$ and $ f(w) = f^{\ast}(w^{\ast})$, and hence $f(v) \neq f(w)$ as desired.
\end{enumerate}
Thus, $f$ is an $l$-coloring of $H$, which shows that we indeed have $\chi(H)\leq \chi(H^\ast)$. \end{itemize}

This shows that our construction of $G^{\ast}, H^{\ast}$ and $\mathfrak{b}: A^{\ast} \rightarrow \mathcal{P}(B^{\ast})$ is such that the statements \ref{itm:first}-\ref{itm:sixth} all hold and thus completes the proof.
\end{proof}

\begin{remark}
    In our proof of Lemma \ref{lem:shrinking} we used only Zykov operations of the second type, i.e.,  identifications of non-adjacent vertices. Alternatively, in the first part of the proof a slightly different construction could be used which has the advantage that it illuminates how the reduction which is described in Lemma \ref{lem:shrinking} is basically an repeated application of \emph{both} Zykov operations described by  Definition \ref{def:zykov}. We constructed $C^*$ with the help of $g$ which is a \emph{minimal} coloring of $G[W]$. Another option would be to proceed without this  minimality assumption and to suppose instead that $g$ needs at most $k$ colors. In this case, the identification of vertices in the same color class would not necessarily lead to a set $W^*$ forming a clique. However, in this case, we could use Zykov operations of the first kind and add additional edges between all pairs of non-adjacent vertices from $W^*$ to ensure that $W^*$ is a clique. By the first part of Lemma \ref{lem:zykov2}, it would be guaranteed that the result of these operations can still be used to derive an upper bound for the chromatic number of $G$. Hence, this alternative construction leads to an alternative (yet slightly more extensive) proof of the statement employing both Zykov operations.    
\end{remark}

Next, we will use Lemma \ref{lem:shrinking} iteratively in order to derive the following result.

\begin{proposition}\label{prop:shrinking}
Let $k \geq 2$ and let $G$ be a connected graph with $\chi(G) \leq k$. Let $H$ be a graph with $k$-basis $G$ and $k$-representation $\mathfrak{b}: A\rightarrow \mathcal{P}(B)$. Then there exist $H'$, $G'$, $A'$, $B'$ and $\mathfrak{b}'$ such that $H'$ is a graph with $k$-basis $G'$ and $k$-representation $\mathfrak{b}': A' \rightarrow \mathcal{P}(B')$ having the following properties:
\begin{itemize}[label=-]
\item $A' =A$,
\item $\chi(H) \leq \chi(H')$.
\item Every connected component of $G'[B']$ is a clique with at most $k$ vertices.
\item If $G$ is connected, then $G'$ is also connected.
\end{itemize}
\end{proposition}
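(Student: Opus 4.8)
The plan is to build $H'$ from $H$ by applying Lemma~\ref{lem:shrinking} once for each connected component of $G[B]$, peeling the components off one after another. Write $C_1,\dots,C_t$ for the connected components of $G[B]$. If $t=0$ (for instance if $B=\emptyset$) there is nothing to prove, so assume $t\geq 1$, and set $(H_0,G_0,A_0,B_0,\mathfrak{b}_0):=(H,G,A,B,\mathfrak{b})$. For $i=1,\dots,t$, let $(H_i,G_i,A_i,B_i,\mathfrak{b}_i)$ be the data produced by Lemma~\ref{lem:shrinking} when applied to the graph $G_{i-1}$ with its $k$-basis data and to the connected component $C_i$ of $G_{i-1}[B_{i-1}]$. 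The final objects are then $(H',G',A',B',\mathfrak{b}'):=(H_t,G_t,A_t,B_t,\mathfrak{b}_t)$.

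I would run an induction on $i$ establishing, for $0\leq i\leq t$, the invariant: (a) $G_i$ is connected and $\chi(G_i)\leq k$; (b) $H_i$ has $k$-basis $G_i$ with $k$-representation $\mathfrak{b}_i$; (c) $A_i=A$ and $\chi(H)\leq\chi(H_i)$; (d) the connected components of $G_i[B_i]$ are exactly $C_1^\ast,\dots,C_i^\ast,C_{i+1},\dots,C_t$, where each $C_j^\ast$ with $j\leq i$ is a clique on at most $k$ vertices and each $C_j$ with $j>i$ equals, as an induced subgraph, the original component of $G[B]$. For $i=0$ this is exactly the hypothesis. For the step from $i-1$ to $i$, the invariant guarantees that Lemma~\ref{lem:shrinking} applies to $G_{i-1}$ and $C_i$, and then part~(b) together with $A_i=A$ in (c) comes from part~\ref{itm:fifth} and from $G_i$ being a $k$-basis of $H_i$; $\chi(H)\leq\chi(H_i)$ follows by chaining the inequality of part~\ref{itm:sixth}; connectivity of $G_i$ follows from part~\ref{itm:seventh}; and parts~\ref{itm:fourth} and~\ref{itm:second} say that the new vertex set $C_i^\ast$ is a clique on at most $k$ vertices that forms a connected component of $G_i[B_i]$.

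The substantive point in the step is that the components $C_j$ with $j>i$ (and the $C_\ell^\ast$ with $\ell<i$ produced earlier) survive unchanged and remain connected components of $G_i[B_i]$. By part~\ref{itm:third}, $G_i[V(G_i)\setminus V(C_i^\ast)]=G_{i-1}[V(G_{i-1})\setminus V(C_i)]$, so each such $C_j$ is unchanged as an induced subgraph; since $C_j$ was a connected component of $G_{i-1}[B_{i-1}]$ we have $N_{G_{i-1}}(V(C_j))\subseteq A_{i-1}=A$, which is disjoint from $V(C_i)$, and the new vertices satisfy $N_{G_i}(V(C_i^\ast))=N_{G_{i-1}}(V(C_i))\subseteq A$ as well; hence no edge of $G_i$ joins $V(C_j)$ to $V(C_i^\ast)$, so $N_{G_i}(V(C_j))=N_{G_{i-1}}(V(C_j))\subseteq A=A_i$ and $C_j$ is still a connected component of $G_i[B_i]$. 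The identity $B_i=(B_{i-1}\setminus V(C_i))\cup V(C_i^\ast)$ then confirms these are all the components. Finally, for part~(a) one must check that $\chi(G_i)\leq k$ persists so that Lemma~\ref{lem:shrinking} keeps applying; by Lemma~\ref{lem:blockcolors} it suffices to bound the chromatic number of each block of $G_i$, and every such block is either a subgraph of $G_{i-1}$ (the blocks avoiding the new vertices $V(C_i^\ast)$, by part~\ref{itm:third}) or a subgraph of the clique $C_i^\ast$ --- the \enquote{straddling} non-trivial blocks being excluded exactly as in the verification of Condition~\ref{itm:condition1} inside the proof of Lemma~\ref{lem:shrinking} --- and in either case it is $k$-colorable since $\chi(G_{i-1})\leq k$ and $k\geq 2$. (Alternatively, by (d) the component shrunk at each step is an induced subgraph of the original $G$, hence has chromatic number at most $\chi(G)\leq k$, which is all that the construction of Lemma~\ref{lem:shrinking} genuinely uses.)

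Evaluating the invariant at $i=t$ gives $(H',G',A',B',\mathfrak{b}')$ with $A'=A$, $\chi(H)\leq\chi(H')$, $G'$ connected, and every connected component of $G'[B']$ a clique on at most $k$ vertices, which is the claim. I expect the main obstacle to be bookkeeping rather than any single hard idea: one must carefully verify in the inductive step that the components not currently being shrunk remain genuine connected components of the new graph (the neighborhood computation above) and that the hypothesis $\chi(G_i)\leq k$ of Lemma~\ref{lem:shrinking} is maintained throughout the iteration.
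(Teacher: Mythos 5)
Your proposal is correct and follows essentially the same route as the paper: iterate Lemma~\ref{lem:shrinking} over the connected components of $G[B]$ one at a time, with the bookkeeping that the components not currently being shrunk survive each step as connected components of the new $G_i[B_i]$. You are in fact slightly more careful than the paper, which never explicitly verifies that the hypothesis $\chi(G_i)\leq k$ of Lemma~\ref{lem:shrinking} persists across iterations; your observation that the component shrunk at step $i$ is an unchanged induced subgraph of the original $G$ (so its chromatic number is at most $\chi(G)\leq k$, which is all the lemma's construction really uses) cleanly settles that point.
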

\begin{proof} 
 Let $m$ denote the number of connected components of $G[B]$ and let $C_1, \ldots, C_m$ denote the corresponding connected components. The proof idea now is to construct sequences $G_0, G_1, \ldots, G_m$ and $H_0, \ldots, H_m$ with $G_0 = G$ and $H_0 = H$ such that for each $i\in \{0, \ldots, m\}$ there is a $k$-representation $\mathfrak{b}_i: A_i \rightarrow \mathcal{P}(B_i)$ representing $H_i$ with $k$-basis $G_i$, and such that this sequence has the following property: For each $i\in \{1, \ldots, m\}$, the graph $G_i[B_i]$ will have $C'_1, \ldots, C'_i, C_{i+1}, \ldots, C_m$ as connected components, where $C'_i$ is a clique of size at most $k$. This way, in particular, $G_m[B_m]$ has the connected components $C'_1, \ldots, C'_m$. Moreover, the sequence we construct will have the property that $\chi(H) \leq \chi(H_1) \leq \chi(H_2) \leq \ldots \leq \chi(H_m)$.  Finally, by setting $G'=G_m$, $H'=H_m$, $\mathfrak{b}'=\mathfrak{b}_m$, $A'=A_m$ and $B'=B_m$, we will derive the desired result.  

Now we start our construction by setting  $G_0 = G$, $H_0 = H$, $A_0 = A$, $B_0 = B$, and $\mathfrak{b}_0 = \mathfrak{b}$. For $i=0,\ldots,m-1$ we repeat the following: Each time $G_i$, $H_i$ and $\mathfrak{b}_i: A_i \rightarrow \mathcal{P}(B_i)$ are constructed  for some $i<m$, we apply Lemma \ref{lem:shrinking} to $G_i$, $H_i$, $\mathfrak{b}_i$ and the connected component $C_{i+1}$ to derive $G_{i+1}$, $H_{i+1}$ and $\mathfrak{b}_{i+1}: A_{i+1} \rightarrow \mathcal{P}(B_{i+1})$ as the result of this construction. By Property \ref{itm:first} of Lemma \ref{lem:shrinking} we know that $V(G_{i+1}) = (V(G_i) \setminus V(C_{i+1})) \cup V(C'_{i+1})$, where -- using Property \ref{itm:second} -- we may assume that $C'_{i+1}$ is a connected component of $G_{i+1}[B_{i+1}]$. From Property \ref{itm:first} we also know  $V(C_{i+1}) \cap V(C'_{i+1}) = \emptyset$, so it is clear then that $C_{i+1}$ is not a connected component of $G_{i+1}[B_{i+1}]$. 

To show that our construction has the desired properties we have to show that all other connected components of $G_i[B_i]$ are not changed by the construction of $G_{i+1}[B_{i+1}]$. By Property \ref{itm:third} of Lemma \ref{lem:shrinking} we know $G_i[V(G_i) \setminus V(C_{i+1})] = G_{i+1}[V(G_{i+1}) \setminus V(C'_{i+1})]$. As $C_{i+1}$ is a connected component of $G_i[B_i]$ and $C'_{i+1}$ is a connected component of $G_{i+1}[B_{i+1}]$, we conclude from this that  $G_{i+1}[B_{i+1} \setminus V(C'_{i+1})] = G_i[B_i \setminus V(C_{i+1})]$. Here $G_i[B_i \setminus V(C_{i+1})]$ is the union of all connected components of $G_i[B_i]$ different from $C_{i+1}$ and $G_{i+1}[B_{i+1} \setminus V(C'_{i+1})]$ is the union of all connected components of $G_{i+1}[B_{i+1}]$ different from $C'_{i+1}$. However, if both graphs are identical, their connected components are identical, too.

With the sequence $H_1, \ldots, H_m$ constructed as described above, we can apply Property \ref{itm:sixth} of Lemma \ref{lem:shrinking} to conclude $\chi(H) \leq \chi(H_1) \leq \chi(H_2) \leq \ldots \leq \chi(H_m)$. We note that in each construction step we have replaced the connected component $C_i$ with some $C'_i$ which is by Property \ref{itm:fourth} of Lemma \ref{lem:shrinking} a clique with at most $k$ vertices. Furthermore,  from Property \ref{itm:fifth} we conclude $A = A_0 = A_1 = \ldots = A_m$. Note that by Property \ref{itm:seventh} if $G$ is connected, then all elements in the series $G_0, \ldots, G_m$ are connected, too. Thus, with $G' = G_m$, $H' = H_m$, $\mathfrak{b}' = \mathfrak{b}_m$, $A' = A_m$ and $B' = B_m$ we get the desired result, which concludes the proof. 
\end{proof}

The preceding proposition allows us to simplify the task of proving Theorem \ref{thm:main}. 
If $H$ is a graph with $k$-basis $G$ and $\chi(G) \leq k$, then we may, without loss of generality, 
assume that $(G, A, B, \mathfrak{b})$ represents $H$ in such a way that each connected component of $G[B]$ 
is a clique of size at most $k$. The usefulness of this assumption for the proof of the main result 
will become apparent in the following lemma. In this lemma certain assumptions are made, among them 
that each connected component of $G[B]$ contains at most $k$ vertices and that at least one connected 
component of $G[B]$ is a clique. These assumptions are justified if, as guaranteed by Proposition 
\ref{prop:shrinking}, each connected component of $G[B]$ is in fact a clique with at most $k$ vertices.

\begin{lemma}\label{lem:modify1}
 Let $k \geq 2$ and let $H$ be a graph with $k$-basis $G$ such that $\chi(G) \leq k$ and with $k$-representation $\mathfrak{b}: A\rightarrow \mathcal{P}(B)$. Assume that each connected component of $G[B]$ has at most $k$ elements and that $C$ is a connected component of $G[B]$ which is a clique. Let $v\in N_G(V(C))$ and assume $\mathfrak{b}(v) \subseteq V(C)$. Then, there is a graph $H^{-}$ with $k$-basis $G^{-}$ and $k$-representation $\mathfrak{b}^{-}: A^{-} \rightarrow \mathcal{P}(B^{-})$ with the following properties:
\begin{enumerate}[label=(\roman*)]
\item \label{itm:modify1_1} $|A^{-}| < |A|$.
\item \label{itm:modify1_2} Each block in $G^-$ contains at most $k$ elements.
\item \label{itm:modify1_3} $\chi(H) \leq \chi(H^{-})$.
\end{enumerate}
\end{lemma}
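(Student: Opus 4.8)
The plan is to perform a reduction that eliminates the vertex $v$ from the set $A$ while keeping all the structural constraints intact. The key observation is that $v\in N_G(V(C))$ together with $\mathfrak{b}(v)\subseteq V(C)$ means that in $H$, the vertex $v$ is adjacent (in $G$, via exactly one edge by Lemma~\ref{lem:neighbor_of_component}) to one vertex of the clique $C$, and additionally adjacent through $\mathfrak{b}$ to some further vertices, all inside $C$. Since $\deg_G(v)+|\mathfrak{b}(v)|\le k$ and $C$ is a clique with at most $k$ vertices, the neighbors of $v$ that lie in $C$ form a clique of size at most $k-1$ (counting the $G$-edge plus the $\mathfrak{b}$-edges, but being careful about the possible overlap noted in the $K_{1,k}$ example). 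The idea is to \emph{identify} $v$ with an appropriately chosen vertex of $C$ that is not already a neighbor of $v$ — such a vertex exists precisely because $v$ has at most $|V(C)|-1$ neighbors in $C$ if $C$ is large enough, and we must handle the small-$C$ case separately (if every vertex of $C$ is a neighbor of $v$, then $\{v\}\cup V(C)$ is itself a clique of size at most $k$, and one argues directly). Concretely I would set $G^-$ to be $G$ with $v$ identified with such a chosen vertex $c\in V(C)$ (a Zykov identification $G''_{vc}$, legal since $v,c$ are non-adjacent in $G$), put $A^-=A\setminus\{v\}$, let $B^-$ be the complement, and define $\mathfrak{b}^-$ by restricting $\mathfrak{b}$ to $A\setminus\{v\}$, adjusting any occurrence of $v$ in some $\mathfrak{b}(a)$ to the merged vertex $v_{vc}$.

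The verification then proceeds in the usual three blocks. First, \ref{itm:modify1_1}: by construction $A^-=A\setminus\{v\}$, so $|A^-|=|A|-1<|A|$; here one must check $v\in A$, which holds since $v\in N_G(V(C))$ and $C$ is a connected component of $G[B]$, so $v\notin B$, and Condition~\ref{itm:condition1} via Lemma~\ref{lem:bridges_in_basis} forces $v$ to be an $A$-vertex rather than sitting inside a block with $C$ — actually more simply, $v$ must lie in $A$ because $N_G(V(C))\subseteq A$ as $C$ is a component of $G[B]$. Second, one checks that $\mathfrak{b}^-$ is a genuine $k$-representation of the graph $H^-$ it defines: Condition~\ref{itm:condition1} needs the merged vertex $v_{vc}$ to not create a non-trivial block containing an $A$-vertex — the merged vertex now lies inside (or attached to) the clique $C$, and one shows the block containing it is exactly $\{v_{vc}\}\cup V(C)$ (or a sub-clique), which has at most $k$ vertices and contains no $A^-$-vertex; Condition~\ref{itm:condition2} and \ref{itm:condition3} for the remaining $a\in A^-$ follow because their degrees in $G^-$ are unchanged (the identification is disjoint from their incident edges as those go only between $A$ and $B$ and $v\notin\mathfrak{b}(a')$ after adjustment, or if $v\in\mathfrak{b}(a')$ we replace it by $v_{vc}$ without changing cardinality). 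This also gives \ref{itm:modify1_2}: every block of $G^-$ is either an old block of $G$ (hence, by the hypothesis, a component of $G[B]$ of size $\le k$ or a bridge) or the merged clique of size $\le k$.

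Third and most delicate is \ref{itm:modify1_3}, $\chi(H)\le\chi(H^-)$. Given an $\ell$-coloring $f^-$ of $H^-$ with $\ell=\chi(H^-)$, I would lift it to a coloring $f$ of $H$ by setting $f(v)=f^-(v_{vc})$, $f(c)=f^-(v_{vc})$ as well? — no: here is the subtlety. We want $f$ to be proper on $H$. Set $f(x)=f^-(x)$ for $x\neq v,c$ and $f(v)=f^-(v_{vc})$. For $c$ we need a color: since in $H$ the vertex $c$ is adjacent to $v$ (as $c\in V(C)$ is the $G$-neighbor of $v$, or we arranged that $c$ is \emph{not} a neighbor of $v$ — I need to reconcile this). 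The cleaner route: choose $c$ to be a vertex of $C$ that is \emph{not} adjacent to $v$ in $H$ (possible by the counting above, barring the small clique case), perform the identification, and then in the lift simply set $f(v)=f(c)=f^-(v_{vc})$; this is proper because $v$ and $c$ share no neighbor issue — every $H$-edge at $v$ goes to a neighbor in $C$, every $H$-edge at $c$ goes to a neighbor in $C\cup N_G(\dots)$, and since $\{v,c\}$ is merged in $H^-$ all these constraints were already respected by $f^-$, except possibly the edge between $v$'s neighbors and $c$'s neighbors — but within the clique $C$ all vertices are pairwise adjacent, so $f^-$ already distinguishes all of them, and giving $v$ and $c$ the common color $f^-(v_{vc})$ is safe precisely because $v_{vc}$ was adjacent in $H^-$ to everything both $v$ and $c$ see. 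One then does the routine edge-by-edge case check (edge inside $V(G)\setminus\{v,c\}$; edge incident to $v$; edge incident to $c$; edge from $\mathfrak{b}$) mirroring the proof of Lemma~\ref{lem:shrinking}\ref{itm:sixth}. \textbf{The main obstacle} I anticipate is exactly this choice of the identification partner $c$ and the careful treatment of the degenerate case where $C$ is so small that $v$ is adjacent to all of it — there one cannot identify, and instead must observe that $G[\{v\}\cup V(C)]$ is already a clique of size $\le k$ sitting in $H$, delete $v$ from $A$ by moving it into $B$ (redefining $G^-$ to include $v$'s former $\mathfrak{b}$-edges as genuine edges), and check that this still yields a $k$-basis with the block bound and that colorings lift trivially since $H^-=H$ in that case.
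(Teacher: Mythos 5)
Your proposal matches the paper's proof in all essentials: the paper splits on $|V(C)|<k$ (absorb $v$ into the clique by promoting its $\mathfrak{b}$-edges to genuine $G$-edges and moving $v$ into $B$, so that $H^-=H$) versus $|V(C)|=k$ (Zykov-identify $v$ with a non-adjacent $w\in V(C)$, whose existence follows from exactly your counting argument $|N_H(v)\cap V(C)|\le k-1$), which is the same pair of constructions you describe under a slightly different but equally exhaustive case split. Two small slips to repair in the write-up: in $\mathfrak{b}^-$ you must redirect occurrences of the identified $B$-vertex $c$ to the merged vertex (not occurrences of $v$, which never lies in any $\mathfrak{b}(a)$ because $v\in A$ and $A\cap B=\emptyset$), and your claim that the degrees of the remaining $a\in A^-$ are unchanged needs the observation --- made explicitly in the paper --- that no such $a$ can be adjacent in $G$ to both $v$ and $c$, since otherwise the bridges $\{a,v\}$ and $\{a,c\}$ would lie on a common cycle through the clique $C$.
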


\begin{proof}
We define $p:=|V(C)|\leq k$ and prove the claims of the lemma in separate steps. The first step is to find a $k$-basis $G^{-}$ and a $k$-representation $\mathfrak{b}^{-}: A^{-} \rightarrow B^{-}$ and then to show Properties \ref{itm:modify1_1}, \ref{itm:modify1_2} and \ref{itm:modify1_3} as stated in the lemma. We make a case distinction based on $p$.

\begin{enumerate}
\item First, consider the case $p<k$. In this case, we construct $G^-$ by setting $V(G^-) = V(G)$ and $E(G^-) = E(G) \cup \{\{v,w\}: w\in \mathfrak{b}(v)\}$. Moreover, we set $A^{-} = A\setminus \{v\}$, $B^{-}= B\cup \{v\}$ and  $\mathfrak{b}^{-} = \mathfrak{b}|_ {A^{-}}$. From these definitions it is clear that $\deg_{G^-}(x) \geq \deg_G(x)$ for all $x\in V(G)$. Together with $A^- \subseteq A$ and $\deg_G(a) \geq 2$ (by Condition \ref{itm:condition2} for $(G, A, B, \mathfrak{b})$) for all $a\in A$ we conclude that $\deg_{G^-}(a)\geq 2$ for all $a\in A^-$ as well, which proves Condition \ref{itm:condition2} for $(G^-, A^-, B^-, \mathfrak{b}^-)$. 

Furthermore, it can easily be seen that if $\deg_{G^-}(x) > \deg_G(x)$, then $x\in \{v\}\cup \mathfrak{b}(v)$ (this is a direct consequence of the definition of $E(G^-)$). However, our definition of $B^-$ implies $\{v\}\cup \mathfrak{b}(v) \subseteq B^-$, so we necessarily have $\deg_{G^-}(a) = \deg_G(a)$ for all $a\in A^-$. Moreover, the definition of $\mathfrak{b}^-$ implies $\mathfrak{b}^-(a) = \mathfrak{b}(a)$ for each $a\in A^-$. Together, both equations imply Condition \ref{itm:condition3} as $\deg_{G^-}(a)+|\mathfrak{b}^{-}(a)|=\deg_{G}(a)+|\mathfrak{b}(a)|\leq k$ for all $a\in A^-$, where the latter inequality stems from the fact that $(G,A,B,\mathfrak{b})$ represents $H$ and $G$ is a $k$-basis of $H$. 

It remains to check Condition \ref{itm:condition1}. Let $Z$ be a non-trivial block in $G^-$. We have to show that $A^-\cap V(Z) = \emptyset$. If $v\notin V(Z)$, then $Z$ is contained in a non-trivial block in $G$ (as $G-v = G^--v$) and $A^- \subseteq A$ implies $A^-\cap V(Z) \subseteq A\cap V(Z) = \emptyset$, which completes the case $v\notin V(Z)$. 

Now we assume $v\in V(Z)$. Let $W = V(C) \cup \{v\}$. We now consider two subcases. First, assume $V(Z)\subseteq W$. 
 As $C$ is a connected component of $G[B]$ and by the definition of $B^-$, we conclude $W\subseteq B^-$. So as $V(Z)\subseteq W$, we clearly have $V(Z) \cap A^- = \emptyset$, which completes the proof of this case.
 
 So let us consider the remaining case in which we have $v \in V(Z)$ and thus $v \in V(Z)\cap W$ as well as $V(Z)\not\subseteq W$. We will show that this situation leads to a contradiction. So there exists some $w\in V(Z) \setminus W$. Then, as $Z$ is a non-trivial block with $v,w\in V(Z)$, there exists a cycle $D$ with $v,w\in V(D)$. Moreover, note that $D$ contains at least two different edges $e_1, e_2 \in E_{G^-}(W, V(G^-) \setminus W)$. This is as above due to the fact that $v\in W$ and $w\not\in W$ and $v,w\in V(D)$.  
 
 We now claim that $e_1, e_2\in E(G)$. Otherwise,  if for example $e_1\in E(G^-) \setminus E(G)$, then the definition of $G^-$ implies that $e_1 = \{v,w\}$ with $w\in \mathfrak{b}(v)$. But we have assumed $\mathfrak{b}(v)\subseteq V(C)$, which then implies $w\in \mathfrak{b}(v)\subseteq V(C)\subseteq W$, contradicting $w \not\in W$. So this shows that the assumption was wrong and $e_1 \in E(G)$ (and the same holds analogously for $e_2$).  
 
 Next, we claim that both $e_1$ and $e_2$ are bridges in $G$. Since exactly one endpoint of each of these edges lies in $W = V(C) \cup \{v\}$, for each $e_i$ ($i = 1,2$), we have two possibilities: Either $e_i \in E_G(V(C), V(G)\setminus V(C))$ or $v\in e_i$. Using Lemma \ref{lem:bridges_in_basis}, we then conclude that $e_i$ is a bridge in both cases. 
 
 Let $e_i = \{x_i, y_i\}$ such that $x_i\in W$ and $y_i \notin W$. Let $D'$ be the segment of $D$ with endpoints $y_1, y_2$ such that $x_1, x_2\notin V(D')$. It may happen that $|E_{G^-}(W, V(G^-) \setminus W)| > 2$, but if we choose $e_1, e_2$ carefully we can nevertheless assume that $D'$ is contained in $G^-[V(G^-)\setminus W]$. As $D'$ is contained in $D-v$, it is also contained in $G-v$ (as $G^--v = G-v)$. On the other hand, $G[W]$ is connected and therefore contains an $x_1$-$x_2$-path $P$. Let $D''$ be the cycle composed of $D', e_1, e_2, P$. Then,  $D''$ is a cycle in $G$ which contains $e_1, e_2$, which contradicts the fact that $e_1, e_2$ are bridges as shown above. This contradiction shows that we must indeed have $V(Z) \subseteq W$ and thus completes the proof of Condition \ref{itm:condition1}.

Now that we have shown all three conditions of Definition \ref{def:basis+representation} for $(G^-, A^-, B^-, \mathfrak{b}^-)$, we finally want to show the remaining assertions of  Lemma \ref{lem:modify1}.

The definition of $A^-=A\setminus \{v\}$ implies $|A^-| < |A|$, which is \ref{itm:modify1_1}. In order to show \ref{itm:modify1_2}, let $Z$ be a non-trivial block of $G^-$. In case $v\notin V(Z)$, we know that $Z$ is contained in a non-trivial block in $G$ (as $G^--v = G-v$). By Condition \ref{itm:condition1} for $(G, A, B, \mathfrak{b})$, we conclude that $Z$ is contained in $G[B]$. As $Z$ is a connected subgraph of $G[B]$, it is contained in a connected component of $G[B]$.  We have assumed that each connected component in $G[B]$ contains at most $k$ vertices, so we conclude that \ref{itm:modify1_2} holds in this case. If, however, we have $v\in V(Z)$, then, as shown above (in the course of the proof of Condition \ref{itm:condition1}), we have $V(Z) \subseteq W$. However, $|W| = |V(C)| + 1 = p+1 \leq k$, because we have assumed $p<k$. Hence \ref{itm:modify1_2} holds in this case, too. 

Finally, we define $H^-$ to be the graph defined by $(G^-, A^-, B^-, \mathfrak{b}^-)$. In order to show $\chi(H)\leq \chi(H')$, we first show that $E(H) = E(H^-)$: 
\begin{align*}
    E(H^-) &= E(G^-) \cup \{\{x,y\}: x\in A^-, y\in \mathfrak{b}^-(x)\} \\
    &= ( E(G) \cup \{\{v,w\}: w\in \mathfrak{b}(v)\})  \cup \{\{x,y\}: x\in A^-, y\in \mathfrak{b}^-(x)\}\\
    &= E(G) \cup \{\{x,y\}: x\in A^-\cup\{v\}, y\in \mathfrak{b}(x)\}\\
    &= E(G) \cup \{\{x,y\}: x\in A, y\in \mathfrak{b}(x)\}\\
    &= E(H). 
\end{align*}
Obviously, this implies $H = H^-$ and thus $\chi(H) = \chi(H^-)$, which implies \ref{itm:modify1_3} as desired. This completes the proof of Lemma \ref{lem:modify1} for the case $p<k$.

\item Next, before we can consider the case $p=k$, we need to prove the following inequality, which we will use subsequently to prove the desired assertions:
\begin{equation}|N_H(v)\cap V(C)| \leq k-1 .\label{eq:claim}\end{equation}

Note that by Condition \ref{itm:condition2} of Definition \ref{def:basis+representation}, we have $\deg_G(v) \geq 2$. Moreover, we also have $\deg_H(v) \leq k$ and thus $|N_H(v)|\leq k$ by Condition \ref{itm:condition3}. Now, the first inequality implies that there are at least two vertices adjacent to $v$ in $G$. Therefore, by Lemma \ref{lem:neighbor_of_component}, at least one vertex adjacent to $v$ in $G$ is \emph{not} in $C$. Then, due to the second inequality, at most $k-1$ vertices in $V(C)$ are adjacent to $v$ in $H$. This implies \eqref{eq:claim}. 

\item Now, we consider the case $p=k$. Due to Inequality \eqref{eq:claim}, there is some $w\in V(C)$ such that $v$ and $w$ are \emph{not} adjacent. Hence we can construct $G^-$ in this case by applying the Zykov operation of identifying $v$ and $w$ to $G$ (cf. Figure \ref{fig:modify1} for an example). We subsequently denote by $w'$ the new vertex of $G^-$, which results from identifying $v$ and $w$.

Next we construct $\mathfrak{b}^{-}: A^{-} \rightarrow \mathcal{P}(B^{-})$.  Set $A^{-}:= A\setminus \{v\}, B^{-}:= V(G^{-}) \setminus A^{-}$. For the definition of $\mathfrak{b}^{-}$ consider some $a\in A^{-}$. If $w\notin \mathfrak{b}(a)$, then set $\mathfrak{b}^{-}(a) := \mathfrak{b}(a)$. Otherwise set $\mathfrak{b}^{-}(a) := (\mathfrak{b}(a) \setminus \{w\})\cup \{w^{\prime}\}$. Note that in the following, to simplify the notation, we will denote the set $V(C) \cup \{v\}$ by $W$ and $(V(C)\setminus \{w\}) \cup \{w'\}$ by $W'$.
 
We need to check that $(G^-, A^-, B^-, \mathfrak{b}^{-})$ fulfills the conditions  of Definition \ref{def:basis+representation}, Part \ref{itm:basis_first}. First, we consider Condition \ref{itm:condition1}. Let $Z$ be a non-trivial block in $G^-$. We consider different cases. If $w'\notin V(Z)$, then $Z$ is contained in a non-trivial block in $G$ (as $G^--w' = G-v-w$). In this case, Condition \ref{itm:condition1} for $(G, A, B, \mathfrak{b})$ implies $V(Z) \cap A = \emptyset$ and thus also $V(Z) \cap A^- = \emptyset$ as desired (as $A^- \subseteq A$ by definition of $A^-$). 
 
On the other hand, if $w'\in V(Z)$, we first show that $V(Z) \subseteq W'$. Suppose this is not the case. Then, $Z$ contains a cycle $D$ in $G^-$ such that $D$ contains at least two edges $e_1, e_2\in E_{G^-}(W', V(G^-)\setminus W')$. We can assume that $e_i = \{x_i, y_i\}$ such that $x_i \in W'$ and $y_i \notin W'$ for $i=1,2$. Furthermore, like in the first part of the proof, we can assume that $e_1, e_2$ are chosen such that there is a segment $D'$ of $D$ with endpoints $y_1, y_2$ and $D'$ is contained in $G^-[V(G^-)\setminus W^-]$. Now for each $e_i$ there are two possibilities: Either $e_i$ is in $E(G)$ or it resulted from the identification of $v$ with $w$. In each case $e_i$ corresponds to some edge $e'_i$ in $E_G(W, V(G) \setminus W)$. Using an analogous argument as in the first case of the proof, as $G[V(G)\setminus W] = G^-[V(G^-) \setminus W^-]$ and $G[W]$ is connected, we can conclude the existence of a cycle $D''$ in $G$ containing $e'_1, e'_2$. But by Lemma \ref{lem:bridges_in_basis}, $e'_1$ and $e'_2$ are bridges, which shows that they cannot be contained in a cycle. This contradiction proves that indeed we must have  $V(Z) \subseteq W'$.  This, together with $W' \subseteq B^-$, shows that Condition \ref{itm:condition1} holds in this case, too. 
 
Now let $a\in A^-$. For the remaining conditions of Definition \ref{def:basis+representation} we have to show that the following inequalities are fulfilled:
\begin{itemize}
    \item $\deg_{G^-}(a) \geq 2$. 
    \item $\deg_{G^-}(a) + |\mathfrak{b}^-(a)| \leq k$. 
\end{itemize}

For the first inequality we note that according to the first step of the construction of $G^-$ we have $G[V(G)\setminus W] = G^-[V(G^-)\setminus W']$ and thus $\deg_{G[V(G)\setminus W]}(a) = \deg_{G^-[V(G^-)\setminus W']}(a)$ for $a\in A^- \subseteq V(G) \setminus W$. Now we claim that $|E_G(a, W)| = |E_{G^-}(a, W')|$ for each $a\in A^-$. As $G^-$ is obtained from $G$ by identifying $v$ with $w$, this claim is equivalent to the statement that there is no $a\in A$ which is adjacent in $G$ to both, $v$ and $w$. To see that such an $a$ cannot exist, assume the contrary: if there was such an $a\in A$, then by Lemma \ref{lem:bridges_in_basis} both $\{a,v\}$ and $\{a, w\}$ would be bridges in $G$. As $G[W]$ is connected, there is a $v$-$w$-path in $G[W]$, which together with $v,a,w$ forms a cycle. This, however, is impossible as $\{a,v\}$ and $\{a, w\}$ are bridges. This contradiction shows that indeed we have  $|E_G(a, W)| = |E_{G^-}(a, W')|$. Altogether we therefore conclude $ \deg_{G^-}(a) = \deg_{G^-[V(G^-)\setminus W']}(a) + |E_{G^-}(a, W')| = \deg_{G[V(G)\setminus W]}(a) + |E_G(a, W)| = \deg_G(a)$. In particular, this shows that by $\deg_G(a) \geq 2$, which is Condition \ref{itm:condition2} for $(G, A, B, \mathfrak{b})$, we also have $\deg_{G^-}(a) \geq 2$ as desired.

For the second inequality, recall that the definition of $\mathfrak{b}^-$ is based on a case distinction. If $w\notin \mathfrak{b}(a)$, then we have  $\mathfrak{b}^{-}(a) = \mathfrak{b}(a)$. As we have just seen that $\deg_{G^-}(a) =  \deg_G(a)$, it is clear that $\deg_{G^-}(a) + |\mathfrak{b}^-(a)| = \deg_{G}(a) + |\mathfrak{b}(a)| \leq k$, where the last inequality stems from Condition \ref{itm:condition3} for $(G, A, B, \mathfrak{b})$. Otherwise, if $w\in \mathfrak{b}(a)$, we have $\mathfrak{b}^-(a)=(\mathfrak{b}(a)\setminus\{w\})\cup\{w'\}$, which implies $|\mathfrak{b}^-(x)| = |\mathfrak{b}(x)|$. Again together with $\deg_{G^-}(x) =  \deg_G(x)$, we derive the second inequality.
 
So in summary, so far we have shown that $\mathfrak{b}^-: A^-\rightarrow \mathcal{P}(B^-)$ is a $k$-representation with $k$-basis $G^-$. Let $H^-$ be the graph defined by $(G^-, A^-, B^-, \mathfrak{b}^-)$. So it remains to show the Properties \ref{itm:modify1_1} - \ref{itm:modify1_3} of Lemma \ref{lem:modify1}. Note that our definition of $A^-$ implies that $|A^-| < |A|$, so \ref{itm:modify1_1} is fulfilled. In order to prove \ref{itm:modify1_2}, let $Z$ be a non-trivial block in $G^-$. Then either $w'\notin V(Z)$, in which case $Z$ is contained in a non-trivial block in $G$ (as $G^-- w' = G-w-v$). Using  Condition \ref{itm:condition1} of Definition \ref{def:basis+representation} for $(G, A, B, \mathfrak{b})$, we conclude $V(Z) \subseteq G[B]$. As by assumption of Lemma \ref{lem:modify1} each connected component of $G[B]$ contains at most $k$ elements, \ref{itm:modify1_2} follows easily in this case. Otherwise, if we have $w'\in V(Z)$, we have $V(Z) \subseteq W'$, as shown above (in the part of the proof where we showed that Condition \ref{itm:condition1} holds). As $|W'| = k$, \ref{itm:modify1_2} follows in this case, too.

The last property we need to show is $\chi(H) \leq \chi(H^-)$. Let $f': V(H^-) \rightarrow \{1, \ldots, l\}$ be an $l$-coloring of $H^-$ for some $l\in \mathbb{N}$. Our aim now is to use $f'$ to construct an $l$-coloring of $H$, which will complete the proof (using the special case $l=\chi(H^-)$).

We define $f: V(H)\rightarrow \{1, \ldots, l\}$ in the following way:
\begin{equation*} 
        f(x)=
        \begin{cases} f'(x) & \text{if $x\notin \{v, w\}$},
        \\f'(w')  & \text{if $x\in \{v,w\}$.}
        \end{cases}
\end{equation*}

In order to prove that $f$ is indeed an $l$-coloring, we consider some $e=\{x,y\}\in E(H)$. We need to show that $f(x) \neq f(y)$. We use a case distinction. 
\begin{enumerate}
    \item Suppose $e\in E(G)$. In this case, if $e\cap \{v,w\} = \emptyset$ we conclude $e\in E(G^-)$ (as $G-v-w = G^--w'$). Thus, we have $f(x) = f'(x)$ and $f(y) = f'(y)$. Then $f'(x) \neq f'(y)$, as $f'$ is a coloring. Hence, $f(x) \neq f(y)$.

    The other possible subcase is that $e\cap \{v,w\} \neq \emptyset$. As $v$ and $w$ are non-adjacent, we conclude $e\neq \{v,w\}$. So we may assume without loss of generality that $x\in \{v,w\}$ and $y\notin \{v,w\}$. Since $G^-$ was obtained from $G$ by identifying $v$ and $w$, we conclude $\{w',y\}\in E(G^-)$. Then $f(x) = f'(w')$ and $f(y) = f'(y)$. As $f'$ is a coloring, we have $f'(w') \neq f'(y)$ and thus derive $f(x) \neq f(y)$ as desired. 

    \item Suppose $e\in E(H) \setminus E(G)$. Then $e$ is represented by $\mathfrak{b}$, and we can assume without loss of generality that $x\in A$ and $y\in \mathfrak{b}(x)$. Now we consider two different cases:
    \begin{enumerate}
        \item First, we consider the case $x=v$. Recall that Lemma \ref{lem:modify1} assumes $\mathfrak{b}(v) \subseteq V(C)$. Hence $y\in V(C)$. As $v$ and $w$ are non-adjacent in $H$, we conclude $y\in W\setminus \{w\} \subseteq W'$. Since $C$ is assumed to be a clique, we have $\{y, w\}\in E(G)$, and by identifying $w$ with $v$, we obtain $\{y, w'\}\in E(G')$. Then $f(v) = f'(w')$ and $f(y) = f'(y)$. As $f'$ is a coloring, we know that $f'(w') \neq f'(y)$ and thus conclude that $f(v) \neq f(y)$.

        \item Now, consider the case $x\in A\setminus \{v\}=A^-$. In this case, we have $f(x)=f'(x)$. Our definition of $\mathfrak{b}^-$ together with $y\in \mathfrak{b}(x)$ imply that either $y = w$ or $y\in \mathfrak{b}^-(x)$. First, consider the case that $y\in \mathfrak{b}^-(x)$. Then, we have $\{x,y\}\in E(H^-)$ and thus $f'(x) \neq f'(y)$, because $f'$ is a coloring of $H^-$. We conclude $f(x)\neq f(y)$. On the other hand, if $y = w$, then $f(y)=f'(w')$, and by our definition of $\mathfrak{b}^-$, it follows that $w' \in \mathfrak{b}^-(x)$. In this case $\{x,w'\}\in E(H^-)$, and therefore we conclude that $f'(x) \neq f'(w')$ and thus also $f(x) \neq f(w)$.
    \end{enumerate}
\end{enumerate}

\end{enumerate}

This completes the proof.
\end{proof}

\begin{figure}
    \centering
\includegraphics{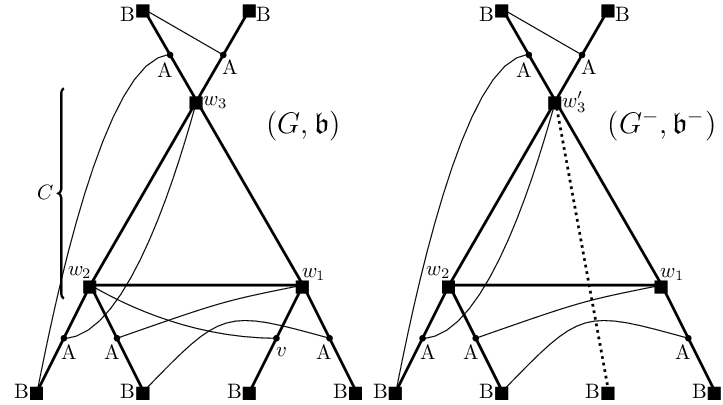}
\caption{This is an example with $k=3$ for the procedure described in the proof of Lemma \ref{lem:modify1}. We assume on the left side that the vertices $w_1$,  $w_2$ and $w_3$ belong to $B$ and constitute a connected component $C$ of $G[B]$ (which is a clique as these vertices form a triangle). Furthermore, we assume that $v$ is in $A$. We see that $v$ is a neighbor of $C$ and $\mathfrak{b}(v) \subseteq V(C)$. In this case, $(G^-, A^-, B^-, \mathfrak{b}^-)$ is obtained by identifying $v$ and $w_3$. In the figure, thick lines indicate which edges belong to the respective basis, and thin lines indicate which edges are represented by the respective function. The dotted line on the right side indicates an edge by which some vertex in $B$ (which is connected with $v$ in the left figure)  ends up as a neighbor of $w'_3$ in the right figure after identifying it with $v$.}
\label{fig:modify1}
\end{figure}

\begin{figure}
    \centering
\includegraphics{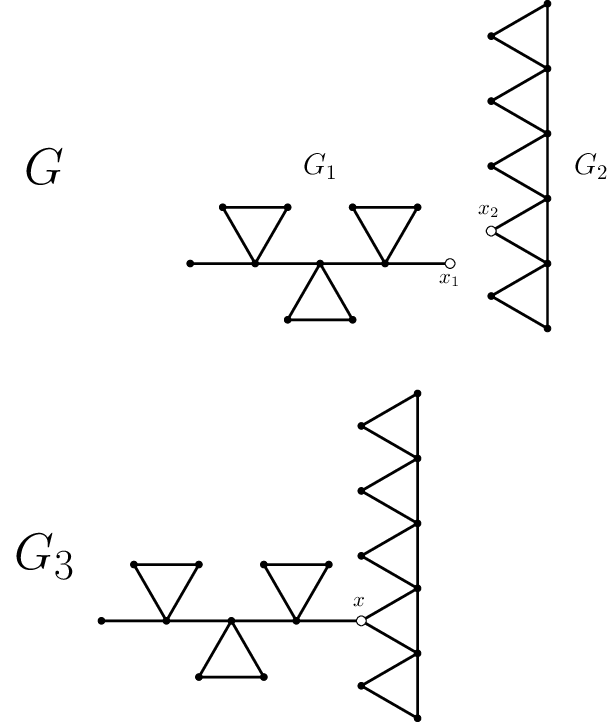}

\caption{This is an example for the Zykov operation described in Definition \ref{def:zykov} which is used in Lemma \ref{lem:construct_basis_by_identifying}. Here, $G_1$ and $G_2$ are the two connected components of graph $G$. By identifying $x_1$ and $x_2$, a new graph $G_3$ is obtained, which has exactly one new vertex that is not contained in $G$, namely $x$.}
\label{fig:identify}
\end{figure}

There is another construction which will be needed for the proof of Theorem \ref{thm:main}. This construction will subsequently be described in Lemma \ref{lem:modify2}, and it requires the following lemma to help us with the construction of a new $k$-basis.

\begin{lemma}\label{lem:construct_basis_by_identifying}
    Let $k\geq 2$ and let $G$ be a graph consisting of two connected components $G_1$ and $G_2$. Let each block in $G$ contain at most $k$ vertices. Let $x_1 \in V(G_1)$ and $x_2 \in V(G_2)$. Let $G_3$ be the graph resulting from identifying $x_1$ and $x_2$ to form a new vertex $x$. Then each block in $G_3$ has at most $k$ vertices. 
\end{lemma}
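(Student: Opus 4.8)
The plan is to analyze the block structure of $G_3$ in terms of the block structures of $G_1$ and $G_2$. The key observation is that identifying a vertex $x_1$ of $G_1$ with a vertex $x_2$ of $G_2$ — where $G_1$ and $G_2$ are \emph{disjoint} connected components — is precisely the operation of gluing two graphs at a single common vertex. Such a gluing creates no new cycles: any cycle in $G_3$ must lie entirely within the copy of $G_1$ or entirely within the copy of $G_2$, since a cycle passing through the cut vertex $x$ would have to leave $G_1$ and come back, which is impossible as $x$ is the only vertex shared by the two parts (and removing $x$ disconnects them). Consequently the non-trivial blocks of $G_3$ are exactly the non-trivial blocks of $G_1$ together with the non-trivial blocks of $G_2$ (each viewed inside $G_3$ via the obvious inclusion, with $x_1$ and $x_2$ relabelled to $x$ where they occur), and the bridges of $G_3$ are the bridges of $G_1$ together with the bridges of $G_2$.

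Concretely, I would argue as follows. First I would recall from the preliminaries that an edge is a bridge iff it lies in no cycle, and that every cycle is contained in a non-trivial block; hence it suffices to control the cycles of $G_3$. Let $C$ be any cycle in $G_3$. If $C$ does not pass through $x$, then since $V(G_3)\setminus\{x\}$ is the disjoint union $(V(G_1)\setminus\{x_1\})\,\dotcup\,(V(G_2)\setminus\{x_2\})$ with no edges between the two sides, $C$ lies entirely in $G_1-x_1$ or entirely in $G_2-x_2$, hence in $G_1$ or $G_2$. If $C$ does pass through $x$, then $C-x$ is a path whose endpoints are the two neighbors of $x$ on $C$; but $x$ is a cut vertex of $G_3$ separating the $G_1$-side from the $G_2$-side, so this path cannot have one endpoint on each side, forcing all of $C$ into one of $G_1$, $G_2$. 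Either way, every cycle of $G_3$ is a cycle of $G_1$ or of $G_2$ (under the canonical identification of $G_i$ with a subgraph of $G_3$).

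From this, each non-trivial block $Z$ of $G_3$ — being a maximal subgraph in which every two edges lie on a common cycle — has all its edges, and hence (as $Z$ is non-trivial) all its vertices, inside a single $G_i$; thus $Z$ is contained in a non-trivial block of $G_i$, and by hypothesis $|V(Z)|\le k$. The trivial blocks (bridges) of $G_3$ have $2\le k$ vertices as well. Therefore every block of $G_3$ has at most $k$ vertices, which is the claim.

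The only point requiring a little care — and the main obstacle — is the bookkeeping around the identified vertex: one must make sure that $x$ itself does not simultaneously belong to a non-trivial block coming from $G_1$ \emph{and} one coming from $G_2$ in a way that would merge them into a larger block. But this cannot happen precisely because any cycle through $x$ stays on one side; so the blocks meeting at $x$ intersect only in the single vertex $x$, exactly as in the original components, and no merging occurs. Once this is observed, the bound $|V(Z)|\le k$ transfers verbatim from the blocks of $G_1$ and $G_2$.
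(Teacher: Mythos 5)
Your proposal is correct and follows essentially the same route as the paper: both arguments show that any cycle of $G_3$ must lie entirely inside the copy of $G_1$ or of $G_2$ (since the two copies meet only in the single vertex $x$ and there are no edges between the two strict sides), and then conclude that every non-trivial block of $G_3$ is a block of one of the original components. Your additional case analysis on whether the cycle passes through $x$ and the remark about blocks not merging at $x$ are just more explicit renderings of the same contradiction the paper derives.
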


Before we prove the lemma, note that Figure \ref{fig:identify} illustrates the construction of graph $G_3$.

\begin{proof}[Proof of Lemma \ref{lem:construct_basis_by_identifying}]

    We begin the proof by noting that from the construction of $G_3$ it directly follows that $G'_i = G_3[V(G_i)\setminus \{x_i\}\cup x]$ is isomorphic to $G_i$ for $i=1, 2$, and that $V(G'_1) \cap V(G'_2) = \{x\}$. Let $Z$ be a block in $G_3$. Without loss of generality we can assume that $Z$ is non-trivial (because otherwise $|Z| \leq 2$, in which case there is nothing to show). We have to show that $Z$ contains at most $k$ vertices. Our proof strategy now is to show that $Z$ is contained either in $G'_1$ or in $G'_2$ and thus must be isomorphic to a block contained in $G$. As all these blocks by assumption of the lemma contain at most $k$ vertices, this will complete the proof.
    
    Thus, it only remains to show that $Z$ is contained either in $G'_1$ or in $G'_2$.  Assume this is not the case. Then, there is a cycle $C$ contained in $Z$ such that $C$ contains at least one element of $V(G'_1) \setminus V(G'_2)$ and at least one element of $V(G'_2) \setminus V(G'_1)$. As $V(G'_1) \cap V(G'_2)$ contains only one element (namely $x$), we conclude that $C$ contains at least one edge connecting $V(G'_1) \setminus V(G'_2)$ with $V(G'_2) \setminus V(G'_1)$ which contradicts our construction of $G_3$. 

Together with the above considerations, this completes the proof. 
\end{proof}

Before we can turn our attention to the final lemma, we state the following remark, which is a direct consequence of Lemma \ref{lem:construct_basis_by_identifying} and its proof.

\begin{remark} \label{rem:construct_basis_by_identifying}
    In the situation assumed by Lemma \ref{lem:construct_basis_by_identifying}, we can consider $G_1$ and $G_2$ as subgraphs of $G_3$ by identifying them with their isomorphic copies $G'_1$ and $G'_2$ which were considered in the proof of the lemma. Furthermore, the proof shows that each non-trivial block in $G_3$ is either a non-trivial block in $G_1$ or a non-trivial block in $G_2$.
\end{remark}

\begin{lemma}\label{lem:modify2}
Let $k \geq 2$ and let $H$ be a graph with $k$-basis $G$, $\chi(G) \leq k$ and $k$-representation $\mathfrak{b}: A\rightarrow \mathcal{P}(B)$. Assume that each block in $G[B]$ contains at most $k$ vertices. Let $C$ be a connected component of $G[B]$. Let $D_1, \ldots, D_l$ be the connected components of $G[V(G) \setminus V(C)]$. Let $v\in N_G(V(C))\cap V(D_1)$ and assume $\mathfrak{b}(v) \cap V(D_1) \not= \emptyset$.  Then there is a graph $H^{-}$ with $k$-basis $G^{-}$ and $k$-representation $\mathfrak{b}^{-}: A^{-} \rightarrow \mathcal{P}(B^{-})$ with the following properties:
\begin{enumerate}[label=(\roman*)]
\item \label{itm:modify2_1} $|A^{-}| < |A|$.
\item \label{itm:modify2_2} Each block in $G^-$ has at most $k$ vertices.
\item \label{itm:modify2_3} $\chi(H) \leq \max(\chi(H^{-}), k)$.
\end{enumerate}
\end{lemma}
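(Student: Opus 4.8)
```latex
\begin{proof}[Proof plan for Lemma \ref{lem:modify2}]
The plan is to use the Zykov operation of \emph{identification} again, but now applied to two vertices lying in \emph{different} connected components of $G$ after we cut $C$ out. More precisely, let $v\in N_G(V(C))\cap V(D_1)$ with $\mathfrak{b}(v)\cap V(D_1)\neq\emptyset$, and pick some $z\in\mathfrak{b}(v)\cap V(D_1)$. The key geometric observation is that the edge $\{v,z\}$ of $H$, which is \emph{not} in $G$, goes between two vertices that are nevertheless in the same component $D_1$ of $G[V(G)\setminus V(C)]$; intuitively this edge ``wants to close a cycle'' within $D_1$ once we reattach things, and the trick is to remove $v$ from $A$ by identifying it with a suitable vertex. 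The natural candidate is a vertex $w\in V(C)$ that is non-adjacent to $v$ in $G$ (such a $w$ exists by the same counting argument as in Lemma \ref{lem:modify1}: $\deg_G(v)\geq 2$ forces a neighbor outside $C$ by Lemma \ref{lem:neighbor_of_component}, while $\deg_H(v)\leq k$ and $|V(C)|\leq k$ leave at least one non-neighbor in $C$ unless $C$ is a clique of size $k$ and all its vertices are adjacent to $v$ — and that subcase is exactly the one ruled out because it would force $\deg_H(v)\geq k+1$ once the edge to $z$ is counted). Having found such a $w$, I would set $G^-$ to be $G$ with $v$ and $w$ identified into a new vertex $w'$, define $A^-=A\setminus\{v\}$, $B^-=V(G^-)\setminus A^-$, and let $\mathfrak{b}^-$ be $\mathfrak{b}$ restricted to $A^-$ but with every occurrence of $w$ replaced by $w'$ (as in Lemma \ref{lem:modify1}, case $p=k$).

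Next I would verify that $(G^-,A^-,B^-,\mathfrak{b}^-)$ is again a $k$-basis with $k$-representation. Conditions \ref{itm:condition2} and \ref{itm:condition3} go through exactly as in the $p=k$ case of Lemma \ref{lem:modify1}: no vertex of $A$ can be adjacent in $G$ to both $v$ and $w$ (else Lemma \ref{lem:bridges_in_basis} would make $\{a,v\}$ and $\{a,w\}$ bridges while $G[V(C)\cup\{v\}]$ connected gives a cycle through them), so $\deg_{G^-}(a)=\deg_G(a)$ for all $a\in A^-$, and $|\mathfrak{b}^-(a)|\le|\mathfrak{b}(a)|$. Condition \ref{itm:condition1} — that $A^-$ meets no non-trivial block of $G^-$ — is again proved by the now-familiar ``two crossing edges give a cycle, but those edges would be bridges'' argument applied to $W'=(V(C)\setminus\{w\})\cup\{w'\}$; any non-trivial block of $G^-$ either avoids $w'$ (hence sits inside a non-trivial block of $G$, so misses $A$) or is contained in $W'\subseteq B^-$. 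For Property \ref{itm:modify1_2}-analog \ref{itm:modify2_2}, the $w'$-avoiding blocks are blocks of $G$, which lie in some connected component of $G[B]$ and hence have $\le k$ vertices by hypothesis, while the $w'$-containing ones lie in $W'$ with $|W'|\le k$.

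The coloring inequality \ref{itm:modify2_3} is where the statement genuinely differs from Lemma \ref{lem:modify1}, and I expect it to be the main obstacle. Here $H^-$ need \emph{not} equal $H$: identifying $v$ with $w$ can create edge conflicts, because $v$ had a neighbor $z\in\mathfrak{b}(v)\cap V(D_1)$, and after identification $w'$ inherits this neighbor $z$ even though in $H$ the vertex $w$ may have had no relation to $z$. The right statement is therefore the weaker $\chi(H)\le\max(\chi(H^-),k)$. To prove it, take an $l$-coloring $f'$ of $H^-$ with $l=\chi(H^-)$; if $l\le k-1$ we may instead view $f'$ as using colors $\{1,\dots,k\}$ and freely use a $k$-th color. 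Define $f$ on $V(H)$ by $f(x)=f'(x)$ for $x\notin\{v,w\}$ and $f(w)=f'(w')$, and for $v$ choose a color avoiding $f(N_H(v))$: this is possible whenever $|N_H(v)|<k$, i.e. $\deg_H(v)\le k-1$, which by Condition \ref{itm:condition3} fails only when $\deg_G(v)+|\mathfrak{b}(v)|=k$ with equality realized; in that tight case one checks that $v$'s $H$-neighbors are precisely $N_G(v)\cup\mathfrak{b}(v)$, and since at least one $G$-neighbor of $v$ lies outside $C$ while $w'$ absorbs $w$'s slot, the set of colors actually forbidden at $v$ has size $\le k-1$ once we are allowed the full palette of size $\max(l,k)$ — this is the step that forces the ``$\max$'' in the statement. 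For every edge of $H$ not incident to $v$ one checks $f$ is proper exactly as in Lemma \ref{lem:modify1}: edges inside $G$ with $e\cap\{v,w\}=\emptyset$ survive in $G^-$; edges meeting $w$ map to edges at $w'$; edges of $H\setminus G$ at some $a\in A^-$ map to edges of $H^-$ (with $w\mapsto w'$ handled by the definition of $\mathfrak{b}^-$). Edges incident to $v$ are proper by the choice of $f(v)$. Hence $f$ is a proper coloring of $H$ with $\le\max(\chi(H^-),k)$ colors, which is \ref{itm:modify2_3}, and together with \ref{itm:modify2_1} and \ref{itm:modify2_2} this completes the proof.
\end{proof}
```
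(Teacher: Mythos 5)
Your construction differs fundamentally from the paper's, and it contains two genuine gaps. The paper does not identify $v$ with a vertex of $C$; instead it deletes the bridge $e=\{v,x_1\}$ (where $x_1$ is the unique $G$-neighbor of $v$ in $C$, cf.\ Lemma \ref{lem:neighbor_of_component}) and then identifies $x_1$ with a vertex $x_2\in\mathfrak{b}(v)\cap V(D_1)$, i.e., it merges two vertices that are \emph{both $H$-neighbors of $v$} and that lie in different components of $G-e$ (so they are automatically non-adjacent). Your first gap is existence: a vertex $w\in V(C)$ non-adjacent to $v$ need not exist, since the lemma places no lower bound on $|V(C)|$; if $C$ consists of the single vertex $x_1$, your construction is undefined. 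Your counting argument is imported from Lemma \ref{lem:modify1}, where $\mathfrak{b}(v)\subseteq V(C)$ and $p=k$ are in force; neither hypothesis is available here.

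The second and more serious gap is exactly the step you yourself flag as ``the main obstacle''. After identifying $v$ with a non-neighbor $w$, nothing forces any two vertices of $N_H(v)$ to receive the same color under the pulled-back coloring: if $\deg_G(v)+|\mathfrak{b}(v)|=k$ with all these neighbors distinct, they may carry $k$ distinct colors, and with a palette of size $\max(l,k)=k$ (the case $l\le k$) no color remains for $v$. Your sentence claiming the forbidden set at $v$ has size at most $k-1$ is asserted rather than proved, and the supporting clause about ``$w'$ absorbing $w$'s slot'' is irrelevant because $w\notin N_H(v)$ by your own choice of $w$. The paper's identification of $x_1$ with $x_2$ is designed precisely to close this gap: in the induced coloring of $H-v$, both $x_1$ and $x_2$ receive the color of the merged vertex $x$, so $N_H(v)$ uses at most $k-1$ colors and the greedy extension to $v$ succeeds within $\max(\chi(H^-),k)$ colors --- this is where the $\max$ genuinely enters. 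To repair your argument you would have to replace your choice of identified pair by (something equivalent to) the paper's, after which the block-size and $k$-basis verifications follow from Lemma \ref{lem:construct_basis_by_identifying} and Lemma \ref{lem:bridges_in_basis} much as you outline.
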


\begin{proof} Note that $v$ as stated in Lemma \ref{lem:modify2} must be contained in $A$, and recall that by Lemma \ref{lem:neighbor_of_component}, $v$ has a unique neighbor $x_1$ in $C$, which implies the existence of edge $e=\{v,x_1\}$. Moreover, as Lemma \ref{lem:modify2} assumes $\mathfrak{b}(v)\cap V(D_1) \neq \emptyset$, we can choose some $x_2\in \mathfrak{b}(v)\cap V(D_1)$.

We will subsequently construct $(G^-, A^-, B^-, \mathfrak{b}^-)$ and then show all desired properties. At first we construct the $k$-basis $G^-$. As we conclude from Lemma \ref{lem:bridges_in_basis} that $e$ is a bridge, we observe that $G-e$ has two connected components, namely $D_1$ and $G_2:=G[V(H)\setminus V(D_1)]$. Now we apply Lemma \ref{lem:construct_basis_by_identifying}
to $G-e$ with its subgraphs $D_1$ and $G_2$ as follows: We identify the vertices $x_1$ and $x_2$ as defined above to derive a new vertex $x$ and to obtain graph $G^-$. By Lemma \ref{lem:construct_basis_by_identifying} we know that each block in $G^-$ has at most $k$ vertices. Note that this already shows Property \ref{itm:modify2_2} of Lemma \ref{lem:modify2}. We denote the vertex obtained by identifying $x_1$ with $x_2$ by $x$. Figure \ref{fig:modify2} shows an example for this construction.

Now we construct a $k$-representation $\mathfrak{b}^-: A^- \rightarrow \mathcal{P}(B^-)$. Set $A^-:= A\setminus \{v\}$ (which already shows $|A^-| < |A|$ as required by Condition  \ref{itm:modify2_1} of the lemma). Moreover, set $B^-:= \left(B\setminus \{x_1, x_2\}\right) \cup \{v, x\}$. For $a\in A^-$ set $\mathfrak{b}^{-}(a):= \left(\mathfrak{b}(a)\setminus \{x_1, x_2\}\right) \cup \{x\}$, if $\{x_1, x_2\} \cap \mathfrak{b}(a) \not= \emptyset$, and $\mathfrak{b}^-(a) := \mathfrak{b}(a)$, else. 

Now it remains to show that $(G^-, A^-, B^-, \mathfrak{b}^-)$ fulfills the conditions of Definition \ref{def:basis+representation} and Property \ref{itm:modify2_3} as stated in the lemma.

For Condition \ref{itm:condition1} of Definition \ref{def:basis+representation}  we have to check that for each non-trivial block $Z$ in $G^-$ it is true that $A^- \cap V(Z) = \emptyset$. Therefore,  recall that $G^-$ was constructed using Lemma \ref{lem:construct_basis_by_identifying} by identifying some vertex of $D_1$ and some vertex of $G_2$. By Remark \ref{rem:construct_basis_by_identifying} we know that $G^-$ contains $D_1$ and $G_2$ as subgraphs. Furthermore, again by Remark \ref{rem:construct_basis_by_identifying}, we know that $Z$ is contained either in $D_1$ or in $G_2$. Hence $Z$ is contained in $G$. Using Condition \ref{itm:condition1} for $(G, A, B, \mathfrak{b})$ we conclude that $V(Z) \cap A = \emptyset$ in $G$. With $A^-\subseteq A$ and $x\in B^-$ we derive $V(Z) \cap A^- = \emptyset$ in $G^-$ as desired.

Now let $a\in A^-$. For the remaining conditions we have to show the following two inequalities:
\begin{itemize}
    \item $\deg_{G^-}(a) \geq 2$, 
    \item $\deg_{G^-}(a) + |\mathfrak{b}^-(a)| \leq k$. 
\end{itemize}

We start with considering $\deg_{G^-}(a) \geq 2$. If $a\in A^-\subseteq A$, we have $a\notin \{v,x_1, x_2\}$ (as $a\in V(G^-)$ but $x_1, x_2\not\in V(G^-)$, and as $v\in B^-$). Deleting $e$ and identifying $x_1$ with $x_2$ leaves all degrees of vertices in $V(G) \setminus \{v, x_1, x_2\}$ unchanged (note that with the possible exception of $v$ there is no vertex which is adjacent with both $x_1$ and $x_2$ in $G$, because otherwise $e$ would not be a bridge). This implies $\deg_{G^-}(a) = \deg_G(a)$. From this and Condition \ref{itm:condition2} for $(G, A, B, \mathfrak{b})$ we conclude that $\deg_{G^-}(a) \geq 2$, which completes the proof of the first inequality.

Next we want to show $\deg_{G^-}(a) + |\mathfrak{b}^-(a)| \leq k$. We now already know  $\deg_{G^-}(a) = \deg_G(a)$. Furthermore,  our definition of $\mathfrak{b}^-$ implies $|\mathfrak{b}^-(a)| \leq |\mathfrak{b}(a)|$. Together, these observations together with Condition \ref{itm:condition3} for $(G,A,B,\mathfrak{b})$ immediately imply desired inequality, so we indeed have $\deg_{G^-}(a) + |\mathfrak{b}^-(a)| \leq k$.

So now it only remains to show $\chi(H) \leq \max(\chi(H^{-}), k)$. Let $f': V(H^-) \rightarrow \{1, \ldots, l\}$ be some $l$-coloring and let $r = \max(l,k)$. We now provide a construction of an $r$-coloring $f: V(H) \rightarrow \{1, \ldots, r\}$, which will complete our proof (by applying this construction to the special case $l=\chi(H^-)$).

We start by constructing a partial coloring $\widetilde{f}: V(H-v)\rightarrow \{1, \ldots, l\}$ as follows: 
\begin{equation*} 
        \widetilde{f}(y)=
        \begin{cases} f'(y) & \text{if $y\in V(H)\setminus \{v, x_1, x_2\}$},
        \\f'(x)  & \text{if $y\in \{x_1, x_2\}$.}
        \end{cases}
\end{equation*}
We claim that $\widetilde{f}(y) \neq \widetilde{f}(z)$ for $e=\{y,z\}\in E(H-v)$. We prove this assertion by distinguishing two cases.

\begin{enumerate}
    \item First consider the case in which we have $e\in E(G)$. We remember that $G^-$ is obtained from $G-e$ by the Zykov operation of identifying $x_1$ and $x_2$. Now, if we denote by $g$ the restriction of $f'$ to $G^--v$ and with $\widehat{g}$ the function obtained from $g$ by the construction described in Lemma  \ref{lem:zykov2}, we derive  $\widetilde{f}=\widehat{g}$. From Lemma \ref{lem:zykov2} we can thus conclude that $\widetilde{f}$ is an $l$-coloring of $G-v$, because $g$ as a restriction of $f$ is an $l$-coloring of $G^--v$. Together with $v\notin \{y,z\}$, this shows $\widetilde{f}(y) \neq \widetilde{f}(z)$. 
    \item Now consider the case in which $e\not\in E(G)$, i.e., the case in which $e$ is represented by $\mathfrak{b}$. Then we can assume without loss of generality that $y\in A$ and $z\in \mathfrak{b}$. As we have assumed $v\notin \{y,z\}$, we derive the stronger statement $y\in A^-$. Using the definition of $\mathfrak{b}^-$, we conclude the existence of some $\{y,z'\}\in E(H^-)$ such that $\{y,z'\}$ is represented by $\mathfrak{b}^-$. Note that $z' = z$ if $z\not\in \{x_1, x_2\}$, else $z'=x$. In each case we have $\widetilde{f}(z) = f'(z')$. Furthermore, $\widetilde{f}(y) = f'(y)$, because $x_1, x_2\in B$ and $y\in A$. Now we use the fact that $f'$ is an $l$-coloring of $H$ to derive the desired conclusion that $\widetilde{f}(y)\neq \widetilde{f}(z)$. 
\end{enumerate}

It remains to extend $\widetilde{f}$ to an $r$-coloring $f$ of $H$. We note that $v$ has neighbors $x_1, x_2$ with $\widetilde{f}(x_1) = \widetilde{f}(x_2)$. Moreover by Condition \ref{itm:condition3}, which holds for $(G, A, B, \mathfrak{b})$ representing $H$, we have $\deg_H(v) \leq k$. This implies that $\widetilde{f}$ uses at most $k-1$ colors in the neigborhood of $v$. Hence we need at most $k$ colors for an extension of $\widetilde{f}$ to a coloring of $H$. As we have shown that $\widetilde{f}$ is an $l$-coloring, we conclude that at most $r = \max(k,l)$ colors are needed to construct a coloring of $H$. This completes the proof.
\end{proof}

\begin{figure}
    \centering
\includegraphics{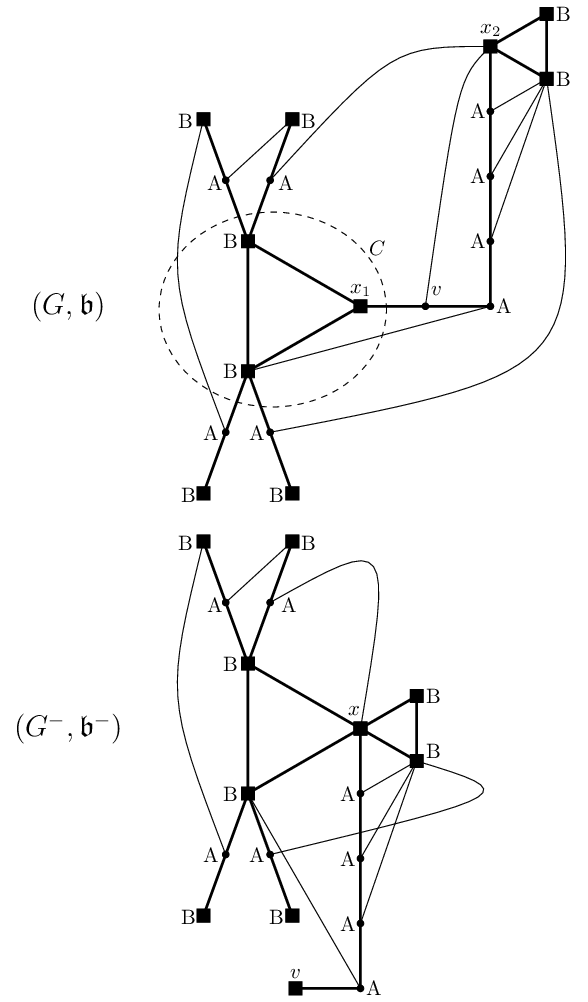}

\caption{This is an example for the construction described in the proof of Lemma \ref{lem:modify2}. We have $x_1, x_2\in B$ and $v\in A$ in $G$. $G^-$ is then obtained from $G$ by identifying $x_1$ with $x_2$ to derive the new vertex $x$. In $G^-$, vertices $x$ and $v$ are both in $B$.}
\label{fig:modify2}
\end{figure}

\subsubsection{Proof of Theorem \ref{thm:main}}
\label{sec:proof}

Before we finally prove Theorem \ref{thm:main}, note the connection of the reductions proven in the preceding subsection to said theorem: Theorem \ref{thm:main} states that for each graph $H$ which has a $k$-basis $G$ with $\chi(G)\leq k$, we also have $\chi(H)\leq k$. This implies that Theorem \ref{thm:main} can also be considered as a reduction -- in fact, it reduces the problem of coloring a graph using only $k$ colors to the problem of coloring its $k$-basis if it exists. In particular, if you want to find a coloring of a strongly tree-based-network, this is a powerful tool as trees can generally be colored using only two colors. Hence, the problem of coloring strongly tree-based networks can be reduced to the easily solvable problem of coloring trees.

Let us now turn our attention to the proof of Theorem \ref{thm:main}. We have already seen that, given a graph $H$ which is represented by a $k$-basis $G$ and $k$-representation $\mathfrak{b}: A\rightarrow \mathcal{P}(B)$, under certain conditions we can use Lemma \ref{lem:modify1} or Lemma \ref{lem:modify2} to construct a graph $H^-$ which is represented by a $k$-basis $G^-$ and $k$-representation $\mathfrak{b}^-: A^-\rightarrow \mathcal{P}(B^-)$ such that $|A^-| < |A|$. Moreover if $\chi(H^-) \leq k$, then $\chi(H) \leq k$. We will use these insights in order to derive a proof of Theorem \ref{thm:main}.

\begin{proof}[Proof of Theorem \ref{thm:main}]

 Let $H$ be a graph with $k$-basis $G$ and $k$-representation $\mathfrak{b}: A\rightarrow \mathcal{P}(B)$ such that $\chi(G) \leq k$. We now prove the desired result by induction on $|A|$. For $A = \emptyset$ the statement is clearly true, because in this case $G = H$, and by assumption we have  $\chi(G) \leq k$. 

So now we assume $A \neq \emptyset$. Moreover, without loss of generality we can assume by Proposition \ref{prop:shrinking} that each connected component of $G[B]$ is a clique of size at most $k$.

We want to find a connected component of $G[B]$ to which Lemma \ref{lem:modify1} or Lemma \ref{lem:modify2} can be applied. In the following, we will call a connected component $C$ of $G[B]$ a \emph{leaf component} if $|N_G(C)| \leq 1$. We now argue that for each connected component $C'$ of $G$ there is at least one leaf component $C$ of $G[B]$ such that $C$ is a subgraph of $C'$. To see this, contract every connected component $D$ of $G[B]$ to a vertex $v_D$. Using Lemma \ref{lem:neighbor_of_component}, we can conclude that no edge from $E_G(A, B)$ is deleted during these contractions. Furthermore, by Lemma \ref{lem:bridges_in_basis}, each edge $e\in E(G)$ which is not contained in $G[B]$ is a bridge. Hence, through the contractions we get a forest $F$ with a natural bijection between $E(F)$ and $E(G) \setminus E(G[B])$. Moreover, each connected component of $F$ resulted from some connected component $C'$ of $G$ by contracting all edges in $E(C')\cap E(G[B])$, and every vertex in $F$ is either in $A$ or was obtained by contraction from some connected component of $G[B]$. Furthermore,  $A\subseteq V(F)$ and $\deg_F(a) = \deg_G(a)\geq 2$ for every $a\in A$. Therefore, if $x$ is a leaf in $F$, then $x$ was obtained by contraction from some connected component $C$ of $G[B]$. But then $C$ is a leaf component. This shows our claim. 
 Additionally, we note that in the case that $C$ is contained in some connected component $C''$ of $G$ with $A\cap V(C'') \neq \emptyset$, we have $|N_G(C)| = 1$. So we can assume from now on that each connected component of $G$ contains at least one leaf component as a subgraph.

Now, we consider two cases. Let us first consider the case that $G$ is connected. In this case, let $C$ be a leaf component of $G[B]$, then. As we have already excluded the case $A = \emptyset$, we can conclude from the previous remarks that $|N_G(C)| = 1$. Hence, $N_G(C) = \{v\}$ for some $v\in A$. 

In this situation there is only one connected component $D_1$ of $G[V(G) \setminus V(C)]$, so we conclude that there are only two possibilities: 
\begin{enumerate}
    \item If we have $\mathfrak{b}(v) \subseteq V(C)$, then Lemma \ref{lem:modify1} can be applied as each connected component of $G[B]$ is a clique with at most $k$ elements.
    
    \item If $\mathfrak{b}(v)\cap V(D_1) \not= \emptyset$, then by Condition \ref{itm:condition1} of Definition \ref{def:basis+representation} we know that each non-trivial block in $G$ is contained in a connected component of $G[B]$. As by assumption each connected component of $G[B]$ has at most $k$ vertices, this shows that the conditions of Lemma \ref{lem:modify2} are fulfilled. 
\end{enumerate}    
In each case by application of Lemma \ref{lem:modify1} or Lemma \ref{lem:modify2}, respectively, we obtain a graph $H^-$ with $k$-basis $G^-$ and $k$-representation $\mathfrak{b}^-: A^- \rightarrow \mathcal{P}(B^-)$. Moreover, Lemma \ref{lem:modify1} and Lemma \ref{lem:modify2} tell us that in each case each block in $G^-$ has at most $k$ elements and $|A^-| < |A|$. By Corollary \ref{cor:blockcolors} we conclude that $\chi(G^-)\leq k$. As $|A^-| < |A|$ we can apply the induction hypothesis and conclude $\chi(H^-) \leq k$. Finally, from Lemma \ref{lem:modify1} and Lemma \ref{lem:modify2} we know that $\chi(H^-)\leq k$ implies $\chi(H)\leq k$, which completes the proof in this case.

It remains to consider the case in which $G$ has more than one connected component. We show how this case can be reduced to the previous case. So let us assume that $G$ has the connected components $C_1, \ldots, C_l$ with $l\geq 2$. We have already seen that each connected component $C_i$ contains at least one leaf component. In particular, this implies that $V(C_i) \cap B \neq \emptyset$, so we can choose some $b_i\in V(C_i)\cap B$ for each $i\in \{1, \ldots, l\}$. 

In order to reduce this case to the previous case we construct $(G', A', B', \mathfrak{b}')$ in the following way:
\begin{itemize}
    \item $G'$ is the graph with $V(G') = V(G)$ and $E(G') = E(G) \cup \{\{b_i, b_{i+1}\}: 1\leq i\leq k-1\}$.
    \item $A' = A$, $B' = B$ and $\mathfrak{b}'(a) = \mathfrak{b}(a)$ for all $a\in A$.
\end{itemize}
Clearly, $(G', A', B', \mathfrak{b}')$ represents some graph $H'$ with $V(H') = V(H)$ and $E(H') = E(H) \cup \{\{b_i, b_{i+1}\}: 1\leq i\leq k-1\}$. We claim that $G'$ is a $k$-basis and $\mathfrak{b}'$ is a $k$-representation of $H'$. 

All additional edges of $E(G') \setminus E(G)$ are elements of $\binom{B}{2}$. Hence, $\deg_{G}(a) = \deg_{G'}(a)$ for all $a\in A$. Moreover, $\mathfrak{b}(a) = \mathfrak{b}'(a)$ for all $a\in A$. This suffices to show Conditions \ref{itm:condition2} and \ref{itm:condition3} of Definition \ref{def:basis+representation}. In order to prove Condition \ref{itm:condition1}, we observe that each $\{b_i, b_{i+1}\}\in E(G')\setminus E(G)$ is a bridge: Indeed, $G'\setminus \{b_i, b_{i+1}\}$ is decomposed into the two connected components $\bigcup_{j=1}^i C_j$ and $\bigcup_{j=i+1}^l C_j$. Now, consider some non-trivial block $Z$ in $G'$. As all elements of $E(G')\setminus E(G)$ are bridges, we conclude that $E(Z) \cap (E(G')\setminus E(G)) = \emptyset$, and hence $Z$ is contained in $G$. Then, we conclude from Condition \ref{itm:condition1} for $(G, A, B, \mathfrak{b})$ that $A\cap V(Z) = A'\cap V(Z) =\emptyset$.

Now we have constructed a $k$-basis $G'$ of $H'$ such that $G'$ is connected. As $H$ is a subgraph of $H'$, we know  that $\chi(H) \leq \chi(H')$. As the additional edges in $E(G')\setminus E(G)$ are bridges, it is easy to conclude that $\chi(G') = \chi(G)$. 
For a successful reduction to the first case, it remains to apply Proposition \ref{prop:shrinking} to $H'$ in order to ensure that each connected component of $G'[B']$ is a clique with at most $k$ vertices. Note that Proposition \ref{prop:shrinking} guarantees that $G'$ is still connected after replacing all connected components of $G'[B']$ with cliques. This implies that we have successfully reduced the second case to the previous case, which  finally completes the proof.
\end{proof}

We already have mentioned that Theorem \ref{thm:main} together with Proposition \ref{prop:treebased_network} implies Corollary \ref{cor:treebased_network}. Thus, this solves Question \ref{qu:coloring}, which was the main motivation of our manuscript.

\section{Discussion and outlook}\label{sec:disc}

Trees have chromatic number $2$ and we have shown that strongly tree-based networks have chromatic number at most $3$. There are many other graph properties which could be investigated for the comparison of trees and strongly tree-based networks or other restricted classes of tree-based networks. The possibilities are endless here. We will give just two simple examples.

For example, if we denote by $\omega(G)$ the size of a largest clique in a graph $G$,  it is easy to see that $\omega(T) = 2$ for each tree $T$ with at least one edge. As $\omega(G) \leq \chi(G)$ is true for each graph $G$ it is implied by Theorem \ref{thm:main} that $\omega(N) \leq 3$ for each strongly tree-based network $N$. On the other hand, it has been already noted by Hendriksen that general tree-based networks can have an arbitrarily large clique number (cf. Proof of Theorem 4.1. in \cite{Hendriksen2018}). So the clique number exhibits a similar pattern as the chromatic number: Both graph properties can be arbitrarily large for general tree-based networks and have a constant as an upper bound for strongly tree-based networks.

We are curious if this pattern can be observed for more graph properties. Consider for example the \emph{list chromatic number}: Given a graph $G = (V,E)$ and a non-negative integer we associate with each vertex $v$ a list $L(v)$ of $k$ colors. Then $G$ is called \emph{$k$-list-colorable} if for each such association there is some function $f: V(G) \rightarrow L(v)$ with $f(v) \neq f(w)$ for all $\{v,w\}\in E(G)$. The list chromatic number of $G$ which is denoted by $ch(G)$ is then the smallest number $k$ for which $G$ is $k$-list-colorable. It is well-known that $ch(G) \geq \chi(G)$ \cite{Diestel2017} and that for each natural number $k$ there is a complete bipartite graph $G$ with $ch(G) > k$ \cite{Gravier1996}. Hence it is clear that the list chromatic number can be arbitrarily large for general tree-based networks. This immediately leads to the following question, which we leave open for future research.

\begin{problem} \label{pro:list_coloring}
    Is there a constant $C$ such that $ch(N) < C$ for each strongly tree-based network?
\end{problem}

Moreover, there are several coloring problems in graph theory which could be attacked by our approach. A very comprehensive overview of coloring problems can be found in \cite{Jensen1994}. Many of these problems should not be too hard to solve in the case of trees or forests. Hence, we suggest the following approach: 
\begin{enumerate}
\item Check whether the problem is easily solvable in the case of trees or forests (or another special case). 
\item Consider some generalization of the simple case which is specified in a way which is similar to Definition \ref{def:basis+representation}, i.e., consider graphs which are obtained from graphs belonging to the easiest case by adding edges in an appropriate way. 
\item Use reductions similar to the ones described in Lemma \ref{lem:shrinking}, Lemma \ref{lem:modify1} and Lemma \ref{lem:modify2} which are based on Zykov operations. This way, you can reduce the problem of coloring graphs of the general class to the problem of coloring graphs of the simple cases.
\end{enumerate}

Note that it is unlikely that the implementation of this approach will use concepts identical to the ones which were used throughout our paper. Definition \ref{def:basis+representation} has been developed especially in order to solve Question \ref{qu:coloring}. For other problems, concepts tailored to the specific question might be needed. However, we expect that our approach can be adjusted and lead to progress concerning open  coloring problems. To the best of our knowledge, there is no other example in the graph theoretical literature in which this approach has been implemented. In fact, Zykov operations have so far only been used in practical implementations for computing the chromatic number of single instances, for example in connection with brand-and-bound methods \cite{Mehrotra1996,Held2012}, but not in finding explicit values for the chromatic number of entire classes of graphs and thereby solving questions of theoretical interest.

\color{black}
Last, we are confident that the investigation of problems like the one presented in this manuscript will deepen the links between classical graph theory and tree-based networks and thus lead to a more profound understanding of the latter.

\section*{Statements and declarations}

\subsection*{Acknowledgements} The authors wish to thank three anonymous reviewers for helpful comments.

\subsection*{Competing interests} The authors herewith certify that they have no affiliations with or involvement in any organization or entity with any financial (such as honoraria; educational grants; participation in speakers’ bureaus; membership, employment, consultancies, stock ownership, or other equity interest; and expert testimony or patent-licensing arrangements) or non-financial (such as personal or professional relationships, affiliations, knowledge or beliefs) interest in the subject matter discussed in this manuscript.

\subsection*{Data availability statement} 
Data sharing is not applicable to this article as no new data were created or analyzed in this study.

\bibliography{references}

\end{document}